\numberwithin{equation}{section}
\def\be{\begin{equation}}
\def\ee{\end{equation}}
\def\bea{\begin{eqnarray}}
\def\eea{\end{eqnarray}}
\newcommand{\Tr}{{\rm Tr}}
\renewcommand{\to}{\rightarrow}
\def\nb{\nonumber}
\def\p{\partial}
\def\a{\alpha}
\def\b{\beta}
\def\G{\Gamma}
\def\r{\rho}
\def\s{\sigma}
\def\th{\theta}
\def\mc{\mathcal}
\def\und{\underline}
\def\mc{\mathcal}
\newcommand{\sign}{\text{sign}}
\newtheorem{theorem}{Theorem}[section]
\newtheorem{corollary}{Corollary}[theorem]
\newtheorem{lemma}[theorem]{Lemma}
\newcommand{\rmd}{\,\mathrm{d}}
\def\({\left (}
\def\){\right )}
\let\benn\[
\let\eenn\]
\def\[{\left [}
\def\]{\right ]}
\let\oldlgraf\{ 
\renewcommand{\{}{\left \oldlgraf}
\let\oldrgraf\}
\renewcommand{\}}{\right \oldrgraf}
\title{\bfseries\LARGE Hagedorn temperature in holography: \protect\\ 
world-sheet and effective approaches}
\author[a]{Francesco Bigazzi}
\author[a,b]{Tommaso Canneti}
\author[a,b]{Federico Castellani}
\author[a,b]{\\Aldo L.\ Cotrone}            
\author[c,d]{Wolfgang M\"uck}
\affil[a]{Istituto Nazionale di Fisica Nucleare, Sezione di Firenze %
\protect\\ Via G. Sansone 1; 50019 Sesto Fiorentino (Firenze), Italy}
\affil[b]{Dipartimento di Fisica e Astronomia, Universit\`a di Firenze %
\protect\\ Via G. Sansone 1; 50019 Sesto Fiorentino (Firenze), Italy}
\affil[c]{Dipartimento di Fisica ``Ettore Pancini'', Universit\`a degli Studi di Napoli ``Federico II''%
\protect\\ Via Cintia; 80126 Napoli, Italy}
\affil[d]{Istituto Nazionale di Fisica Nucleare, Sezione di Napoli %
\protect\\ Via Cintia; 80126 Napoli, Italy}
\date{\small bigazzi@fi.infn.it, canneti@fi.infn.it, federico.castellani@unifi.it, \\ cotrone@fi.infn.it, mueck@na.infn.it}
\begin{document}
\maketitle

\begin{abstract}
We provide general results on the Hagedorn temperature of planar, strongly coupled confining gauge theories holographically dual to type II superstring models on curved backgrounds with Ramond-Ramond and Kalb-Ramond fluxes and non-trivial dilaton. For exact backgrounds the Hagedorn temperature is determined up to next-to-next-to-next-to-leading order (NNNLO) in an expansion in $\alpha'$; in all the other cases the results can be safely trusted up to NNLO. To reach these goals we exploit two complementary approaches. On the one hand, we perform an extrapolation to the Hagedorn regime of world-sheet results obtained from the semiclassical quantization of string configurations winding around the compact Euclidean time direction. En passant, we provide a detailed derivation of the fermionic part of the world-sheet spectrum, which is hard to find in the literature. On the other hand, we perturbatively solve the equations of motion for the thermal scalar field corresponding to the lightest mode of the winding string, which in flat space becomes tachyonic above the Hagedorn temperature. The interplay between different approaches is surely convenient, but we provide insights about a possible derivation of the whole NNLO correction to the Hagedorn temperature from a pure world-sheet perspective; furthermore, we determine the effective mass of the thermal scalar from the world-sheet in full generality. 
\end{abstract}

%
%
%
%
%
%

\newpage
\tableofcontents


\section{Introduction}
Solvable models of strings display a density of states which exponentially grows with the energy in the UV, regardless of whether the target space is flat or not \cite{Huang:1970iq, Sundborg:1984uk, Tye:1985jv, Bowick:1985az, Matsuo:1986es, Russo:2002rq,Canneti:2024iyn}. This implies that the one-loop partition function of the theory, at finite temperature $T$, diverges when $T>T_H$, where $T_H$ is the so-called Hagedorn temperature. Correspondingly, the lowest string state winding once around the compact time direction becomes tachyonic \cite{Atick:1988si}. This state is a complex scalar field, often referred to as the ``thermal scalar'', in the target space. When $T$ approaches $T_H$ its mass goes to zero and the field can be accounted for in the low energy string effective action. In this framework, the requirement that the linearized equation of motion for the thermal scalar admits a normalizable solution, gives the Hagedorn temperature.

Confining gauge theories like $SU(N)$ Yang-Mills display a Hagedorn behavior too. In first order confinement-deconfinement phase transitions, the Hagedorn temperature can be understood as the limiting temperature above which the (metastable) confining branch ceases to exist. Lattice results (see e.g. \cite{Bringoltz:2005xx,Caselle:2015tza}) in fact, indicate that in $SU(N\ge3)$ Yang-Mills theories $T_H$ is (slightly) larger than the critical temperature for deconfinement $T_c$.

Classes of planar confining gauge theories at large 't Hooft coupling admit a dual holographic description in terms of superstring theories on curved backgrounds with fluxes, so that, in principle, we can obtain the related $T_H$ from string theory computations. In practice, the strategy consists on expanding the world-sheet sigma model to quadratic order in quantum fluctuations around a certain reference winding string configuration. The latter is chosen in such a way to probe the asymptotic nearly-flat region of the backgrounds, which holographically corresponds to the deep IR regime of the dual gauge theories. In this limit, one can get the string spectrum and thus, focusing on the ground state, the Hagedorn temperature. 
To leading order (LO) in the $\alpha'$ expansion, this analysis gives $T_H = \sqrt{T_s/(4\pi)}$, where $T_s$ is the confining string tension \cite{Bigazzi:2022gal}. This result agrees with the value of the Hagedorn temperature inferred from the effective action of the thermal scalar field, along the lines of the Horowitz-Polchinski construction \cite{Horowitz:1997jc}, and reduces (for $T_s = (2\pi\alpha')^{-1}$) to the known value of the superstring Hagedorn temperature in flat spacetime. Moreover, as it was noticed in \cite{Bigazzi:2022gal} focusing on an explicit example, $T_H$ turns out to be parametrically larger than $T_c$ at strong coupling.

The above results have been extended to next-to-leading order (NLO) in the strong 't Hooft coupling expansion (which is the $\alpha'$ expansion in string theory) in two complementary ways. From one side, by solving, to first order in a suitable perturbative regime, the equation of motion for the thermal scalar field on a class of holographic confining backgrounds and on global $AdS_{d+1}$ spacetimes (dual to $d$-dimensional conformal field theories on $S^{d-1}$ spheres) \cite{Urbach:2022xzw,Urbach:2023npi}. From the other side, focusing on a specific example, by computing the NLO correction to the mass of the winding string ground state on the world-sheet \cite{Bigazzi:2023oqm}. Remarkably, the two computations turned out to perfectly agree. The reason is that the NLO correction to the Hagedorn temperature is due to the contribution of the zero modes of the (massive) bosonic world-sheet fields, which are in turn accounted for by the effective approach. 

Very recently, the above analysis has been extended up to NNNLO in global $AdS_{d+1}$ setups \cite{Ekhammar:2023glu,Bigazzi:2023hxt,Ekhammar:2023cuj}. In the $AdS_5$ and $AdS_4$ cases, respectively dual to ${\cal N}=4$ SYM on $S^3$ and to the ABJM theory on $S^2$, the holographic results on the Hagedorn temperature to NNNLO remarkably agree with those obtained, numerically, by means of Quantum Spectral Curve (QCS) integrability methods \cite{Harmark:2017yrv,Harmark:2018red,Harmark:2021qma,Ekhammar:2023glu,Ekhammar:2023cuj}. These are arguably among the most impressive precision tests of the holographic correspondence at finite temperature. The interplay of effective and world-sheet methods in this context revealed to be a very crucial ingredient.

At NNLO in the strong coupling expansion, in fact, the Hagedorn temperature computed using the thermal scalar effective action turned out to miss a term, containing a $d \log2$ factor, which is instead precisely accounted for by the world-sheet approach, being related to the quadratic contribution of the non-zero modes to the ground state mass \cite{Bigazzi:2023hxt}. Specularly, the world-sheet computations missed the NNLO term inferred from the effective thermal scalar theory, the reason being that such a term could only be accounted for by quartic interactions involving the zero modes of the world-sheet scalars. The striking agreement (in the above mentioned $AdS$ cases) between the further NNNLO correction, computed using just the effective approach, and QSC results, suggests that no world-sheet contribution might arise at that order.


The analysis carried out in \cite{Bigazzi:2023hxt} was structured in such a way to be extended to the entire class of holographic backgrounds dual to confining gauge theories. However, the complete scrutiny of the general cases, involving not only Ramond-Ramond fluxes but also non trivial Kalb-Ramond and dilaton fields, was missing. From one side, the effective approach was pushed beyond NLO only in the $AdS_{d+1}$ cases. From the other, on the world-sheet, it was (reasonably) assumed - skipping the details of an explicit check - that, just as in the standard semiclassical limit, a crucial conformal anomaly cancellation condition, allowing for the winding ground state zero-point energy to stay finite, was holding also in the Hagedorn regime. In cases where the dilaton field does not contribute to it, this condition translates into the equality between the sum of the squared masses of the bosonic fields on the world-sheet and that of the fermionic ones. This is what we will refer to as the mass-matching condition. Once this condition is satisfied, the above mentioned NNLO $\log2$ term, which arises from the expansion of the ground state zero-point energy, is automatically obtained.  

However, as already outlined in \cite{Bigazzi:2022gal,Bigazzi:2023oqm}, at the Hagedorn point the semiclassical string picture breaks down, essentially because the length of the string along the time circle becomes too small. This in turn requires a careful extrapolation to the Hagedorn regime of the results obtained by means of semiclassical tools. This extrapolation consistently reproduces the NLO results \cite{Urbach:2022xzw,Urbach:2023npi} obtained by means of the effective approach in a class of examples. Moreover it allows for the already mentioned spectacular agreement between $AdS$ and QSC results to NNNLO. This makes us confident that such an extrapolation should work in full generality. 

In the present work we make several progresses along these lines. On the world-sheet side, we clarify how the extrapolation to the Hagedorn configuration has to be made in the most general case, allowing for the mass-matching condition to be safely guaranteed. Working with general confining backgrounds, we will show how the NNLO $\log 2$ term, which is proportional to the sum of the squared bosonic masses once the mass-matching condition is fulfilled, can in turn be expressed, in covariant form, in terms of the pullback on the Hagedorn configuration of a combination of the background Ricci tensor and the Kalb-Ramond field strength. This in turn allows us to deduce, from the world-sheet, a general expression for the effective mass of the thermal scalar field, which precisely agrees with that recently proposed in \cite{Harmark:2024ioq}. Moreover, we will extend the effective approach to NNNLO for generic holographic confining backgrounds with Ramond-Ramond fluxes and non trivial Kalb-Ramond and dilaton fields. The interplay of world-sheet and effective methods considered in this paper allows us to deduce a general formula for the Hagedorn temperature, to NNNLO in the strong coupling expansion. At each order, the expansion coefficients are given in terms of covariant geometrical objects of the dual holographic backgrounds. We will discuss several applications of our results on a class of relevant examples. In turn, for the global $AdS_{d+1}$ cases, we will extend the world-sheet analysis to include quartic interaction terms involving the bosonic zero modes, getting the full expression for $T_H$ at NNLO using just world-sheet computations.

This work is organized as follows. In section \ref{sec:results}, for readers' convenience, we will collect most of our general results on the Hagedorn temperature. In section \ref{sec:quadmodel}, as a review, we will consider type II Green-Schwarz (GS) superstrings on generic supergravity backgrounds, deriving the spectrum of quadratic world-sheet fluctuations around a classical string configuration. In particular, we will recover general expressions for the traces of the bosonic and fermionic squared mass matrices, expressed  in terms of the pullback of certain covariant combinations of the background fields. In section \ref{sec:extrapol} we will specialize the above results to Euclidean-time-winding strings on holographic confining backgrounds and discuss how the extrapolation to the Hagedorn regime is carried out in full generality. We will also see how
the interplay of world-sheet and effective methods allows to determine $T_H$ in a perturbative expansion at strong coupling. We will also notice how the world-sheet results in the Hagedorn regime naturally provide the general expression (\ref{geneffmass}) for the effective mass of the thermal scalar field. In section \ref{sec:effnlo} we will develop the effective approach for general confining backgrounds and present our perturbative solution of the thermal scalar equation of motion. In section \ref{sec:examples} we apply our results to several examples of holographic confining theories including a class of recently discovered models; in the global AdS cases, we also perform the computation of the whole NNLO correction to $T_H$ coming from the quartic world-sheet sigma-model. Section \ref{sec:conclu} contains our concluding remarks. Other useful material is contained in the appendices. In appendix \ref{app:notations} we collect our notations and conventions. In appendix \ref{app:sugraeom} we write down the supergravity equations of motion in string frame. In appendix \ref{app:useful} we provide useful formulae for Gamma matrix manipulations. In appendix \ref{app:trace} we discuss the spectrum of fermionic world-sheet fields in a specific example.

\section{Overview of results for the Hagedorn temperature}
\label{sec:results} 
In this section we collect results of this paper about the Hagedorn temperature of confining theories (and CFTs on spheres) with supergravity duals.
The backgrounds include the metric $g_{\mu\nu}$, RR-fields, dilaton $\phi$ and the NSNS two-form $B_{(2)}$ with field strength $H_3$.
We consider a metric, with compact Euclidean time t, whose radial holographic direction $\rho$ attains the value $\rho_0$ at the ``tip'' of the geometry, which corresponds to the IR regime of the dual field theory. 
For simplicity, let us say that $\rho_0$ is the minimal value of $\rho$.
In the known cases, the geometry has a shrinking cycle of dimension $d-1$ at $\rho_0$.
If there is no such cycle, then $d=1$.

Then, the square of the inverse Hagedorn temperature, $\beta_H^2=1/T_H^2$, can be expressed as 
\be 
		\beta_H^2 = \frac{4\pi}{T_s} \( 1 + c_0 \, \alpha'^{\frac12} + \tilde c_1 \, \alpha'\ + \tilde c_2 \, \alpha'^{\frac32} + ...\) 
\,,
\ee
where $T_s$ is the ``string tension'' defined by $2\pi \alpha'T_s =g_{tt}(\rho_0)$.

The coefficients $c_0,\tilde c_i$ are written in terms of derivatives of the following functions
\bea
&& G(\rho) \equiv \hat V^{\mu} \hat V^{\nu}\left(g_{\mu\nu} + B_{\mu\rho}B_{\nu\eta}g^{\rho\eta} \right)\,, \qquad g^{\rho\rho}\sqrt{|g|}\,(\rho)\,, \nonumber\\
&&\hat R(\rho) \equiv \hat V^{\mu} \hat V^{\nu}\left(R_{\mu\nu} -\frac{1}{2} |H_{(3)}|_{\mu\nu}^2  \right)\,, \qquad   \phi(\rho)\,.
\eea  
In these expressions $\hat V$ is the vector tangent to the compactified thermal direction, normalized as $\hat V^2=1$ at the IR point of the metric $\rho_0$, and
\be
|H_{(3)}|_{\mu\nu}^2 = \frac12 H_{\mu \alpha \beta} H_{\nu}{}^{\alpha \beta} \, .
\ee
Moreover, let  $W$ be the normalized vector with non-vanishing component only along the direction $\rho$
\be 
W= W^{\rho} \partial_{\rho} = \frac{1}{\sqrt{g_{\rho\rho}}} \partial_{\rho} \,.
\ee
Then, one can define the following coefficients
\bea 
\tilde G^{(0)}_n & = & \frac{1}{n!} W^{\mu_1}...W^{\mu_n} \nabla_{\mu_1}...\nabla_{\mu_n} G\, |_{\rho_0}\,, \\
\hat g_m &=& \frac{1}{m!} W^{\mu_1}...W^{\mu_m} \nabla_{\mu_1}...\nabla_{\mu_m} \left(g^{\rho\rho}\sqrt{g} \right) \, |_{\rho_0}\,, \\
\phi_n &=& \frac{1}{n!} W^{\mu_1}...W^{\mu_n} \nabla_{\mu_1}...\nabla_{\mu_n} \phi\, |_{\rho_0}\,, \\
\hat R_0 & = & \hat R |_{\rho_0}\,.
\eea
Since there is a vanishing $(d-1)$-cycle in the geometry, the first  $d-1$ derivatives $\hat g_m$ are zero, beginning with $m=0$; let us call $C$ the value of the (non-zero) $d-$th derivative, $C=\hat g_d $.
Then calling $n=m-d$, for $n\geq 0$ one can define the coefficients
\be 
\tilde g_{n} =  \frac{1}{C\,(n+d)!} W^{\mu_1}...W^{\mu_{n+d}} \nabla_{\mu_1}...\nabla_{\mu_{n+d}} \left(g^{\rho\rho}\sqrt{g} \right)\, |_{\rho_0}\,.
\ee	

Finally, the coefficients $c_0, \tilde c_i$ have the following form
\begin{align}
& c_0 =  - \frac{d}{\sqrt{2}} \(\tilde  G_2^{(0)}\)^{\hspace{-2pt}\frac12} \,,\\
& \tilde c_1  = - \frac{d(d+2)}{8} \, \frac{\tilde G^{(0)}_4}{ \tilde G^{(0)}_2} - \frac{d}{2} (\tilde g_2 - 2 \phi_2) + \frac{d^2}{4} \tilde G^{(0)}_2 -  2 \hat R_0 \log{2} \,,\\
& \tilde c_2 = - \frac{d(d+2)(d+4)}{16\sqrt{2}} \, \frac{ \tilde G^{(0)}_6}{\(\tilde G^{(0)}_2\)^{\hspace{-2pt}\frac32}} 
+ \frac{d(10+9d+2d^2)}{32\sqrt{2}} \, \frac{ \(\tilde G^{(0)}_4 \)^{\hspace{-2pt}2}}{\(\tilde G^{(0)}_2\)^{\hspace{-2pt}\frac52}} - \frac{d(d+2)}{2\sqrt{2}} \, \frac{ \tilde G^{(0)}_4\,\phi_2}{\(\tilde G^{(0)}_2\)^{\hspace{-2pt}\frac32}} + \nonumber \\
& \qquad + \frac{d^2(d+2)}{16\sqrt{2}} \, \frac{ \tilde G^{(0)}_4}{\(\tilde G^{(0)}_2\)^{\hspace{-2pt}\frac12}} + \frac{d}{\sqrt{2}} \, \frac{\phi_2^2}{\(\tilde G^{(0)}_2\)^{\hspace{-2pt}\frac12}} + \frac{d(d+1)}{4\sqrt{2}} \, \frac{ \tilde g^2_2}{\(\tilde G^{(0)}_2\)^{\hspace{-2pt}\frac12}} + \frac{d^2}{4\sqrt{2}} \,  \tilde g_2\,\(\tilde G^{(0)}_2\)^{\hspace{-2pt}\frac12} + \nonumber \\
& \qquad + \frac{d(d+2)}{2\sqrt{2}} \, \frac{ (\phi_4 -\tilde g_4)}{\(\tilde G^{(0)}_2\)^{\hspace{-2pt}\frac12}} - \frac{d^3}{16\sqrt{2}} \(\tilde G^{(0)}_2\)^{\hspace{-2pt}\frac32} + \frac{3}{\sqrt{2}}  \, d \, \(\tilde G^{(0)}_2\)^{\hspace{-2pt}\frac12} \hat R_0 \log{2}  \,.
\end{align}
Notice that the above expression for $\tilde c_2$ provides the full NNNLO correction to $\beta_H^2$ only for exact backgrounds.

The RR-supported pp-wave backgrounds, where there is not a unique holographic direction $\rho$, can be included in this formalism by a trivial generalization, as will be briefly discussed in section \ref{sec:ppwaves}. 

\section{World-sheet perspective}
\label{sec:quadmodel}

Before considering the extrapolation to the Hagedorn configuration, in this section we will review some general world-sheet results obtained expanding the string sigma model to quadratic order around a classical (bosonic) configuration $X^p(\tau,\sigma)$, (where $p=0,\ldots, 9$ are the target space indices) on a generic supergravity background. 
Besides the 10-d metric $g_{pq}$, the field content of the latter will in general include a dilaton $\phi$, Ramond-Ramond fields $G_{(n)}$ and the Kalb-Ramond field $B_{(2)}$.

\subsection{Type-II superstrings in curved backgrounds}
  
In the democratic formalism \cite{Bergshoeff:2001pv}, the dynamics of the supergravity fields is encoded in the pseudo action\footnote{The ``pseudo" indicates that this action does not describe the dynamics of the physical degrees of freedom alone, but also includes kinetic terms for the electromagnetic duals of the RR forms (see \eqref{dualityconstraint}). It is precisely this redundancy that allows us to write a simpler background Lagrangian without Chern-Simons interactions.} 
\be
S_{II} = - \frac{1}{(2\pi)^7 \alpha'^4} \int \[e^{-2\phi}\( R \, \rmd\text{vol} + 4 \, \rmd\phi \wedge *\rmd\phi - \frac12 H_{(3)} \wedge *H_{(3)}\) + \frac14 \sum_n G_{(n)} \wedge *G_{(n)}\] \, .
\ee
Here, $R$ is the Ricci scalar, $H_{(3)}=\rmd B_{(2)}$ and  the sum runs over $n=2,4,6,8$ in the IIA case\footnote{Notice that we are excluding the massive theory which features an additional parameter $G_{(0)}$.} and $n=1,3,5,7,9$ in the IIB one. The RR fields are given by
\begin{subequations}
\begin{align}
&G_{(n)}=\rmd C_{(n-1)}\, , \quad \text{for } n<3 \, ,\\
&G_{(n)}=\rmd C_{(n-1)} - H_{(3)} \wedge C_{(n-3)}\, , \quad \text{for } n\ge3 \, ,
\end{align}
\end{subequations}
being $C_{n}$ the standard RR potentials. Crucially, the equations of motion must be equipped with the duality constraints 
\be
\label{dualityconstraint}
* G_{(n)} = (-1)^{\frac{n(n-1)}{2}} G_{(10-n)} \, , \quad \forall \, n = 1,...,9 \, ,
\ee
in order to avoid a double counting of the degrees of freedom. 

The explicit form of the GS superstring Polyakov action (to quadratic order in the fermionic fields $\psi$) on a generic supergravity background is given by \cite{Cvetic:1999zs}\footnote{See also \cite{Grisaru:1985fv} for a seminal work in superspace formalism, \cite{Martucci:2003gc} for the GS action in the Nambu-Goto formalism, and \cite{Wulff:2013kga} for an extension to fourth-order in the fermionic fields.}
\be
\label{actionv2}
S= -\frac{1}{4\pi\alpha'} \hspace{-2pt} \int \hspace{-4pt} d\tau d\sigma  \, \( \sqrt{-h} \,h^{\alpha\beta} \, \mathcal G_{\alpha\beta} - \varepsilon^{\alpha\beta} \, \mathcal B_{\alpha\beta}\) \, ,
\ee
where $h^{\alpha\beta}$ is the auxiliary (intrinsic) metric on the world-sheet, the Levi-Civita symbol is such that $\varepsilon^{\tau\sigma}=+1$ and 
\begin{subequations}
\begin{align}
&\mathcal G_{\alpha\beta} = g_{\alpha\beta} - i \, \bar \psi \, \Gamma_{(\alpha} D_{\beta)} \, \psi \, , \\
&\mathcal B_{\alpha\beta} = B_{\alpha\beta} - i \, \bar \psi \, \Gamma^* \, \Gamma_{[\alpha}D_{\beta]} \, \psi \, ,
\end{align}
\end{subequations}
where $g_{\alpha\beta}=\partial_{\alpha}x^p\partial_{\beta}x^q g_{pq}$ is the induced metric and $B_{\alpha\beta}$ is the pullback of the NSNS two-form potential $B_{pq}$. In the democratic formalism, the covariant derivative acting on the fermionic fields $\psi$ is given by\footnote{This can be also deduced as the pullback of the covariant derivative appearing in the gravitino supersymmetry variation \cite{Bergshoeff:2001pv}.} 
\be
\label{covderivative}
D_\beta = \partial_\beta  + \frac14 \, \cancel{\omega}_{\beta} + \frac18 \, \mathcal S \, \cancel{H}_{\beta} + \frac{1}{16} \, e^\phi  \, \sum_{n} \, \frac{1}{n!} \, \mathcal S_n \, \, \cancel{G}_{(n)} \, \Gamma_\beta \,  ,
\ee
where, in the IIA case,
\begin{subequations}
\begin{align}
&\Gamma^* = \mathcal S = \Gamma^{(11)} \, , \quad \mathcal S_n = \( - \Gamma^{(11)}\)^{\frac{n(n-1)}{2}} \, , \\
&\label{chiaralIIA}\psi = \psi_1 + \psi_2 \, , \quad \Gamma^{(11)} \psi_I = (-1)^{I+1} \psi_I \, ,
\end{align}
\end{subequations}
while, in the IIB one,
\begin{subequations}
\begin{align}
&\Gamma^* = \sigma_3 \otimes \Gamma^{(11)} \, , \quad \mathcal S = \sigma_3 \otimes \mathbb{I}_{32} \, , \\
& \mathcal S_n = \frac12 \[ \(1+(-1)^{\frac{n(n-1)}{2}}\) i \sigma_2 - \(1-(-1)^{\frac{n(n-1)}{2}}\) \sigma_1 \] \otimes \mathbb{I}_{32} \, \, ,  \\
&\label{chiralIIB}\psi = \binom{\psi_1}{\psi_2} \, , \quad \Gamma^{(11)} \psi_I = \psi_I \, , \\
& \sigma_1 = \(\begin{matrix} 0 & 1 \\ 1 & 0\end{matrix}\)  \, , \quad \sigma_2 = \(\begin{matrix} 0 & -i \\ i & 0\end{matrix}\)  \, , \quad \sigma_3 = \(\begin{matrix} 1 & 0 \\ 0 & -1\end{matrix}\)  \, .
\end{align}
\end{subequations}
In both frameworks, $\Gamma^{(11)}
$ is the ten-dimensional chirality matrix (see appendix \ref{app:notations}) and $\psi_I$, $I=1,2$, are ten-dimensional Majorana-Weyl spinors, 
while $\omega$ denotes the spin-connection 
\be
\label{Clifford}
\omega_{p\underline{ab}} = \eta_{\underline{ac}} \, \eta_{\underline{bd}} \, g^{ql} e^{\underline d}{}_{l} \(- \partial_p e^{\underline c}{}_q + e^{\underline c}{}_k \Gamma^{k}{}_{pq}\) \, , \quad \Gamma^k{}_{pq} = \frac12 \, g^{kl} \(\partial_p g_{lq} + \partial_q g_{lp} -\partial_l g_{pq}\) \, .
\ee 
In this section, latin underlined (normal) indices stand for flat (curved) ten-dimensional ones, and Greek indices stand for world-sheet ones. The link among them is given by the pullback on the world-sheet and a proper choice of the vielbein $e^{\underline p}{}_q$ such that
\be
\label{cliffordv2}
\Gamma_{\alpha} = \partial_\alpha x^q \, \Gamma_q = \partial_\alpha x^q \, e^{\underline p}{}_q \, \Gamma_{\underline p} \, , \quad \{\Gamma_\alpha, \Gamma_\beta\} = 2 \, g_{\alpha\beta} \, \mathbb{I}_{32} \, , \quad \{\Gamma_{p}, \Gamma_{q}\} = 2 \, g_{pq} \, \mathbb{I}_{32} \, .
\ee
The ``slash'' notation summarizes the contractions 
\be
\cancel{\omega}_\beta = \omega_{\beta\underline{ab}} \, \Gamma^{\underline{ab}} \, , \quad \cancel{H}_{\beta} = H_{\beta\underline{ab}} \, \Gamma^{\underline{ab}} \, , \quad \cancel{G}_{(n)} = G_{\underline{a_1 ... a_n}} \, \Gamma^{\underline{a_1 ... a_n}} \, .
\ee

\subsection{Quadratic fluctuations: general considerations}

The total action in \eqref{actionv2} is classically Weyl invariant. Expanding around a classical string configuration $X\equiv X^{p}(\tau,\sigma)$, the requirement that Weyl invariance is preserved also at the quantum level notoriously gives rise to beta-function conditions which are in turn interpreted, to leading order in $\alpha'$, as the equations of motion of the supergravity background. As a consequence of these conditions, the world-sheet zero point energy is finite. As we will review in the following, this can be checked explicitly by computing the masses of the small fluctuations around a reference string configuration.

Following the same strategy outlined, for instance, in \cite{Drukker:2000ep,Forini:2015mca,Gautason:2021vfc}, we will work in conformal gauge, keeping the auxiliary world-sheet metric independent from the induced one, hence working off-shell with respect to the Virasoro constraints. Within this framework, the intrinsic metric is partly gauge-fixed as
\be
\label{gaugefixing}
h_{\alpha\beta} = e^{2 \, \Omega(\tau,\sigma)} \, \eta_{\alpha\beta} \, ,
\ee
which implies
\be
\label{flatgauge}
\sqrt{-h} \, h^{\alpha\beta} = \eta^{\alpha\beta} \, .
\ee

Since the background is taken to be a solution of the 10-d supergravity equations of motion, the GS action is also invariant under $\kappa$-symmetry. The latter is a non-linear symmetry in the Lagrangian description and is not evident in the field equations. The only exception is the special case in which the auxiliary metric would be conformal to the induced metric, which would ensure the presence of a projector onto a single sixteen-component spinor in the action. Nevertheless, this is not the case here. Therefore, it is necessary to insert by hand a constraint which guarantees that the physical fermionic degrees of freedom correspond to those of eight two-dimensional Majorana spinors, which are the supersymmetric partners of the eight physical bosons.\footnote{In flat space all the spurious degrees of freedom can be readily eliminated working in light-cone gauge. In more general cases, implementing the latter can be non-trivial (see for instance \cite{Metsaev:2000yf}).} This is a subtle issue which will be discussed elsewhere \cite{w}.



In conformal gauge, the fluctuations generically satisfy coupled equations of motion in two-dimensional flat space. 
As we will see in the following, these equations can be diagonalized, and thus the system can be easily canonically quantized, if the background fields in the NSNS sector satisfy 
a certain set of conditions. As we will see, these requests are met in all the backgrounds of interest we have in mind. Nevertheless, in what follows, we will discuss the problem in full generality applying these conditions just at the very end of each section.  Indeed, despite we fix the world-sheet metric as above, we will keep the covariance in the target space to be as general as possible.

\subsection{Bosonic sector}

From the total action \eqref{actionv2} with the gauge choice in \eqref{gaugefixing}, the field equations for the bosonic modes read\footnote{Here, we have introduced the components $H_{kpq} = 3 \, \partial_{[ k} B_{pq]}$ of the field strength $H_{(3)}=dB_{(2)}$. Notice that $\varepsilon^{\alpha\beta} \partial_\alpha x^p \partial_\beta x^q \(\partial_p B_{kq} - \frac12 \partial_k B_{pq}\) = - \frac12 \varepsilon^{\alpha\beta} \partial_\alpha x^p \partial_\beta x^q H_{kpq}$.}
\be
-\partial_{\alpha} \( \eta^{\alpha\beta} \, \partial_\beta x^q \, g_{kq}(x) \) + \frac12 \, \partial_\alpha x^p \, \partial_\beta x^q \, \eta^{\alpha\beta} \, \partial_k g_{pq}(x) = \frac12 \, \varepsilon^{\alpha\beta} \, \partial_\alpha x^p \, \partial_\beta x^q \, H_{kpq}(x) \, .
\ee
Let us parameterize the fluctuations around the reference configuration $X$ as 
\be
x^k = X^k + \xi^k \, .
\ee
The equations of motion for $X$ read\footnote{In the absence of extrinsic curvature of the background world-sheet, which is the case for all the configurations we have in mind, \eqref{classicaleom} also holds in the Nambu-Goto formulation. See \cite{Forini:2015mca, Gautason:2021vfc} and \cite{w} for related comments.}
\be \label{classicaleom}
-\eta^{\alpha\beta} \partial_\alpha \partial_\beta X^k = \partial_\alpha X^p \partial_\beta X^q \(  \frac12 \, \varepsilon^{\alpha\beta} H^k{}_{pq}(X) + \eta^{\alpha\beta}\, \Gamma^{k}{}_{pq}(X)\) \, , \quad \forall \, k,p,q = 0,\ldots,9 \,,
\ee
and the linearized equations of motion for the fluctuations are
\be \label{linearboseom}
-\eta^{\alpha\beta} \partial_\alpha \partial_\beta \xi^k = \eta^{\alpha\beta} \(\Lambda_\alpha\)^{k}{}_l \, \partial_\beta \xi^l + \Omega^k{}_l \, \xi^l \, , \quad \forall \, k,l = 0,\ldots,9 \,,
\ee
where
\begin{subequations}
\begin{align}
\label{Lambdabose}
&\hspace{-4.5pt}\(\Lambda_\alpha\)^{k}{}_l = 2 \, \partial_\gamma X^p \(\frac12 \, \varepsilon^{\gamma\lambda} H^k{}_{pl}(X) + \eta^{\gamma\lambda} \, \Gamma^{k}{}_{pl}(X)\) \eta_{\lambda\alpha} \, ,\\
&\Omega_{kl} = \partial_\alpha X^p \partial_\beta X^q \,  \partial_l \hspace{-2pt} \(\frac12 \, \varepsilon^{\alpha\beta} H_{kpq}(X) + \eta^{\alpha\beta} \, \Gamma_{kpq}(X)\) + \eta^{\alpha\beta} \partial_\alpha \partial_\beta X^q \, \partial_l g_{kq}(X)\,. 
\end{align}
\end{subequations}
Here, and throughout this section, the spacetime indices are raised (lowered) with $g^{pq}(X)$ ($g_{pq}(X)$). 


Let us now rotate the fluctuations according to a linear transformation $R$ whose matrix representation is such that
\be
\label{R}
\xi^k = R^k{}_l \, \zeta^l \, , \quad \partial_\lambda R^k{}_l = - \partial_{\alpha} X^p \, \eta_{\beta\lambda} \(\frac12 \, \varepsilon^{\alpha\beta} H^k{}_{pq}(X) + \eta^{\alpha\beta} \, \Gamma^{k}{}_{pq}(X)\) R^q{}_l \, .
\ee
Then, the equations of motion take the Klein-Gordon form
\be
\(-\eta^{\alpha\beta}\partial_{\alpha}\partial_\beta + R^{-1} \mathcal M_b \, R \) \zeta = 0 \, , 
\ee
where 
\be
\(\mathcal M_b\)^k{}_l = - \eta^{\alpha\beta} \partial_\alpha X^p \partial_\beta X^q \(R^k{}_{plq}(X) + \frac14 H^k{}_{pm}(X) H^m{}_{ql}(X)\) - \varepsilon^{\alpha\beta} \partial_\alpha X^p \partial_\beta X^q \nabla_{[l}^{\vphantom{k}} H^k{}_{p]q}(X) \, ,
\ee
being
\be
\nabla_{l}^{\vphantom{k}} H^k{}_{pq}(X) = \partial_l^{\vphantom{k}} H^k{}_{pq}(X) + \Gamma^k{}_{lm}(X) H^m{}_{pq}(X) - \Gamma^m{}_{lq}(X) H^k{}_{pm}(X) - \Gamma^m{}_{pl}(X) H^k{}_{mq}(X) \, .
\ee



The computation of the masses $\mu_b$ of the bosonic fluctuations is thus reduced to the diagonalization of $\mathcal M_b$. The problem can be addressed case by case given an explicit expression of the Kalb-Ramond field. Nevertheless, the sum of the bosonic squared masses, which coincides with the trace of $\mathcal M_b$, is immediately given by
\be
\label{pullbackEeq}
\sum_b \mu_b^2 = -\eta^{\alpha\beta} \partial_\alpha X^p \partial_\beta X^q \[R_{pq}(X) - \frac12 |H_{(3)}|_{pq}^2(X)\] - \frac12 \, \varepsilon^{\alpha\beta} \, \partial_\alpha X^p \partial_\beta X^q \, \nabla_k^{\vphantom k} H^k{}_{pq}(X) \, ,
\ee
where
\be
R_{pq}(X) = R^k{}_{pkq}(X) \, , \quad  |H_{(3)}|_{pq}^2(X) = \frac12 H_{pkl}(X) H_q{}^{kl}(X) \, .
\ee
This agrees with the result derived in \cite{Gautason:2021vfc} from a beta-function perspective.

As a final step, we must verify that a solution of the differential equation \eqref{R} exists. Let us demand the NSNS sector to satisfy $\rmd \Lambda = 0$, with $\Lambda$ defined in (\ref{Lambdabose}), that is
\be \label{rotablebos}
\partial_\alpha \[\partial_\gamma X^p \(\frac12 \, \varepsilon^{\gamma\lambda} H^k{}_{pl}(X) + \eta^{\gamma\lambda} \, \Gamma^{k}{}_{pl}(X)\) \eta_{\lambda\beta}\] = 0 \, , \quad \forall k \, , l \, .
\ee
Under these assumptions, equation \eqref{R} is integrable and is solved by


\be
R = \text{Exp}\[- \partial_{\alpha} X^p \, \eta_{\beta\lambda} \(\frac12 \, \varepsilon^{\alpha\beta} H_p(X) + \eta^{\alpha\beta} \, \Gamma_p(X)\) \sigma^\lambda\] \, , \quad \sigma^\lambda=\{\tau,\sigma\} \, ,
\ee
being $H_p$ and $\Gamma_p$ matrices with entries $H^k{}_{pq}$ and $\Gamma^{k}{}_{pq}$. Let us stress that \eqref{rotablebos} is not given by first principles. Rather, it is an extra condition which removes the obstructions to the canonical quantization of the bosonic fluctuations.\footnote{If we interpret $\Lambda$ as a connection in the equations of motion \eqref{linearboseom}, then \eqref{rotablebos} guarantees that its curvature vanishes.}

\subsection{Fermionic sector}
\label{sec:fermsector}


The fermionic part of the action in \eqref{actionv2} is already quadratic in the fluctuations so that every function of the coordinates has to be evaluated on the reference string configuration. 

Making use of the chiral components $\psi_I$, $I=1,2$, of the Majorana spinor $\psi$, we can enclose both the IIA fermionic action and the IIB one in a single expression, that is
\be
\label{genSF}
S_F^{(2)} = \frac{i}{4\pi\alpha'} \hspace{-4pt} \int \hspace{-2pt} d\tau d\sigma \, \bar \psi_I  \(\sqrt{-h} \, h^{\alpha\beta} \hspace{-2pt} + \hspace{-2pt} (-1)^{I+1} \varepsilon^{\alpha\beta}\) \[\delta_{IJ} \Gamma_\alpha \nabla^{(I)}_\beta \, \psi_J \hspace{-2pt} + \hspace{-2pt} (-1)^{I+1} \(\sigma_1\)_{IJ} \Gamma_\alpha \widetilde \Gamma_I \Gamma_\beta \, \psi_J \] \, ,
\ee
where the sum over $I,J=1,2$ is understood. Here,
\be
\nabla^{(I)}_\beta = \partial_\beta + \frac14 \, \cancel{\omega}_{\beta} + \frac18 \, (-1)^{I+1} \, \cancel{H}_{\beta}
\ee
and
\begin{align}
&\widetilde \Gamma_I = \frac{1}{16} \, e^{\phi} \sum_n \, (-1)^{f_{(n,I)}} \, \frac{1}{n!} \, \cancel{G}_{(n)} \, , \hspace{-40pt} && n = 2,4,6,8 \quad \text{for IIA, } \\
& \, \hspace{-40pt} && n = 1,3,5,7 \quad \text{for IIB,} \nonumber
\end{align}
with
\be
f_{(n,I)} = I \[\frac{n(n-1)}{2} + n +1\] + n+1 \, .
\ee

Since
\be
\(\eta^{\alpha\beta} \hspace{-2pt} + \hspace{-2pt} (-1)^{I+1} \varepsilon^{\alpha\beta}\) \Gamma_\alpha \, \nabla^{(I)}_\beta = \((-1)^{I+1} \Gamma_\tau + \Gamma_\sigma\) \((-1)^I \nabla^{(I)}_\tau + \nabla^{(I)}_\sigma\)
\ee
and
\be
\(\eta^{\alpha\beta} \hspace{-2pt} + \hspace{-2pt} (-1)^{I+1} \varepsilon^{\alpha\beta}\) \Gamma_\alpha \widetilde \Gamma_I \Gamma_\beta = \((-1)^{I+1} \Gamma_\tau + \Gamma_\sigma\) \widetilde \Gamma_I \((-1)^I \Gamma_\tau + \Gamma_\sigma\) \,,
\ee
we can simplify the action \eqref{genSF} and derive the equations of motion in the form
\be
\label{diraceom}
\begin{cases}
\(\nabla^{(1)}_\tau - \nabla^{(1)}_\sigma\) \psi_1 + \widetilde \Gamma_1 \(\Gamma_\tau - \Gamma_\sigma\) \psi_2 = 0 \vspace{5pt}\\
\(\nabla^{(2)}_\tau + \nabla^{(2)}_\sigma\) \psi_2 - \widetilde \Gamma_2 \(\Gamma_\tau + \Gamma_\sigma\) \psi_1 = 0
\end{cases} \, .
\ee
Notice that, since we are working off-shell with respect to the Virasoro constraints, the operators $\((-1)^{I+1} \Gamma_\tau + \Gamma_\sigma\)$ are generically not nilpotent. They can be thus removed from the equations of motion.

By applying $\(\partial_\tau + \partial_\sigma\)$ to the first and $\(\partial_\tau - \partial_\sigma\)$ to the second equation and iterating, we can write 
\be
\label{kgeom}
\begin{cases}
\displaystyle -\eta^{\alpha\beta} \partial_\alpha \partial_\beta \psi_1 + \eta^{\alpha\beta} C_{1\alpha} \, \partial_\beta \psi_1 + B_1 \, \psi_2 + \tilde A_1 \, \psi_1 = 0 \vspace{5pt} \\
\displaystyle -\eta^{\alpha\beta} \partial_\alpha \partial_\beta \psi_2 + \eta^{\alpha\beta} C_{2\alpha} \, \partial_\beta \psi_2+ B_2 \, \psi_1 + \tilde A_2 \, \psi_2 = 0
\end{cases} ,
\ee
where
\begin{subequations}
\begin{align}
&\hspace{-4pt}\begin{cases}
\displaystyle \tilde A_1 = - \(\eta^{\alpha\beta} - \varepsilon^{\alpha\beta}\) \[\widetilde \Gamma_1 \Gamma_\alpha \widetilde \Gamma_2 \Gamma_\beta + \frac14 \, \partial_\beta \(\frac12 \, \cancel{H}_\alpha + \cancel{\omega}_\alpha\)\] \vspace{5pt}\\
\displaystyle \tilde A_2 = - \(\eta^{\alpha\beta} + \varepsilon^{\alpha\beta}\) \[\widetilde \Gamma_2 \Gamma_\alpha \widetilde \Gamma_1 \Gamma_\beta - \frac14 \, \partial_\beta \(\frac12 \, \cancel{H}_\alpha - \cancel{\omega}_\alpha\)\]
\end{cases},\\
&\label{BI} B_I =\(\eta^{\alpha\beta}+(-1)^I \varepsilon^{\alpha\beta}\) \[(-1)^I \partial_\beta \(\widetilde \Gamma_I \, \Gamma_\alpha\) - \frac14 \, \widetilde \Gamma_I \, \Gamma_\alpha \(\frac12 \, \cancel{H}_\beta + (-1)^I \cancel{\omega}_\beta\)\] \, , \\
\label{Cfermi}
&C_{I\alpha} = \frac14 (-1)^I \eta_{\alpha\beta }\(\eta^{\beta\lambda} + (-1)^{I+1} \varepsilon^{\beta\lambda}\) \(\frac12 \, \cancel{H}_\lambda + (-1)^{I+1} \cancel{\omega}_\lambda\) \, .
\end{align}
\end{subequations}

Similarly to the previous section, we aim to get rid of the terms proportional to $\partial_\beta \psi_I$ through a $(\tau,\sigma)$-dependent rotation of the spinors. In particular, introducing
\be \label{fermionicrotation}
\psi_I = R_I \, \chi_I \, , \quad \partial_\beta R_I = \frac18 (-1)^{I} \eta_{\beta\lambda} \(\eta^{\lambda\gamma} + (-1)^{I+1} \varepsilon^{\lambda\gamma}\) \(\frac12 \, \cancel{H}_\gamma + (-1)^{I+1} \cancel{\omega}_\gamma\) R_I \, ,
\ee
the fermionic equations of motion reduce to\footnote{We make use of $\eta_{\beta\lambda} \(\eta^{\alpha\beta}  \pm \varepsilon^{\alpha\beta}\) \(\eta^{\gamma\lambda} \pm \varepsilon^{\gamma\lambda}\)=0$.}
\be
\label{finalkgeom}
\(-\eta^{\alpha\beta} \partial_\alpha \partial_\beta  + \mathcal R^{-1} \mathcal M_f \mathcal R\) \chi = 0 \, ,
\ee
where
\be
\mathcal R = \(\begin{matrix} R_1 & 0 \\ 0 & R_2\end{matrix}\) \, , \quad \mathcal M_f = \(\begin{matrix} A_1 & B_1 \\ B_2 & A_2\end{matrix}\) \, , \quad \chi =\(\begin{matrix} \chi_1 \\ \chi_2\end{matrix}\) 
\ee
and
\be
\begin{cases}
\label{A1A2}
\displaystyle A_1 = - \(\eta^{\alpha\beta} - \varepsilon^{\alpha\beta}\) \[\widetilde \Gamma_1 \Gamma_\alpha \widetilde \Gamma_2 \Gamma_\beta + \frac18 \, \partial_\beta \(\frac12 \, \cancel{H}_\alpha + \cancel{\omega}_\alpha\)\] \vspace{5pt} \, ,\\
\displaystyle A_2 = - \(\eta^{\alpha\beta} + \varepsilon^{\alpha\beta}\) \[\widetilde \Gamma_2 \Gamma_\alpha \widetilde \Gamma_1 \Gamma_\beta - \frac18 \, \partial_\beta \(\frac12 \, \cancel{H}_\alpha - \cancel{\omega}_\alpha\)\] \, .
\end{cases}
\ee
At this point, we can compute the trace of the fermionic mass-squared matrix from \eqref{finalkgeom}. Since at this stage we have not fixed the $\kappa$-symmetry, the spinor $\chi$, which has 32 real components, can be viewed as sixteen 2-d spinors. Because the mass of each 2-d spinor multiplies a $2\times2$ unit matrix in the 2-d Dirac equation, tracing over $\mathcal{M}_f$ gives twice the sum of the fermionic mass-squares. Hence, we obtain
\be \label{reducedtrace}
\sum_f \mu^2_f = \frac12\Tr \mathcal M_f = \frac12 \left( \Tr  A_1 + \Tr A_2\right) = \Tr  A_1 \, ,
\ee
where the last equality follows from the cyclicity of the trace.
Using the results collected in appendix \ref{app:useful}, we compute the trace in appendix \ref{app:redtrace}, getting
\be \label{finalsummuf}
\sum_f\mu_f^2 = - \frac12 \, e^{2\phi} \hspace{-2pt} \[\eta^{\alpha\beta} \, \partial_\alpha X^p \partial_\beta X^q \sum_n  \hspace{-2pt}  \{G_{(n)}\}^2_{pq} \hspace{-2pt} - \frac12 \, \varepsilon^{\alpha\beta} \, \partial_\alpha X^p \partial_\beta X^q \sum_n \hspace{-2pt} \[*\(G_{(n)} \wedge *G_{(n+2)}\)\]_{pq}\] \hspace{-2pt},
\ee
where
\begin{subequations}
\begin{align}
&\{G_{(n)}\}^2_{pq}= \frac12 |G_{(n)}|^2_{pq} \, , \quad \forall \, n = 1,...,9 \, ,\\
\intertext{in the democratic formalism, while}
&\{G_{(n)}\}^2_{pq}= |G_{(n)}|^2_{pq} - \frac12 g_{pq} |G_{(n)}|^2\,,\quad (n\neq 5) \, , \\
&\{G_{(5)}\}^2_{pq}= \frac12 |G_{(5)}|^2_{pq} \,  ,
\end{align}
\end{subequations}
in the non-democratic one. The above expression agrees with known results in the literature (see e.g. \cite{Gautason:2021vfc}). For an explicit realization of the above formulae in a relevant example, see appendix \ref{app:wym}.



As in the bosonic case, we have to check whether the fermionic rotation $\mathcal R$ exists or not. For backgrounds satisfying $\rmd C = 0$ with $C$ defined in (\ref{Cfermi}), that is
\be
\partial_\alpha \[\eta_{\beta\lambda}\(\eta^{\lambda\gamma} + (-1)^{I+1} \varepsilon^{\lambda\gamma}\) \(\frac12 \, \cancel{H}_\gamma + (-1)^{I+1} \cancel{\omega}_\gamma\)\] = 0 \, , \quad \forall I \, ,
\ee
the solution of \eqref{fermionicrotation} is given by
\be
R_I = \text{Exp}\[\frac18 (-1)^{I} \eta_{\alpha\beta} \(\eta^{\beta\lambda} + (-1)^{I+1} \varepsilon^{\beta\lambda}\) \(\frac12 \, \cancel{H}_\lambda + (-1)^{I+1} \cancel{\omega}_\lambda\) \sigma^{\alpha}\] \,  , \quad \sigma^{\alpha} = \{\tau,\sigma\} \, .
\ee

The comments made at the end of the previous section apply here too. Moreover, since $\mathcal R$ clearly commutes with $\Gamma^{(11)}$, it still holds that
\be
\Gamma^{(11)} \chi_I = (-1)^{(\mathfrak p+1)(I+1)} \chi_I \, ,
\ee
being $\mathfrak p = 0$ for the IIA case, while $\mathfrak p = 1$ for the IIB one.

\subsection{Mass-matching condition}
\label{sub:mass}
To sum up, in the previous sections we have found that
\be
\label{bosma}
\sum_b \mu_b^2 = -\eta^{\alpha\beta} \partial_\alpha X^p \partial_\beta X^q \[R_{pq} - \frac12 |H_{(3)}|_{pq}^2\] - \frac12 \, \varepsilon^{\alpha\beta} \, \partial_\alpha X^p \partial_\beta X^q \,\nabla_k^{\vphantom k} H^{k}{}_{pq}
\ee
and
\be
\label{ferma}
\sum_f\mu_f^2 = - \frac12 \, e^{2\phi} \hspace{-2pt} \[\eta^{\alpha\beta} \, \partial_\alpha X^p \partial_\beta X^q \sum_n  \hspace{-2pt}  \{G_{(n)}\}^2_{pq} \hspace{-2pt} - \frac12 \, \varepsilon^{\alpha\beta} \, \partial_\alpha X^p \partial_\beta X^q \sum_n \hspace{-2pt} \[*\(G_{(n)} \wedge *G_{(n+2)}\)\]_{pq}\] \hspace{-2pt}.
\ee
Notice that\footnote{This holds thanks to the total antisymmetric nature of $H_{(3)}$ and metric compatibility $\nabla_\kappa g_{pq}=0$. Moreover, we used that $\Gamma^{k}{}_{k q} = \partial_q \log\(\sqrt{-g}\)$, where here $g$ is the determinant of the background metric.}
\be
\nabla_k H^{k}{}_{pq} = \frac{1}{\sqrt{-g}} \, \partial_k \( \sqrt{-g} \, H^{kml}\) g_{mp} \, g_{lq} \, ,
\ee
which implies (see \eqref{covdiv})
\be
e^{2\phi}\[*\rmd\(e^{-2\phi}*H_{(3)}\)\]_{pq} = 2 \, \nabla_{k}^{\vphantom k} \phi \, H^k{}_{pq} - \nabla_{k}^{\vphantom k} H^k{}_{pq} \, .
\ee
Therefore, the last contributions to the above expressions corresponds to the Hodge dual of the Maxwell equation in \eqref{demomaxwell} (in particular, to the terms which do not depend on the derivative of the dilaton). 

From the above expressions, using the string-frame type II Einstein equations in Ricci form in \eqref{Einsteq}, it thus follows the generalized mass-matching formula
\be
\label{massmatchplusdilaton}
\sum_b \mu_b^2 - \sum_f\mu_f^2 = 2 \, \eta^{\alpha\beta}\partial_{\alpha}X^p\partial_{\beta}X^q \, \nabla_p^{\vphantom k} \nabla_q^{\vphantom k} \phi -\varepsilon^{\alpha\beta} \partial_{\alpha}X^p\partial_{\beta}X^q \, \nabla_k^{\vphantom k} \phi \, H^k{}_{pq} \,.
\ee
This is the same result derived in \cite{Gautason:2021vfc} by means of a beta-function computation.

Let us stress that the intrinsic metric is generically parametrized as in \eqref{gaugefixing}. As a consequence, the scalar curvature of the world-sheet $R_{(h)}$ is non-trivial, that is
\be
R_{(h)} = - 2 \, e^{-2\Omega(\tau,\sigma)} \, \eta^{\alpha\beta} \partial_\alpha \partial_\beta \, \Omega(\tau,\sigma) \, .
\ee
Then, the Polyakov action should be supplemented with the Fradkin-Tseytlin counterterm action \cite{Fradkin:1984pq, Fradkin:1985ys}, which introduces a dilaton coupling on the world-sheet. At the level of the trace of the world-sheet stress-energy tensor, this compensates for the anomaly coming from the one-loop fluctuations of the string \cite{Callan:1989nz}, that is exactly what survives in the difference \eqref{massmatchplusdilaton}.

For the class of backgrounds we will deal with in the following, the right hand side of \eqref{massmatchplusdilaton} vanishes.\footnote{The contribution to the anomaly from the Fradkin-Tseytlin action vanishes as well.} Therefore, 
we recover the mass-matching condition
\be
\label{massmatch}
\sum_b \mu_b^2 - \sum_f\mu_f^2 = 0\,.
\ee
As we will see, it is this condition which allows the total Casimir energy of the world-sheet fluctuations to remain finite, avoiding (logarithmic) UV divergences that would otherwise spoil the conformal invariance of the world-sheet theory.\footnote{See \cite{Papadopoulos:2002bg} for a discussion about the UV finiteness of the world-sheet zero point energy in the presence of a non-trivial dilaton.}




The above results must be supplemented with some considerations on the conformal and $\kappa$-symmetry gauge fixings, which we have so far omitted from our discussion. Schematically, our treatment of the quadratic fluctuations implies that we should still perform a path integration over the fields 
\be
	\int \mathcal{D}\xi\,\mathcal{D}b\, \mathcal{D}c\, \frac{\mathcal{D}\chi}{\mathcal{V}_\kappa}~. 
\ee
The ten bosonic fields $\xi$ contain eight (physical) transverse modes and two longitudinal modes, the contributions of which are exactly canceled by the integration over the $b$-$c$ ghost pair that arises in conformal gauge \cite{Drukker:2000ep, Forini:2015mca}. It is well-known that their masses are proportional to the world-sheet scalar curvature $R_{h}$. Therefore, for a flat background world-sheet, only the transverse fluctuations need to be counted and, as a result, the sum (\ref{bosma}) runs just over the latter. Similarly, the integration over the fermionic fields $\chi$ should be considered modulo $\kappa$-symmetry, which we indicated by ${\cal V}_{\kappa}$ in the denominator. Of course, $\kappa$-symmetry cannot be treated in the BRST formalism, but it is reasonable to expect that  it compensates the contribution of eight 2-d fermions with masses again proportional to the world-sheet scalar curvature. Other $R_{(h)}$ contributions to the trace of the stress-energy tensor are produced by the Jacobian of the transformation from GS fermions to the 2-d fermions on the world-sheet \cite{Gautason:2021vfc}. Therefore, the effective number of 2-d fermions is eight and the mass-matching condition \eqref{massmatch} is not violated.

Having in mind a flat world-sheet metric and a light-cone-gauge-like approach, these arguments would {\it effectively} allow us to perform a canonical quantization of just the physical degrees of freedom satisfying  \eqref{massmatch}. 

\section{Hagedorn configuration on the world-sheet}
\label{sec:extrapol}

Holography allows to map the low energy regime of classes of planar finite temperature confining gauge theories, into type II supergravity backgrounds with fluxes whose string frame metrics, written in a convenient coordinate frame, generically asymptote to
\be \label{asymetric}
ds^2 \approx 2\pi\alpha' T_s \(1+\frac{y^2}{l^2}\) \(dt^2 + \eta_{ij} dx^i dx^j\) + d\vec y\,^2 + ds^2_{\mathcal M} \, , \quad i, \, j =1,...,p, \quad y \approx 0 \, ,
\ee
where $\vec y = \{y_1,...,y_d\}$, $y^2 = y_1^2 + ... + y_d^2$ is dual to the gauge theory energy scale. Here, $t$ is the compact Euclidean time direction of length $\beta=1/T$ and a Wick rotation has been made on one of the other space directions $x^i$, $\alpha'$ is the string length squared, $T_s$ is the confining string tension of the dual gauge theory, $l$ is the curvature scale and $\mathcal M$ is a $(9-d-p)$-dimensional compact transverse space (which looks flat for the string configuration at $y=0$ we will focus on). Setting $p=0$, $T_s=1/2\pi\alpha'$ and $l=R_{AdS}$, the above geometries describe the IR regime of $d$-dimensional CFTs on $S^{(d-1)}$.\footnote{Notice that we are keeping track of the $y^2$-contributions to the metric coefficients which are relevant for the quadratic world-sheet theory.} We will refer to the latter as ``confining" in the sense described in \cite{Witten:1998zw}, as they display a first order confinement/deconfinement phase transition, with the critical temperature set by the $S^{(d-1)}$ radius . 

Working in Polyakov's formalism in conformal gauge, a classical string configuration $X^p(\tau,\sigma)$ probing the above backgrounds and winding around the compact time circle with zero momentum will in general need to have some momentum in the flat or transverse directions, in order to satisfy the classical Virasoro constraints. It is around such a configuration that a semiclassical approach can be employed, as in the previous section, solving for the related quadratic quantum fluctuations. This analysis will allow to get the spectrum, to second order in the world-sheet oscillators, and hence, in particular, the mass of the lowest lying winding string mode \cite{Bigazzi:2023oqm}. 

The Hagedorn regime is in this sense critical, as it corresponds to the limiting configuration with zero momenta 
\be
\label{hagpoint}
X^0=\pm \frac{\beta_H}{2\pi} \, \sigma \, , \quad X^i=0 \, , \quad \forall i \neq 0 \, ,
\ee
being ``$0$'' the (compact) thermal direction of the target space and $\beta_H$ the inverse Hagedorn temperature. In this regime the background induced metric is degenerate and, as outlined in \cite{Bigazzi:2022gal,Bigazzi:2023oqm},  the background configuration turns out to be of the same order as the fluctuations around it, invalidating the semiclassical approximation.

Adopting a pragmatic approach, we do not claim to expand the world-sheet action around \eqref{hagpoint}. Indeed, this procedure would be ill-defined due to the degeneracy of the background induced bosonic metric.
What we can do is to compute the sigma model masses in a near-Hagedorn regime where the world-sheet is still regular, regardless of what the background string configuration and spacetime metric are. Then, we take the $\beta\to\beta_H$ limit, paying attention to its regularity, extrapolating the sigma model results, in particular those regarding the masses of the world-sheet fields, to the configuration \eqref{hagpoint}. 

Clearly, this extrapolation is not guaranteed to work a priori. However, both the agreement of world-sheet results on the Hagedorn temperature to NLO \cite{Bigazzi:2023oqm} with those obtained using the complementary effective approach \cite{Urbach:2023npi}, and the already mentioned outstanding agreement (even at subleading orders) with numerical predictions coming from integrability and quantum spectral curve methods in global $AdS$ cases, allow us to be encouraged performing the above extrapolation in a very general framework. 

Most importantly for our purposes, the results collected in section \ref{sub:mass} do not show any degeneracy once extrapolated to the configuration \eqref{hagpoint}. The generalized mass-matching condition (\ref{massmatchplusdilaton}) generically reduces to (\ref{massmatch}) at the Hagedorn point, provided the dilaton is time-independent.

Once we verify \emph{a posteriori} that the Hagedorn limit can be taken safely, we end up with a two-dimensional consistent world-sheet theory of, in general, eight massive bosonic modes and eight massive fermionic modes whose masses depend on $T_H$. Crucially, the mass-matching condition at the Hagedorn point guarantees its UV finiteness, since it makes the zero point energy of the world-sheet sigma model
\be
\Delta = -\frac12 \sum_b \mu_b - \sum_b \sum_{n=1}^\infty \sqrt{n^2 + \mu_b^2} + \sum_f \sum_{n=1}^\infty \sqrt{\(n-\frac12\)^2 + \mu_f^2}
\ee
a finite quantity.\footnote{The presence of a time-dependent dilaton is accompanied by a two-dimensional renormalization which cancels the would-be divergence in the $\mathcal O(\mu^2_b)$ term of $\Delta$ \cite{Papadopoulos:2002bg}. If the dilaton is trivial,  \eqref{massmatch} does the job.} 

Now that we have a method to compute the sigma model masses at the Hagedorn point, we can generalize the proposal in \cite{Bigazzi:2023hxt} about the higher order corrections to the (inverse) Hagedorn temperature $\beta_H$ in the holographic limit. This is the main result of this work.  

Proceeding step by step, in \cite{Bigazzi:2023hxt} the authors provided a formula for $\beta_H$ in gauge theories whose low energy regime is holographically captured by the asymptotic backgrounds in \eqref{asymetric}. 
More precisely, it has been given by the implicit relation 
\be \label{genproposal}
\frac{g_{00}(0)}{\alpha'} \, \left(\frac{\beta_H}{2\pi}\right)^2 = 2 \[ \, \Delta + \Delta {\cal E} \, \] \, ,\quad (g_{00}(0)=2\pi\alpha' T_s)\,,
\ee
where, up to second order in the world-sheet mass parameters (taking into account the mass-matching condition \eqref{massmatch}),
\begin{equation}\label{formul}
\Delta = 1 -\frac12 \sum_{b=1}^8\mu_b+ \log{2} \sum_{b=1}^8 \mu_b^2+ ... \,,
\end{equation}
and $\Delta{\cal E}$ accounts for NNLO and NNNLO contributions derived from the effective approach. Notice, crucially, that once the mass-matching condition is satisfied, we do not need any detail about the fermionic masses in \eqref{formul}.

The extrapolation to the Hagedorn point of the results we have collected above show that the formula (\ref{formul}) holds in general cases where the dual backgrounds have time-independent dilaton, generic RR forms and a $B$-field with no legs along the time direction. 

From formulae (\ref{bosma}) and (\ref{hagpoint}) in the Hagedorn regime we get
\be
\sum_{b=1}^8 \mu_b^2 = - \left(\frac{\beta_H}{2\pi}\right)^2 \left[ R_{00}(0)-\frac{1}{2} |H_{(3)}(0)|_{00}^2 \right]\,,
\ee
so that the Hagedorn temperature is obtained as the solution of the equation
\be
\label{complete}
-\sum_{b=1}^8\mu_b+2\,\Delta {\cal E}= \alpha' m^2
\ee
where
\be
\alpha' m^2\equiv -2 + \frac{g_{00}(0)}{\alpha'} \, \left(\frac{\beta_H}{2\pi}\right)^2 + 2\log{2}\left(\frac{\beta_H}{2\pi}\right)^2 \left[ R_{00}(0)-\frac{1}{2} |H_{(3)}(0)|_{00}^2 \right]\,.
\ee
In cases with $B$-field with legs along the time direction the relation above is easily generalized taking into account that the canonical momentum density gets shifted
\be
\pi^{\tau}_p = \frac{\partial{\cal L}}{\partial (\partial_{\tau} x^p)} = \frac{1}{2\pi\alpha'}\left(\partial_{\tau} x^q g_{pq} +\partial_{\sigma} x^q B_{pq}\right)\,,
\ee
where ${\cal L}$ is the Polyakov action coupled to the B-field. Thus, at the flat tip of our backgrounds ($y=0$),
\be
P_i = \int \hspace{-4pt} d\sigma \, \pi^{\tau}_i = p^j g_{ij}(0) + \left(\frac{\beta}{2\pi}\right) \frac{B_{0i}}{\alpha'}\,.
\ee
Taking this into account, and setting $p^i=0$, $\beta=\beta_H$ in the Hagedorn regime, the combination $m^2$ in the above equations will get shifted by a term quadratic in the B-field.\footnote{The gauge invariance of the corresponding quantum condition on the string ground state is guaranteed by the fact that now the ground state is charged with respect to the B field. In the effective approach this is realized by the fact that the thermal scalar is charged under an external $A_p=B_{0p}$ gauge field, so that the standard derivative is replaced by a covariant derivative (see \cite{Urbach:2023npi}).}

All in all, our final proposal for the (inverse) Hagedorn temperature in any holographic confining gauge theory whose dual generically features both RR and NSNS fluxes is the solution of \eqref{complete} equipped with
\be
\label{geneffmass}
\alpha' m^2\equiv -2 + V^\mu V^\nu \hspace{-2pt} \[ \frac{g_{\mu\nu} \hspace{-4pt}+g^{pq} B_{p\mu} B_{q\nu}}{\alpha'} + 2\log{2} \( R_{\mu\nu}-\frac{1}{2} |H_{(3)}|_{\mu\nu}^2 \)\] ,
\ee
being $V$ the tangent vector to the thermal direction of the background with modulo $\beta_H/2\pi$. 

Notice that the combination $m^2$ singled out by the world-sheet approach precisely corresponds to the expression for the effective mass (once evaluated at the origin of the background) of the thermal scalar as recently conjectured in \cite{Harmark:2024ioq} (see eq.~(2.28) there), automatically providing, universally, the value $C=2\log{2}$ in the notation of \cite{Harmark:2024ioq}.




\section{Effective approach on general backgrounds}
\label{sec:effnlo}

The near Hagedorn behavior of the string model can be captured by an effective field theory for a complex scalar field $\chi$ describing the stringy winding mode. 
Reducing the theory on the compact directions, the Hagedorn singularity happens when the winding mode $\chi$ becomes massless. This is the picture recently adopted in \cite{maldanotes,Urbach:2022xzw,Urbach:2023npi,Ekhammar:2023glu,Bigazzi:2023hxt,Ekhammar:2023cuj,Harmark:2024ioq} for the computation of $T_H$ in particular examples, following the perspective in \cite{Horowitz:1997jc}.

The effective approach is very convenient and allows us to compute the higher order corrections in a very simple way. However, the final result is affected by arbitrary coefficients which cannot be fixed by the approach itself. On the other hand, the world-sheet method is very powerful: in principle, we could derive the full result without any ambiguity from a pure sigma-model computation. For example, in section \ref{pureworld-sheet} we show how the NNLO correction predicted by the effective approach can be interpreted as the VEV of the contributions to the world-sheet canonical Hamiltonian which are quartic in the zero modes part of the bosonic fluctuations, at least in a very specific case. Unfortunately, this method is also very demanding.

In this section, we generalize the effective approach for the computation of $T_H$ to a whole general class of backgrounds. For the sake of simplicity here we will not introduce the extra contributions predicted by the sigma model. In other words here we will neglect higher derivative corrections to the thermal scalar effective action. The missing terms will be accounted for in section \ref{sec:examples}, relying on the interplay of the two approaches. 

Consider a background with string-frame metric of the form
\be \label{metricnlo}
ds^2 = 2\pi \alpha' T_s \tilde{g}^{(0)}(r) \left(dt^2 + \eta_{ij} dx^i dx^j \right) +dr^2 + r^2 \tilde{g}^{(\Omega)}(r)d\Omega_{d-1}^2 + \tilde{g}^{(\mathcal M)}(r) ds_{\cal M}^2 \,,
\ee
supported by some RR-forms, the NSNS two-form $B_{(2)}$ and a running dilaton $\phi(r)$. Here, $i=1,...,p$, the radial variable is $r \in [0,\infty)$ and the form of the metric above is supposed to be valid around $r=0$, where the internal non-shrinking manifold ${\cal M}$ is $(9-p-d)-$dimensional.
This form of the metric turns out to be quite general.
Nevertheless, even if the structure of the metric, for example its ``internal part'' (i.e. the non-Minkowski-like part), is not precisely as in (\ref{metricnlo}), the final results of this section will be valid as well.
The only mandatory requirements are that there is a holographic direction $\rho$ ($r$ above) with some ``IR value'' $\rho_0$ ($r=0$ above)\footnote{For ``IR'' we mean the region of the holographic radial direction corresponding to the IR regime of the dual field theory.} such that $g_{tt}(\rho_0)\neq 0$ so as to have a finite value of the ``string tension'' $T_s$.
Then, in many of the known cases, the internal part of the metric in the IR has some vanishing cycle as in (\ref{metricnlo}).
Finally, clearly the end results will not be dependent on the coordinates chosen: they can be written in a covariant form, as we will exhibit in section \ref{sec:covariant}.

Starting from scratch, the $(p+1)$-dimensional string frame effective action which describes the infrared dynamics of a scalar field
\be
\chi = \chi(\vec x \, , r)
\ee
on the background in \eqref{metricnlo} is
\be\label{essechi}
S_{\chi} \approx \beta \, V_{S^{d-1}} \, V_{\mathcal M} \hspace{-4pt} \int \hspace{-4pt} d^p\vec x \, dr \, \sqrt{-g} \, e^{-2\,\phi} \{ g^{mn} \, \partial_m \chi^* \, \partial_n \chi + m^2_{eff}(r) \, \chi^* \chi\}\,,
\ee
where
\be
\alpha' m^2_{eff}(y) = \frac{\beta^2  (g_{tt}(r)+g^{mn} B_{m t} B_{n t})-8 \pi^2 \alpha'}{4\pi^2\alpha'}\, .
\ee
Here, $\beta$ is the length of the compactified temporal direction, $V_{S^{d-1}}=\int d^{d-1}\theta \sqrt{g_{S^{d-1}}}$ is the volume of the unit $S^{d-1}$, $V_{\mathcal M}=\int d^{9-p-d}\varphi \sqrt{g_{\mathcal M}}$ is the volume of the manifold $\mathcal M$ and $g$ is the determinant of the whole ten-dimensional metric \eqref{metricnlo}, once extracted $g_{S^{d-1}}$ and $g_{\mathcal M}$. This action can be easily derived by the reduction of the ten-dimensional one on the compact directions. This is the reason why $g$ contains also the $r$-dependent contributions coming from all directions.

The equation of motion for $\chi$ is thus
\be
- \frac{1}{\sqrt{-g} \, e^{-2\,\phi}} \partial_m \[\sqrt{-g} \, e^{-2 \, \phi} \, g^{mn} \, \partial_n \chi\] + m^2_{eff}(r) \, \chi = 0 \, .
\ee
With the ansatz
\be
\chi(\vec x , r) = e^{i \vec p \cdot \vec x} \, w(r) \, ,
\ee
and in the limit $M^2 = - \eta_{ij} p^i p^j\to0$, the above equation reduces to
\be\label{eqw}
-\frac12 w'' + \frac12 w' \left(2\phi'-\partial_r \log{\sqrt{-g}}  \right) + \frac12 m^2_{eff} w = 0\,.
\ee

\subsection{Expansion}

Let us specialize to the case where the metric functions, $B_{(2)}$ and the dilaton admit an expansion in even powers of the radius in the variable $r$:\footnote{We are also considering the case where the constant in front of the $r^2 d\Omega_{d-1}^2$ term is equal to one. The end result is not going to depend on this choice. Indeed, the dependence on $g$ is enclosed in a logarithmic derivative and so all the $r$-independent factors do not contribute.} 
\bea
&& \tilde{g}^{(0)}(r)= 1+\tilde{g}^{(0)}_{2} r^2 + \tilde{g}^{(0)}_{4} r^4 + ...\,, \\ 
&& \tilde{g}^{(\Omega)}(r)= 1+\tilde{g}^{(\Omega)}_{2} r^2 + ... \,, \\
&& \tilde g^{(\cal M)}(r) = 1+\tilde{g}^{(\cal M)}_{2} r^2 + ...\,, \\
&& \phi (r)= \phi_0 + \phi_2 r^2+ ... \,,\\
&& g^{mn} B_{m t} B_{n t} = b_2 \, r^2 + b_4 \, r^4 + ...\,.
\eea
Note that, comparing with \eqref{asymetric}, we have 
\be 
\tilde g^{(0)}_2 = l^{-2}\,.
\ee
Introducing the notations
\be
C \equiv \(2\pi\alpha'T_s\)^{\frac{p+1}{2}} \, , \qquad \tilde{g}_2 \equiv \frac12 \left((p+1)\tilde{g}^{(0)}_{2} + (d-1)\tilde{g}^{(\Omega)}_{2} + (9-d-p)\tilde{g}^{(\cal M)}_{2} \right) \,,
\ee
around $r=0$ we can parametrize\footnote{If the metric is given in a different radial variable $\rho$ from the $r$ variable in (\ref{metricnlo}), with the relation $dr^2=F(\rho)^2d\rho^2$, by exploiting the fact that $\sqrt{|g(r)|}dr=\sqrt{|g(\rho)|}d\rho$ and that $g_{rr}(r)=1$, a convenient way to extract the coefficients $d, \tilde g_2, ...,$ is to simply expand $\sqrt{|g(\rho)|}/F(\rho)|_{\rho=f(r)}$, where $f(r)$ is found by integrating $dr=F(\rho)d\rho$.
If the metric in $\rho$ variables has no $d\rho dy^i$ components ($i=1, ..., 9$), this simplifies to $\sqrt{|g_9(\rho)|}|_{\rho=f(r)}$, where $g_9(\rho)$ is the determinant of the nine dimensional part of the metric (i.e. omitting the radius).}
\be\label{sqrtg}
\sqrt{-g}= C r^{d-1} \left(1+\tilde{g}_2 r^2 + \tilde g_4 r^4 + ... \right)\,.
\ee

Since at leading order the Hagedorn temperature is given by the relation $\beta_H^2 = 4\pi/T_s$, we can parameterize the corrections we want to calculate in the near-Hagedorn regime as
\be \label{nearHag}
{\boxed{\beta_H^2 \simeq \frac{4\pi}{T_s} \( 1 + c_0 \, \alpha'^{\frac12} + c_1 \, \alpha'\ + c_2 \, \alpha'^{\frac32} + ...\) 
}}\,.
\ee
Moreover, the fluctuation $r$ scales as
\be\label{erre}
r \sim \sqrt{\frac{\alpha'}{\mu_b}} \sim \sqrt{l_s l} \,,
\ee
where $l_s\equiv\sqrt{\alpha'}$. The above scaling relation follows from the world-sheet analysis in the previous sections and in \cite{Bigazzi:2023oqm} (with $r^2 = y^i y^i$, and $y^i$ being the zero modes of the massive scalar fields on the world-sheet). Briefly, the on-shell expression for $r$ is given by an expansion in zero and non-zero modes. The effective approach takes into account just the former ones. Their normalization depends on $\beta$ through the world-sheet bosonic mass $\mu_b$. In the near Hagedorn regime, this affects the scaling of $r$ as above. 

Therefore, we can consistently expand the equation of motion for $w(r)$ in powers of $\alpha'$.
This allows to perform a standard quantum-mechanical perturbative analysis.

Note however that the measure of integration in the action $S_{\chi}$ (\ref{essechi}), which enters the scalar products, has in turn the perturbative expansion in (\ref{sqrtg}). 
A convenient way to keep track of this expansion is to include the expanded part of $\sqrt{-g}$, i.e. $\sqrt{-g}/C r^{d-1}$, in the equation for the mode $w(r)$.
In this way, all the internal products, at every order, will be performed with the same measure $r^{d-1}$.

Thus, multiplying everything by $\sqrt{-g}/C r^{d-1}$ and considering the expansion of $\beta_H$ in (\ref{nearHag}) and the scaling of $r$ in (\ref{erre}), the equation of motion (\ref{eqw}) for $w(r)$ can be expanded as
\begin{align}
	0 = 
	&  \left[1+\tilde{g}_2 r^2 + \tilde g_4 r^4 + ... \right] \Bigl[ - \frac12 \, w''(r) - \frac12 \, \frac{d-1}{r} \, w'(r) + \frac{1}{\alpha'} \, \tilde G_2^{(0)} r^2 \, w(r) + \frac{c_0}{\sqrt{\alpha'}} \, w(r) + \nonumber \\
	& + \(2\, \phi_2 - \tilde g_2\) \, r \, w'(r) + c_1 \, w + \frac{c_0}{\sqrt{\alpha'}} \, \tilde G_2^{(0)}\, r^2 \, w(r) + \frac{1}{\alpha'} \, \tilde G_4^{(0)} r^4 \, w(r) +\nonumber \\
	& + \(4\, \phi_4 - 2\tilde g_4 + \tilde g_2^2\) \, r^3 \, w'(r) + c_2 \alpha'^{\frac12}\, w(r) + c_1 \, \tilde G_2^{(0)}\, r^2 \, w(r) + \frac{c_0}{\sqrt{\alpha'}} \, \tilde G_4^{(0)} r^4 \, w(r) + \nonumber \\ 
	& + \frac{1}{\alpha'} \, \tilde G_6^{(0)} r^6 \, w(r) + ... \Bigr]\, ,
\end{align}
being 
\be
\tilde G_a^{(0)} = \tilde g_a^{(0)} + \frac{b_a}{2\pi\alpha'T_s} \, , \quad a=2, 4, 6, ... \, .
\ee

To clarify the expansion, we can introduce the zeroth-order parameter
\be
\xi = \(\frac{2 \, \tilde G^{(0)}_2}{\alpha'}\)^{\hspace{-4pt}\frac14} r \, .
\ee
In this way, we can sharply separate the different orders in the $\alpha'$-expansion as
\be \label{eigenproblem}
H w(\xi) = E w(\xi) \, , \quad H=H_0 + H_1 + H_2 + ...\, , \quad E = E_0 + E_1 + E_2 + ...\, ,
\ee
where
\begin{align}
	&H_0 = -\frac12 \frac{\partial^2}{\partial \xi^2} - \frac12 \frac{d-1}{\xi} \frac{\partial}{\partial \xi} + \frac12 \xi^2  \, , \\
	&H_1 = \(\frac{2 \, \tilde G^{(0)}_2}{\alpha'}\)^{\hspace{-4pt}-\frac12} \tilde g_2 \xi^2 \left(H_0 - E_0 \right) + \Delta H_1 \equiv \nonumber \\
	& \quad \ \equiv \(\frac{2 \, \tilde G^{(0)}_2}{\alpha'}\)^{\hspace{-4pt}-\frac12} \tilde g_2 \xi^2 \left(H_0 - E_0 \right) + \nonumber \\
	& \quad \ + \(\frac{2 \, \tilde G^{(0)}_2}{\alpha'}\)^{\hspace{-4pt}-\frac12} \[ \(2\, \phi_2 - \tilde g_2\) \xi \, \frac{\partial}{\partial \xi} + \frac{c_0}{\sqrt{2}} \(\tilde G_2^{(0)}\)^{\hspace{-2pt}\frac12} \xi^2 + \frac12 \frac{\tilde G^{(0)}_4}{\tilde G^{(0)}_2}\, \xi^4 \] \, , \label{acca1}\\
	&H_2 = \(\frac{2 \, \tilde G^{(0)}_2}{\alpha'}\)^{\hspace{-4pt}-1} \tilde g_4 \xi^4 \left(H_0 - E_0 \right) + \(\frac{2 \, \tilde G^{(0)}_2}{\alpha'}\)^{\hspace{-4pt}-\frac12} \tilde g_2 \xi^2 \left(H_1 - E_1 \right)  + \Delta H_2 \equiv \nonumber \\
	& \quad \ \equiv \(\frac{2 \, \tilde G^{(0)}_2}{\alpha'}\)^{\hspace{-4pt}-1} \tilde g_4 \xi^4 \left(H_0 - E_0 \right) + \(\frac{2 \, \tilde G^{(0)}_2}{\alpha'}\)^{\hspace{-4pt}-\frac12} \tilde g_2 \xi^2 \left(H_1 - E_1 \right)  + \nonumber \\
	& \quad \ +  \(\frac{2 \, \tilde G^{(0)}_2}{\alpha'}\)^{\hspace{-4pt}-1} \[ \(4\, \phi_4 - 2\tilde g_4 + \tilde g_2^2\) \, \xi^3 \, \frac{\partial}{\partial \xi} + c_1 \, \tilde G_2^{(0)}\, \xi^2  + \frac{c_0}{\sqrt{2 \tilde G_2^{(0)}}} \, \tilde G_4^{(0)} \xi^4  + \frac{1}{2} \, \frac{\tilde G_6^{(0)}}{\tilde G_2^{(0)}} \xi^6 \] \, ,
\end{align}
and  
\be\label{energies}
E_0 = - \(2 \tilde G_2^{(0)}\)^{-\frac12} c_0 \, , \qquad E_1 = - \(2 \, \tilde G^{(0)}_2 \)^{-\frac12} \alpha'^{\frac12} c_1 \,, \qquad E_2 = - \(2 \, \tilde G^{(0)}_2\)^{-\frac12} \alpha' c_2\,.
\ee

\subsubsection*{Order zero}

If we want the ``unperturbed'' problem $H_0 w = E_0 w$ to admit a normalizable solution, namely\footnote{Notice that the normalization constant
	\be
	\alpha_0 = \left(\frac{C}{2} \Gamma\[d/2\]  \right)^{\hspace{-3pt}-\frac12} \,,
	\ee
	is chosen such that $\int_0^{\infty} d\xi C \xi^{d-1} w_0^2 = 1$.}
\be
w_0 = \alpha_0 \, e^{-\frac{\xi^2}{2}}\,,
\ee
we have to impose
\be
E_0 = \frac{d}{2} \, .
\ee
In other words, as already noted in \cite{maldanotes, Urbach:2023npi, Ekhammar:2023glu,Bigazzi:2023oqm}, the leading term of \eqref{eigenproblem} is the eigenvalue equation for the ground state of a $d$-dimensional harmonic oscillator in flat space with unit frequency. 
Using (\ref{energies}), this fixes the NLO correction to the inverse Hagedorn temperature (\ref{nearHag}) to
\be\label{c0}
\boxed{
	c_0 = - \frac{d}{\sqrt{2}} \(\tilde  G_2^{(0)} \)^{\hspace{-2pt}\frac12}
}\,.
\ee

\subsubsection*{First order}

At the next level, the correction $E_1$ in (\ref{eigenproblem}) is given by the expectation value of the first-order perturbation $H_1$ on the ground state $w_0$.
Since $w_0$ is the eigenfunction of $H_0$ with eigenvalue $E_0$, only the $\Delta H_1$ part of $H_1$ (\ref{acca1}) gives a non-trivial contribution, so\footnote{Let us stress that $\Delta H_1$ is not a self-adjoint operator at this order. Instead, the complete perturbations $H_1$ and $H_2$ are self-adjoint as $H_0$.}
\be
\label{E1}
E_1 =   \langle w_0 | \Delta H_1 | w_0 \rangle  \, .
\ee
Thus, using again (\ref{energies}), we can fix the NNLO coefficient in the expansion of $\beta_H$ (\ref{nearHag}) as
\be
c_1 = - \left \langle w_0 \left | \(2\, \phi_2 - \tilde g_2\) \xi \, \frac{\partial}{\partial \xi} - \frac{d}{2} \tilde G^{(0)}_2 \xi^2 + \frac12 \frac{\tilde G^{(0)}_4}{\tilde G^{(0)}_2}\, \xi^4 \right | w_0 \right \rangle  \,,
\ee
that is
\be
c_1=C \alpha_0^2 \int_0^{\infty} \hspace{-8pt} d\xi \, \xi^{d-1}\[ \(2\phi_2 - \tilde g_2 \) \xi^2 +  \frac{d}{2} \, \tilde G^{(0)}_2 \, \xi^2 - \frac12 \frac{\tilde G^{(0)}_4}{\tilde G^{(0)}_2}\, \xi^4 \,\]  e^{-\xi^2} \, .
\ee
The result of the integration is
\be\label{c1}
\boxed{
	c_1=- \frac{d(d+2)}{8} \, \frac{\tilde G^{(0)}_4}{ \tilde G^{(0)}_2} - \frac{d}{2} (\tilde g_2 - 2 \phi_2) + \frac{d^2}{4} \tilde G^{(0)}_2 
} \,.
\ee

\subsubsection*{Second order}

One can then proceed to calculate the second correction $E_2$ in (\ref{eigenproblem}) by the formula
\be\label{e2}
E_2 = \langle w_0 | H_2 | w_0 \rangle + \sum_{n \neq 0} \frac{|\langle w_n | H_1 | w_0 \rangle |^2}{E_0-E_{(n)}}\,,
\ee
where $w_n$ are the eigenfunctions of the harmonic oscillator Hamiltonian $H_0$,
\be
w_n = \alpha_n L^{d/2-1}_{n/2} (\xi)\ e^{-\frac{\xi^2}{2}  }\,,
\ee
where $L^{d/2-1}_{n/2} (\xi)$ are the associated Laguerre polynomials and $\alpha_n$ the normalization constants, and their energy levels are
\be
E_{(n)} = \left(n + \frac{d}{2} \right)\,.
\ee
As above, the $(H_0-E_0)$ parts of $H_2$ and $H_1$ do not contribute to (\ref{e2}) and we are left with\footnote{Generically the unperturbed ground state $w_0$ is not an eigenfunction of the perturbation $H_1$.}
\be\label{e2b}
E_2 = \left \langle w_0 \left | \(\frac{2 \, \tilde G^{(0)}_2}{\alpha'}\)^{\hspace{-4pt}-\frac12} \tilde g_2 \xi^2 \left(\Delta H_1 - E_1 \right)  + \Delta H_2  \right | w_0 \right \rangle + \sum_{n \neq 0} \frac{|\langle w_n | \Delta H_1 | w_0 \rangle |^2}{E_0-E_{(n)}}\,.
\ee
Since $\Delta H_1$ is quartic and with even powers of $\xi$, only $w_2$ and $w_4$ contribute to the result.
Then equation (\ref{energies}) gives the NNNLO correction $c_2$ to the inverse Hagedorn temperature in (\ref{nearHag}).
Performing the calculation, one gets the rather unpleasant expression
\begin{align}
	\boxed{\begin{aligned}[t]
	c_2  & =   - \frac{d(d+2)(d+4)}{16\sqrt{2}} \, \frac{ \tilde G^{(0)}_6}{\(\tilde G^{(0)}_2\)^{\hspace{-2pt}\frac32}} 
	+ \frac{d(10+9d+2d^2)}{32\sqrt{2}} \, \frac{ \(\tilde G^{(0)}_4 \)^{\hspace{-2pt}2}}{\(\tilde G^{(0)}_2\)^{\hspace{-2pt}\frac52}} - \frac{d(d+2)}{2\sqrt{2}} \, \frac{ \tilde G^{(0)}_4\,\phi_2}{\(\tilde G^{(0)}_2\)^{\hspace{-2pt}\frac32}} + \\
	& \quad + \frac{d^2(d+2)}{16\sqrt{2}} \, \frac{ \tilde G^{(0)}_4}{\(\tilde G^{(0)}_2\)^{\hspace{-2pt}\frac12}} + \frac{d}{\sqrt{2}} \, \frac{\phi_2^2}{\(\tilde G^{(0)}_2\)^{\hspace{-2pt}\frac12}} + \frac{d(d+1)}{4\sqrt{2}} \, \frac{ \tilde g^2_2}{\(\tilde G^{(0)}_2\)^{\hspace{-2pt}\frac12}} + \frac{d^2}{4\sqrt{2}} \,  \tilde g_2\,\(\tilde G^{(0)}_2\)^{\hspace{-2pt}\frac12} + \\
	& \quad + \frac{d(d+2)}{2\sqrt{2}} \, \frac{ (\phi_4 -\tilde g_4)}{\(\tilde G^{(0)}_2\)^{\hspace{-2pt}\frac12}} - \frac{d^3}{16\sqrt{2}} \(\tilde G^{(0)}_2\)^{\hspace{-2pt}\frac32}\,.	 \label{c2}
\end{aligned}}
\end{align}

\subsection{A more covariant form of the coefficients} 
\label{sec:covariant}

The coefficients $\tilde G^{(0)}_n, \tilde g_n$, etc., and so the $c_i$ coefficients in $\beta_H^2$ (\ref{nearHag}), have a covariant expression, so that one does not need to put the metric in the explicit form (\ref{metricnlo}).
As suggested in \cite{Harmark:2024ioq}, one can consider the vector associated to the compactified thermal circle, let us call it $\hat V$, normalized such that $\hat V^2=1$ at the IR point of the metric. Thus one can define the scalar quantity
\be 
G \equiv \hat V^{\mu} \hat V^{\nu}\left(g_{\mu\nu} + B_{\mu p}B_{\nu q}g^{p q} \right)\,.
\ee  
Let us also consider the normalized vector $W$ with non-vanishing component only along the direction $\rho$ on which the metric functions depend explicitly 
\be 
W= W^{\rho} \partial_{\rho} = \frac{1}{\sqrt{g_{\rho\rho}}} \partial_{\rho} \,.
\ee
Then, the $\tilde G^{(0)}_n$ coefficients are simply given by
\be 
\tilde G^{(0)}_n = \frac{1}{n!} W^{\mu_1}...W^{\mu_n} \nabla_{\mu_1}...\nabla_{\mu_n} G\, |_{IR}\,,
\ee
where $|_{IR}$ means that after the derivatives and contractions are performed, the result must be evaluated at the IR point of the metric,  say at the bottom $\rho_0$ of the space.\footnote{Note that even in the $B_{(2)}=0, V^{\mu}=\delta^{\mu }_t$ case, $\nabla_{\mu_n} G$ is the derivative of the function $G$, not the (covariant) derivative of the metric tensor component $g_{tt}$, which is automatically zero since we are using the standard Levi-Civita connection.}

For what concerns the coefficients $\tilde g_n$, one first considers the quantities
\be 
\hat g_m \equiv \frac{1}{m!} W^{\mu_1}...W^{\mu_m} \nabla_{\mu_1}...\nabla_{\mu_m} \left(g^{\rho\rho}\sqrt{|g|} \right) \, |_{IR}\,.
\ee	 
Call $d-1$ the number of such derivatives which are zero, beginning with $n=0$, and $C$ the value of the (non-zero) $d-$th derivative, $C=\hat g_d $.
Then calling $n=m-d$, for $n\geq 0$ one defines
\be 
\tilde g_{n} =  \frac{1}{C\,(n+d)!} W^{\mu_1}...W^{\mu_{n+d}} \nabla_{\mu_1}...\nabla_{\mu_{n+d}} \left(g^{\rho\rho}\sqrt{|g|} \right)\, |_{IR}\,.
\ee	
Finally, one can write\footnote{Obviously one could rewrite the results in terms of the covariant derivatives of the function $e^{-2\phi}g^{\rho\rho}\sqrt{g}$.}
\be 
\phi_n = \frac{1}{n!} W^{\mu_1}...W^{\mu_n} \nabla_{\mu_1}...\nabla_{\mu_n} \phi\, |_{IR}\,.
\ee

We are going to consider an explicit example in section \ref{sec:wymcov}, but we have explicitly checked that the results obtained with this covariant formalism are the same as the one obtained from the metric form (\ref{metricnlo}) in all the cases considered in this paper.

\subsubsection{A trivial extension: pp-waves}
\label{sec:ppwaves}

This formalism is readily extended to the RR-supported pp-wave cases, as already stressed in \cite{Harmark:2024ioq}.
The Lorentzian metric reads
\be 
ds^2 = - 2dx^+ dx^- - f^2 \left(a_1^2 \sum_{i=1}^4 x_i^2 + a_2^2 \sum_{i=5}^8 x_i^2 \right)(dx^+)^2 +\sum_{i=1}^8 dx_i^2\,.
\ee
The dilaton is constant.
In this case one finds that the problem is equivalent to eight decoupled harmonic oscillators in one dimension, with no further perturbations.
Four of these oscillators have frequencies fixed by $f a_{1}$, the other four by $f a_{2}$, where $a_1, a_2$ are just numbers and $f$ sets the scale.  
For each of these oscillators one has
\be 
\hat V = \partial_t \,, \qquad W_i = \partial_{x_i} \,,
\ee 
and
\be 
G = 1+ \frac12 f^2 a_i^2 x_i^2\,, \qquad g^{x_ix_i}\sqrt{g}=1\,.
\ee 
Thus one finds that $d=1$ and the only non-trivial coefficient is
\be 
\tilde G^{(0)}_2 = \frac{1}{2} \nabla_{x_i}\nabla_{x_i} G \, |_{x_i=0} = \frac12 f^2 a_i^2\,,
\ee 
which contributes to the coefficient $c_1$ in (\ref{nearHag}) as
\be 
-\frac{d}{\sqrt{2}} \sqrt{\tilde G^{(0)}_2} = -\frac{1}{2} f a_i\,.
\ee
Summing up the eight contributions one gets
\be 
c_0= - 2f (a_1+a_2) \,,
\ee 
and analogously
\be 
c_1= \frac12 c_0^2 = 2 f^2 (a_1+a_2)^2 \,,
\ee 
which coincide with the first terms in the expansion of the known result about the Hagedorn temperature of strings on pp-waves \cite{Harmark:2024ioq}. 
\section{Examples}
\label{sec:examples}
Generically, supergravity backgrounds are not exact solutions in $\alpha'$.
The corrections to the geometry would affect the value of the Hagedorn temperature at NNNL order \cite{Bigazzi:2023hxt}.
For this reason, we are going to limit the examples below to the NNL order, which is safe.
The exception is given by $AdS$ backgrounds, which are exact, so in those cases we present the results at NNNL order.
We are going to employ formula (\ref{nearHag}) for $\beta_H^2$ supplemented by the term which can be obtained only from the sigma-model analysis, i.~e.~the $\log{2}$ term in (\ref{geneffmass}).
Considering that for a metric in the form (\ref{metricnlo}) one has $R_{00}(0) = - d \tilde g_2^{(0)}$, this shifts
\be \label{shift1}
c_1 \rightarrow \tilde c_1 \equiv c_1 +  2 \left(d  \, \tilde{g}^{(0)}_{2}  +\frac{1}{2} |H_{(3)}(0)|_{00}^2 \right) \log{2}\,.
\ee
For $AdS$ backgrounds we can also write down the effect at NNNL order, which consists in the shift
\be \label{shift2}
c_2 \rightarrow \tilde c_2 \equiv c_2 - \frac{3}{\sqrt{2}}  \, d \, \(\tilde{G}^{(0)}_{2}\)^{\hspace{-2pt}\frac12} \left(d\tilde{g}^{(0)}_{2} +\frac{1}{2} |H_{(3)}(0)|_{00}^2   \right) \log{2}\,.
\ee
From a world-sheet perspective, $\beta_H^2$ can be found as the solution of the mass-shell condition for the winding ground state at the Hagedorn point, which is an implicit equation. This is the reason why the shift of the NNNL order coefficient $c_2$ clearly originates from lower order contributions (see the shift of $c_1$).

In fact, using (\ref{shift1}), (\ref{shift2}) and eq. (\ref{nearHag}), the Hagedorn temperature to NNNL order turns out to be given by
\be
\label{thseries}
T_H = \sqrt{\frac{T_s}{4\pi}}\left[1- \frac{c_0}{2} \alpha'^{1/2} + \frac{\alpha'}{8}(3c_0^2-4 c_1 +8 \hat R_0 \log 2) +\frac{\alpha'^{3/2}}{16}(-5 c_0^3 + 12 c_0 c_1 - 8 c_2)+\cdots \right]\,,
\ee
where
\be
\hat R_0 \equiv - \left(d  \, \tilde{g}^{(0)}_{2}  +\frac{1}{2} |H_{(3)}(0)|_{00}^2 \right)\,.
\ee
Notice that, as a result of the above shifts, no $\log 2$ term enters in the NNNL coefficient. The latter, in principle, could also contain another piece proportional to what in \cite{Harmark:2024ioq} is denoted as $C-\tilde C$, where $C=2\log 2$ is the coefficient multiplying the curvature corrections to the scalar mass, as we have computed in (\ref{geneffmass}), while $\tilde C$ weights the curvature correction to the derivative term in the scalar equation. If, as argued in \cite{Harmark:2024ioq}, all these corrections can be accounted for by a field redefinition of the metric, then $C-\tilde C=0$ and our relations (\ref{shift2}) and (\ref{thseries}) hold. The already mentioned quantum spectral curve computations of the Hagedorn temperature in both $AdS_5$ and $AdS_4$ setups precisely agree with this expectation. Here we thus assume that $C-\tilde C=0$ continues to hold for other $AdS$ cases. 

\subsection{Global $AdS$}

As a first example, let us consider the global $AdS$ cases with no $B_{(2)}$ field. 
For simplicity we fix the $AdS$ radius to unity, so that all the quantities are dimensionless; in order to restore dimensions, one has to multiply all the quantities by the appropriate power of the $AdS$ radius.
The metric reads
\be\label{metricads}
ds^2 = (\cosh{r})^2 dt^2 + dr^2 + (\sinh{r})^2 d\Omega_{d-1} + ... \,.
\ee
The internal manifold ${\cal M}$ has an $r-$independent metric (e.g.~the rigid five-sphere for $d=4$) and the dilaton is constant.
Simply expanding the hyperbolic sine and cosine, one finds for these cases that 
\bea
&& p=0\,,\quad \tilde{g}^{(0)}_{2}=1\,,\quad  \tilde{g}^{(0)}_{4}=\frac13\,,\quad   \tilde{g}^{(0)}_{6}=\frac{2}{45}\,,\quad  \tilde{g}_{2}=\frac{2 + d}{6}\,,\nonumber \\
&& \tilde{g}_4 =  \frac{-8 + 18 d + 5 d^2}{360}\,,\quad  \phi_2 = \phi_4 = b_2 = b_4 =b_6=0\,, 
\eea
such that using (\ref{c0}), (\ref{c1}), (\ref{c2}), (\ref{shift1}) and (\ref{shift2}) in (\ref{nearHag}) one gets
\be \label{finalAdS}
\beta_H^2 = \frac{4\pi}{T_s} \( 1 - \frac{d}{\sqrt{2}} \, \alpha'^{\frac12} + \frac{d(d-2)+16d\log{2}}{2^3} \, \alpha'\ -  \frac{96\, d^2 \log{2} -d(5d+2)}{2^{\frac{11}{2}}}\alpha'^{\frac32} \,  + ...\) \, ,
\ee
which is the known result \cite{maldanotes,Urbach:2022xzw,Urbach:2023npi,Ekhammar:2023glu,Bigazzi:2023hxt,Ekhammar:2023cuj}.

In the $AdS_3$ case one can add the $B_{(2)}$ form as (in Euclidean time) $B_{\varphi 0}=i \lambda (\sinh{r})^2$, where\footnote{The case $\lambda=1$ corresponds to the pure NSNS background and is special. The construction presented in section \ref{sec:effnlo} breaks down, since the zero-order problem is not a harmonic oscillator anymore, due to the cancellation of the gravitational term with the $B_{(2)}$ one. A different equation has to be considered in this case \cite{Urbach:2023npi}.} $0 \leq \lambda < 1$ and $\varphi$ parameterizes the circle $\Omega_1$ (see e.g. \cite{Urbach:2023npi}). 
Thus in this case
\be 
b_2=-\lambda^2\,,\quad b_4=-\frac{\lambda^2}{3}\,,\quad b_6=-\frac{2\lambda^2}{45}\,,
\ee
and we get
\be 
\beta_H^2 = \frac{4\pi}{T_s} \left[ 1 - \sqrt{2(1-\lambda^2)}\,\alpha'^{\frac12} +\left(4(1-\lambda^2)\log{2} - \lambda^2\right)\alpha' -  \frac{2\lambda^4+48(1-\lambda^2)^2\log{2}-3}{2^{\frac52}\sqrt{1-\lambda^2}}\, \alpha'^{\frac32} + ...\right] .
\ee 
The leading and NL terms are known \cite{Urbach:2023npi}, while the NNL and NNNL terms are new. 

\subsubsection{$AdS$ world-sheet to quartic order in the zero-modes}
\label{pureworld-sheet}



The coefficients in front of the $\log 2$-terms in the prediction \eqref{finalAdS} have been fixed from the interplay with the stringy sigma-model approach and the comparison with the numerical outcome coming from Quantum Spectral Curve techniques. Without this support, such coefficients would remain arbitrary parameters if we were dealing with the effective approach alone.

Here, we propose a pure world-sheet method showing how to get the whole result from the sigma model alone, at least up to NNLO. 
In a nutshell, our goal is to compute the first order correction $E_1$ to the ground state energy of the harmonic oscillator depicted in section \ref{sec:effnlo} by means of the VEV of $\Delta H_{(4)}$, i.e.~the contribution to the canonical string Hamiltonian $H$ that is quartic in the zero-mode fluctuations of the sigma model approach.

Given the mass-shell condition
\be
\frac{\beta_H}{2\pi\alpha'} \frac{d}{2} + \Delta E^{(1)} = \frac12 \[ \frac{2}{\alpha'} - \(\frac{\beta_H}{2\pi\alpha'}\)^{\hspace{-2pt}2} + \frac{\beta_H^2}{2\pi^2\alpha'} \, d \, \log2\] \, ,
\ee
the link between
\be
\Delta E^{(1)} = \langle 0 | \Delta H_{(4)} | 0 \rangle \, ,
\ee
and $E_1$ defined in \eqref{E1} is
\be
\label{link}
E_1 = \(\frac{\alpha'}{2}\)^{\hspace{-2pt}\frac{1}{2}} \[\Delta E^{(1)} + \frac{d^2}{4}\] \, .
\ee

Let us parameterize global-$AdS_{d+1}$ with a set of Cartesian-like coordinates as
\be
\rmd s^2 = \frac{(1+\frac14 y^2)^2}{(1-\frac14 y^2)^2} \, \rmd t^2 + \frac{\rmd y^k \rmd y^k}{(1-\frac14 y^2)^2} \, , \quad k=1,...,d \, .
\ee
Since we want to deal with something which is quartic in the fluctuations, we need to the expand the string configuration up to the same order, that is
\be
x^p=X^p+\xi^p_{(1)}+\xi^p_{(2)}+\xi^p_{(3)}+\xi^p_{(4)} \, , \quad p=0,1,...,d \, .
\ee
The subscripts denote the order of the fluctuations.

Here, we are focusing just on the zero-modes part of the above expansion. Then, each fluctuation can be expressed in terms of the lower order ones solving order by order the bosonic equations of motion
\be
-\partial_\alpha \( \eta^{\alpha\beta} \, \partial_\beta x^i \, g_{ki}(x)\) + \frac12 \partial_\alpha x^i \partial_\beta x^j \eta^{\alpha\beta} \partial_k g_{pq}(x) = 0 \, , \quad k=1,...,d \, .
\ee
In the Hagedorn regime $X^0= \rho_H \sigma$, $Y^k=0$, the equations of motion for the zero-mode part of each fluctuation are\footnote{We omit the equation for $\xi^p_{(4)}$ since, as we will see, it does not contribute to the final result in the Hagedorn regime.}
\begin{subequations}
\begin{align}
&\partial_\tau^2 \xi^k_{(1)} + \rho_H^2 \, \xi^k_{(1)} = 0 \, , \\
&\partial_\tau^2 \xi^k_{(2)} + \rho_H^2 \, \xi^k_{(2)} = 0 \, , \\
&\partial_\tau^2 \xi^k_{(3)} + \rho_H^2 \, \xi^k_{(3)} = - \frac12 \, \rho_H^2 \, \xi^i_{(1)} \, \xi^k_{(1)} \, \xi^k_{(1)} - \partial_{\tau}\xi^i_{(1)} \, \xi^k_{(1)} \, \partial_{\tau} \xi^k_{(1)} + \frac12 \, \xi^i_{(1)} \, \partial_{\tau}\xi^k_{(1)} \, \partial_{\tau}\xi^k_{(1)} \, ,
\end{align}
\end{subequations}
which are solved by
\begin{subequations}
\label{classsol}
\begin{align}
&&\xi^k_{(1)} &= i \, \sqrt{\frac{\alpha'}{2\rho_H}} \( a^k \, e^{-i \, \rho_H \, \tau} - a^{i*} \, e^{+i \, \rho_H \, \tau}\) \, , \label{prima}\\
&&\xi^k_{(2)} &=0 \, , \\
&&\xi^k_{(3)} &= -\frac18 \, \xi^i_{(1)} \, \xi^k_{(1)} \, \xi^k_{(1)} + \frac\tau2 \, \xi^i_{(1)} \,  \xi^k_{(1)} \, \partial_{\tau}\xi^k_{(1)} - \frac\tau4 \, \partial_{\tau}\xi^i_{(1)} \, \xi^k_{(1)} \, \xi^k_{(1)} +\\
&& \, &\quad+ \frac{1}{8\,\rho_H^2} \, \xi^i_{(1)} \, \partial_{\tau}\xi^k_{(1)} \, \partial_{\tau}\xi^k_{(1)} - \frac{1}{4\,\rho^2_H} \, \partial_{\tau}\xi^i_{(1)} \, \xi^k_{(1)} \, \partial_{\tau} \xi^k_{(1)} + \frac{\tau}{4\,\rho_H^2} \, \partial_{\tau}\xi^i_{(1)} \, \partial_{\tau}\xi^k_{(1)} \, \partial_{\tau} \xi^k_{(1)} \, .
\end{align}
\end{subequations}

Let us stress that these are classical solutions which are not promoted to quantum operators yet. Let us define the zero-mode contribution to the classical canonical string Hamiltonian as\footnote{The normalization has been chosen so that the comparison with the results from the effective approach is as clear as possible.}
\be
H^{\text{class}} = \frac{1}{4\pi\alpha'^2} \int_0^{2\pi} d\sigma \, \left . \delta^{\alpha\beta} \partial_\alpha x^p \, \partial_\beta x^q \, g_{pq}(x) \right|_{\text{zero-modes}} \, .
\ee
Up to quartic order in the zero-modes fluctuations, it reads
\be
H^{\text{class}} = H^{\text{class}}_{0} + \Delta H^{\text{class}}_{(4)} \, ,
\ee
where
\begin{subequations}
\begin{align}
& H^{\text{class}}_{0} = \frac{1}{2\alpha'^2} \(\partial_\tau \xi^k_{(1)} \, \partial_\tau \xi^k_{(1)} + \rho_H^2 \, \xi^k_{(1)} \, \xi^k_{(1)}\) \, , \\
& \Delta H^{\text{class}}_{(4)} = \frac{1}{2\alpha'^2} \( 2 \, \partial_\tau \xi^k_{(1)} \, \partial_\tau \xi^k_{(3)} + 2 \, \rho_H^2 \, \xi^k_{(1)} \, \xi^k_{(3)} + \frac12 \, \partial_\tau \xi^k_{(1)}  \, \partial_\tau \xi^k_{(1)} \, \xi^i_{(1)}\, \xi^i_{(1)} + \frac12 \rho^2_H \, \xi^k_{(1)} \, \xi^k_{(1)}\, \xi^i_{(1)}\, \xi^i_{(1)}  \) \, .
\end{align}
\end{subequations}

Notice that, on the classical solutions \eqref{classsol}, the above quantities are manifestly conserved. Therefore, without loss of generality, they can be evaluated at $\tau=0$.
Introducing
\be
x^i = \sqrt{\frac{\rho_H}{\alpha'}} \, \left . \xi^i_{(1)} \right |_{\tau=0} \, , \quad p^i = \frac{1}{\sqrt{\alpha' \rho_H}} \, \left . \partial_\tau \xi^i_{(1)} \right |_{\tau=0} \, ,
\ee
the classical perturbation Hamiltonian takes the form
\be
\Delta H^{\text{class}}_{(4)} = \frac18 \( x^k x^k \, x^i x^i + 2 \, x^k x^k \, p^i p^i + p^k p^k \, p^i p^i   \) \, .
\ee

Switching to the quantum world, we have to promote the coefficients $a^k$, $a^{k*}$ in (\ref{prima}) to the ladder operators $\hat a^k$, $\hat a^{k\dagger}$ such that
\be
\label{commrules}
\[\hat a^k, \hat a^{k\dagger}\] = + \delta^{ki} \, , \quad \hat a^k |0\rangle = 0 \, ,  \quad \[\hat x^k, \hat p^i\] = +i \, \delta^{ki} \, .
\ee

Due to ordering problems, we do not know \emph{a priori} what the quantum version of $\Delta H^{\text{class}}_{(4)}$ is. Agnostically, we should promote the function $x^k x^k \, p^i p^i$ to a generic linear combination of all the hermitian operators with the correct classical limit, that is
\be
x^k x^k \, p^i p^i \mapsto h_1 \, \frac{\hat x^k \hat x^k \, \hat p^i \hat p^i + \hat p^i \hat p^i \,\hat x^k \hat x^k}{2} + h_2 \, \frac{\hat x^k \hat p^i \, \hat x^k \hat p^i + \hat p^i \hat \, x^k \hat p^i \hat x^k  }{2} + h_3 \, \hat x^k \, \hat p^i \hat p^i  \, \hat x^k + h_4 \, \hat p^i \, \hat x^k \hat x^k \, p^i \, ,
\ee
where
\be
h_1 + h_2 + h_3 + h_4 = 1 \, .
\ee

Anyway, thanks to the commutation rules in \eqref{commrules}, we have
\be
2 \(\hat x^k \hat p^i \, \hat x^k \hat p^i +  \hat p^i \hat \, x^k \hat p^i \hat x^k \) = \(\hat x^k \hat x^k \, \hat p^i \hat p^i + \hat p^i \hat p^i \,\hat x^k \hat x^k\) + 2 \, \hat p^i \, \hat x^k \hat x^k \, p^i \, , \quad \hat x^k \, \hat p^i \hat p^i  \, \hat x^k = \hat p^i \, \hat x^k \hat x^k \, p^i \, .
\ee
Therefore, the most general hermitian quantum operator which reduces to $\Delta H^{\text{class}}_{(4)}$ in the classical limit is, for instance,
\be
\Delta H_{(4)} = \frac18 \[ \hat x^k \hat x^k \, \hat x^i \hat x^i + 2 \( h \, \hat x^k \, \hat p^i \hat p^i x^k + (1-h)\frac{\hat x^k \hat x^k \, \hat p^i \hat p^i + \hat p^i \hat p^i \,\hat x^k \hat x^k}{2}\) + \hat p^k \hat p^k \, \hat p^i \hat p^i \] \, .
\ee
Its VEV is easy to be computed and gives
\be
\Delta E^{(1)} = \langle 0 | \Delta H_{(4)} | 0 \rangle = \frac{d(d+2\, h)}{8} \, .
\ee

The comparison with both the effective approach through \eqref{link} and the outcome of the numerical analysis strongly suggests that the correct ordering is selected by
\be
h=1 \, .
\ee
A possible way to justify this choice is to consider that the eigenvalue equation for the winding string ground state, should correspond to the equation of motion for the thermal scalar field in the low energy effective description. In this perspective only for $h=1$ this equation might descend from a general covariant effective action.\footnote{We thank Troels Harmark for pointing out to us this possibility.}

\subsection{Witten-Yang-Mills model}
\label{sec:wym}

As a second example let us consider the WYM theory \cite{Witten:1998zw}.
In the strongly coupled large $N$ limit, the WYM theory at finite temperature (in the confining phase) is reliably described by the Type IIA supergravity theory on the background
\be \label{WbackOld}
\begin{split}
	&ds^2 =  \( \frac{u}{R} \)^{\hspace{-2pt}\frac{3}{2}} \( \eta_{\mu\nu} dx^\mu dx^\nu + \frac{4}{9 m_0^2} f(u) d\theta^2 \) + \( \frac{R}{u} \)^{\hspace{-2pt}\frac{3}{2}} \frac{du^2}{f(u)} +R^{\frac{3}{2}} u^{\frac{1}{2}} d\Omega_4^2 \, , \\
	&m_0^2 = \frac{u_0}{R^3} \, , \quad f(u) = 1 - \frac{u_0^3}{u^3} \, , \quad e^{\phi} = g_s \frac{u^{\frac{3}{4}}}{R^{\frac{3}{4}}} \, , \quad  R = \( \pi N g_s \)^{\frac{1}{3}} \alpha'^{\frac{1}{2}} \, , \quad F_4 = 3 R^3 \omega_4 \, ,
\end{split}
\ee
where $\mu$, $\nu = 0,1,2,3$ and $\omega_4$ is the volume form of the unit $S^4$. Notice that the dilaton $\phi$ has a running behavior, while the Ramond-Ramond field strength $F_4$ is constant. As usual, the time direction is compactified on a circle of length $\beta=1/T$, $u \in [ u_0, +\infty )$ is the holographic coordinate and $\theta$ is an angular coordinate such that $\theta \simeq \theta + 2\pi$. The latter two variables describe the so-called \emph{cigar} of the geometry, whose asymptotic radius is given by the inverse of the Kaluza-Klein (KK) mass scale 
\be \label{MKK}
M_{KK} = \frac{3}{2} \, m_0\,,
\ee 
and vanishes at $u=u_0$, that is at the so-called \emph{tip}. 

The region $u \sim u_0$ is dual to the IR regime of the WYM theory, exactly what we are interested in from an effective point of view. 
Passing to the coordinate $r$ to put the metric in the form (\ref{metricnlo}), one has
\be
u-u_0= \frac34  m_0 r^2 \left(1-\frac{1}{8\sqrt{u_0 R^3}}r^2 \right)\,,
\ee
and from the expansion of the metric in this coordinate one has the non-zero coefficients 
\bea \label{wymcoeffs}
&& d=2\,,\quad p=3\,,\quad  \phi_2 = \frac{9}{16 m_0 R^3}\,, \quad    \tilde{g}^{(0)}_{2}= \frac{9}{8 m_0 R^3}\,,\quad  \tilde{g}^{(0)}_{4}= \frac{9}{128 (m_0 R^3)^2}\,, \nonumber\\
&& \tilde{g}^{(\Omega)}_{2}= -\frac{1}{2 m_0 R^3}\,,\quad  \tilde{g}^{(\cal M)}_{2} =\frac{3}{8 m_0 R^3}\,.
\eea
Thus, considering that 
\be 
2\pi\alpha' T_s=\left(\frac{u_0}{R}\right)^{\hspace{-2pt}\frac32}\,,
\ee
and using (\ref{c0}), (\ref{c1}) and (\ref{shift1}) in (\ref{nearHag}) we find
\be \label{beta2wym}
\beta_H^2 = \frac{4\pi}{T_s} \( 1 - \frac{M_{KK}}{\sqrt{2\pi T_s}}+\frac{8\log{2}-1}{4} \left(\frac{M_{KK}}{\sqrt{2\pi T_s}}\right)^{\hspace{-2pt}2} + ...\) \,.
\ee
The NNL term of this expression is new.

\subsubsection{Covariant approach}
\label{sec:wymcov}

As a non-trivial example of the covariant form of the coefficients in section \ref{sec:covariant}, we consider the WYM background in the original form (\ref{WbackOld}).
In this case the metric depends on the radial variable $u$, which plays the role of $\rho$ of section \ref{sec:covariant} and has its IR value at $u_0$.
The normalized vectors are
\be 
\hat V = \left( \frac{R}{u_0}\right)^{\hspace{-2pt}\frac34} \partial_t \,, \qquad W= \( \frac{u}{R} \)^{\hspace{-2pt}\frac34} \sqrt{f(u)}\, \partial_u \,,
\ee
and the functions to be derived are
\be 
G = \left( \frac{R}{u_0}\right)^{\hspace{-2pt}\frac32} \left( \frac{u}{R}\right)^{\hspace{-2pt}\frac32}\,, \qquad g^{uu}\sqrt{|g|} = \frac{u^{\frac{13}{4}} \sqrt{u^3-u_0^3}}{R^{\frac34}} \,, \qquad \phi = \log{\left[g_s \left(\frac{u}{R} \right)^{\hspace{-2pt}\frac34}\right]}\,.
\ee

Deriving $G$, one finds
\bea 
&& \tilde G^{(0)}_1 = W^{u} \nabla_{u} G\, |_{u=u_0} = 0\,, \nonumber \\
&& \tilde G^{(0)}_2 = \frac{1}{2} (W^{u})^2 \nabla_{u}\nabla_{u} G \, |_{u=u_0} = \frac{9}{8 m_0 R^3 }\,,  \nonumber \\
&& \tilde G^{(0)}_3 = \frac16 (W^{u})^3 \nabla_{u}\nabla_{u}\nabla_{u} G\, |_{u=u_0} = 0\,, \nonumber \\
&& \tilde G^{(0)}_4 = \frac{1}{24} (W^{u})^4 \nabla_{u}\nabla_{u}\nabla_{u}\nabla_{u} G \, |_{u=u_0} = \frac{9}{128 (m_0 R^3)^2}\,. 
\eea
Concerning the coefficients $\tilde g_n$, while $g^{uu}\sqrt{|g|}=0$ at $u=u_0$, the first derivative is non-trivial
\be 
C = W^{u}\nabla_{u} \left(g^{uu}\sqrt{|g|} \right)\, |_{u=u_0} = \frac{3 u_0^{\frac92}}{2 R^{\frac32}}\,,
\ee
so that $d-1=1$.
Then one finds
\bea
&& \tilde g_{1} = \frac{1}{2C} (W^{u})^2 \nabla_{u}\nabla_{u} \left(g^{uu}\sqrt{|g|} \right)\, |_{u=u_0} =0\,, \nonumber \\
&& \tilde g_{2} = \frac{1}{6C} (W^{u})^3 \nabla_{u}\nabla_{u}\nabla_{u} \left(g^{uu}\sqrt{|g|} \right)\, |_{u=u_0} = \frac{11}{4 m_0 R^3}\,.
\eea
Finally
\be
\phi_1 =  W^{u}\nabla_{u} \phi\, |_{u=u_0}=0\,, \qquad 
\phi_2 = \frac{1}{2} (W^{u})^2\nabla_{u}\nabla_{u} \phi\, |_{u=u_0} = \frac{9}{16 m_0 R^3} \,.
\ee
All of these results coincides with the ones obtained above in (\ref{wymcoeffs}).

\subsection{Maldacena-N\'u\~nez and its softly broken susy deformation}

As a third example, let us consider the Maldacena-Nu\~nez (MN) model \cite{Maldacena:2000yy,Chamseddine:1997nm} and its soft supersymmetry breaking deformation \cite{Gubser:2001eg,Evans:2002mc}. The metric reads
\be
ds^2 = e^{\phi(\rho)} \left[ dx_{\mu} dx^{\mu} + \alpha' N \left( d\rho^2 +  e^{2g(\rho)}(d\theta_1^2 + \sin^2{\theta_1} d\varphi_1^2) + \frac14 \sum_a (w^a - A^a)^2 \right) \right]\,,
\ee
with $SU(2)$ gauge field (with generators $\sigma^a$)
\be 
A= \frac12 \left[\sigma^1 a(\rho) d\theta_1 + \sigma^2 a(\rho) \sin{\theta_1} d\varphi_1 + \sigma^3 \cos{\theta_1} d\varphi_1  \right]\,,
\ee
and one-forms
\be 
w^1 + i w^2 = e^{-i\psi} (d\theta_2 + i \sin{\theta_2}d\varphi_2)\,.
\ee
The functions of $\rho$ appearing in the formulas above have the following expressions in the supersymmetric MN case
\bea
e^{2\phi(\rho)} & = & g_s^2 \frac{\sinh{2\rho}}{2 e^{g(\rho)}}\,, \\
e^{2g(\rho)} & = & \rho \coth{2\rho} - \frac{\rho^2}{\sinh^2{2\rho}} -\frac14 \,, \\
a(\rho) & = &  \frac{2\rho^2}{\sinh{2\rho}}\,.
\eea 
In the softly-broken case their IR asymptotics read\footnote{The quartic term of the dilaton is new and it is determined by solving equations (76) of \cite{Gubser:2001eg} at the next order with respect to the solution in (78) of that paper. Note that there is a typo (a missing derivative on $w$) in the second term of the equation for $w$ in (76).}
\bea
\phi(\rho) & = & \log{g_s} +  \left(\frac13 +\frac{b^2}{4} \right)\rho^2 + 4 \left(\frac{b^4}{160}-\frac{b^3}{40}-\frac{b^2}{72}-\frac{1}{135}\right) \rho ^4 + ...\,, \\
e^{g(\rho)} & = & \rho - \left(\frac19 +\frac{b^2}{4} \right)\rho^3 + ... \,, \\
a(\rho) & = & 1-b \rho^2+ ...\,.
\eea 
In these expressions the parameter $b \in [0,2/3]$ measures the departure from the supersymmetric configuration, which is attained for $b=2/3$.
In the following we consider the general case with unfixed $b$.

The six-dimensional part of the geometry consists of a non-trivial fibration.
Employing for example the expression of the IR metric in formula (10) of \cite{Bigazzi:2004yt}, one gets for the determinant of the metric the expansion
\be\label{sqrt6}
\sqrt{|g|} = C r^2 \left(1+\frac{15 b^2+32 }{36 (g_s N \alpha')}\,  r^2 + ...   \right)\,,
\ee
where $r$ is related to the $\rho$ radial variable as
\be
\rho = \frac{1}{\sqrt{g_s N \alpha'}} r -\frac{\frac13 +\frac{b^2}{4}}{6(g_s N \alpha')^{\frac32}} r^3 +...\,.
\ee
By also expanding the $g_{tt}$ component of the metric one has 
\be
 d=3\,,\quad  \phi_2 = \tilde{g}^{(0)}_{2}= \frac{\frac13 +\frac{b^2}{4}}{\alpha' N g_s}\,,\quad   \tilde{g}^{(0)}_{4}= \frac{51b^4-144b^3-40b^2-16}{1440 (\alpha' N g_s)^2}\,,
\quad \tilde{g}_{2} =\frac{32+15b^2}{36\alpha' N g_s}\,.
\ee
With these values and using the relations for the string tension and KK scale\footnote{Note that these values do not depend on the parameter $b$.} 
\be 
2\pi\alpha' T_s= g_s\,,\qquad M_{KK}=\frac{1}{\sqrt{N \alpha'}}\,,
\ee
one gets from  (\ref{c0}), (\ref{c1}), (\ref{shift1})
and (\ref{nearHag})  
\bea
\beta_H^2 & = & \frac{4\pi}{T_s} \Bigl[ 1 - \sqrt{\frac{12+9b^2}{8}}\,\frac{M_{KK}}{\sqrt{2\pi T_s}} +  \\
&& \quad \quad +\left( \frac{243 b^4+432 b^3+888 b^2+368}{576 b^2 +768 }+\frac{1}{2} \left(3 b^2+4\right) \log{2} \right) \left(\frac{M_{KK}}{\sqrt{2\pi T_s}}\right)^{\hspace{-2pt}2} + ...\Bigr] \,. \nonumber
\eea
In the supersymmetric MN case $b=2/3$ this reads
\be
\beta_H^2 = \frac{4\pi}{T_s} \( 1 - \sqrt{2}\,\frac{M_{KK}}{\sqrt{2\pi T_s}}+ \frac{11+32\log{2}}{12}\left(\frac{M_{KK}}{\sqrt{2\pi T_s}}\right)^{\hspace{-2pt}2} + ...\) \,.
\ee
Again, the NNL term of this expression is new.

\subsection{Klebanov-Strassler}
The Klebanov-Strassler model \cite{Klebanov:2000hb} has string-frame metric
\bea 
ds^2 & = & g_s^{\frac12} \Biggl[ h(\tau)^{-\frac12} dx_{\mu}dx^{\mu} +  \\
&& + h(\tau)^{\frac12} \frac{\epsilon^{\frac43}}{2}K(\tau) \left[\frac{d\tau^2 + g_5^2}{3K(\tau)^3} +\cosh^2{\left(\frac{\tau}{2}\right)}\,(g_3^2+g_4^2)+\sinh^2\left({\frac{\tau}{2}}\right)\,(g_2^2+g_3^2)\right]  \Biggr]\,,\nonumber
\eea
where the $g_i$ are a basis of one-forms and the expression of the function $K(\tau)$ is
\be 
K(\tau) = \frac{(\sinh{(2\tau)}-2\tau)^{\frac13}}{2^{\frac13}\sinh{\tau}}\,.
\ee 
The function $h(\tau)$ reads
\be 
h(\tau) = (g_s M \alpha')^2 2^{\frac23} \epsilon^{-\frac83} \int_{\tau}^{\infty} dx \frac{x \coth{x}-1}{\sinh^2{x}} (\sinh{(2x)}-2x)^{\frac13} \,.
\ee 
Note that $\epsilon^{\frac23}$ has the dimension of a length.

One can pass to the standard radius $r$ of (\ref{metricnlo}) with the change of variable 
\be 
\tau = \frac{2^{\frac23} 3^{\frac16}}{a_0^{\frac14}g_s^{\frac34}\sqrt{ M \alpha'}}\, r - \frac{6^{\frac43} a_0-5}{2^{\frac13} 3^{\frac{11}{6}} 5 a_0^{\frac74} g_s^{\frac94} ( M \alpha')^{\frac32}}\, r^3 + ...\,,
\ee 
where $a_0 \sim 0.718$ is the value of the integral in $h(0)$.

Considering that the dilaton is constant, the data one can extract from these expressions are 
\bea 
&& d=3\,,\quad \phi_2 = 0\,, \quad  \tilde{g}^{(0)}_{2}= \frac{1}{3 a_0^{\frac32}g_s^{\frac32} M \alpha'}\,, \quad  \tilde{g}^{(0)}_{4}= \frac{55-42\cdot 6^{\frac13}a_0}{270 a_0^3 g_s^3 ( M \alpha')^2}\,,\nonumber \\ 
&& \tilde{g}_{2} =\frac{6^{\frac43} a_0-1}{18 a_0^{\frac32} g_s^{\frac32} M \alpha'}\,.
\eea
Using the following relations for the string tension and KK scale \cite{Klebanov:2000hb}
\be  
2\pi\alpha' T_s= \frac{\epsilon^{\frac43}}{2^{\frac13}a_0^{\frac12}g_s^{\frac12} M \alpha'}\,,\qquad  M_{KK}=\frac{\epsilon^{\frac23}}{g_s M \alpha'}\,, 
\ee
one gets from  (\ref{c0}), (\ref{c1}),  (\ref{shift1}) and (\ref{nearHag}) 
\be
\beta_H^2 = \frac{4\pi}{T_s} \( 1 - \frac{\sqrt{3}}{2^{\frac23}a_0}\,\frac{M_{KK}}{\sqrt{2\pi T_s}}+ \frac{6^{\frac43} a_0-5+2^5\log{2}}{2^{\frac{13}{3}} a_0^{2}}\left(\frac{M_{KK}}{\sqrt{2\pi T_s}}\right)^2 + ...\) \,.
\ee
Again, the NNL term of this expression is new.

\subsection{New confining backgrounds from wrapped D5-branes}

In this section we consider the confining theories whose dual backgrounds have been recently derived in \cite{Nunez:2023xgl} from wrapped D5-branes.
For these theories there is no available study of the Hagedorn temperature yet.

The first example is a background dual to a confining $(2+1)$-dimensional field theory, whose metric is
\be 
ds^2 = \rho \left[ dx_{\mu} dx^{\mu} + \left( 1- \frac{m}{\rho^2} \right) d\mu^2 + \frac{2}{\rho^2 - m} d\rho^2 +d\theta^2 + \sin^2{\theta}d\varphi^2 + \sum_{i=1}^3 (\Theta^i)^2    \right]\,,
\ee
where $\{\Theta^i\}$ is a basis of one-forms.
The IR tip of the geometry is attained at $\rho = \sqrt{m}$.
The dilaton reads
\be 
\phi= \log{\left( \frac{4}{N}\rho \right)}\,.
\ee

The change of radial variable to the standard $r$ coordinate in (\ref{metricnlo}) reads
\be 
\rho -\sqrt{m} = \frac14 r^2 -\frac{1}{96\sqrt{m}} r^4 + ... \,.
\ee 
Then we find
\be
d=2\,,\quad \phi_2 = \tilde{g}^{(0)}_{2}= \frac{1}{4\sqrt{m}}\,, \quad \tilde{g}^{(0)}_{4}= -\frac{1}{96m}\,, \quad\tilde{g}_{2} =\frac{11}{12\sqrt{m}}\,, 
\ee 
from which, considering that 
\be 
2\pi T_s=\sqrt{m}\,, \qquad M_{KK}=\frac{1}{\sqrt{2}}\,,
\ee
we get
\be
\beta_H^2 = \frac{4\pi}{T_s} \( 1 - \frac{M_{KK}}{\sqrt{2\pi T_s}} + \frac{8\log{2}-1}{4}\left(\frac{M_{KK}}{\sqrt{2\pi T_s}}\right)^{\hspace{-2pt}2} + ...\) \,.
\ee
It is curious to note that the result coincides with the one for the WYM theory (\ref{beta2wym}).

Analogously, the background dual to a $(4+1)$-dimensional confining theory has metric
\be 
ds^2 = \rho \left[ dx_{\mu} dx^{\mu} + f_s(\rho) d\varphi^2 + \frac{N}{\rho^2 f_s(\rho)} d\rho^2 + \frac{N}{4} \left[ \omega_1^2 + \omega_2^2 + \left( \omega_3 -\sqrt{\frac{8}{N}} Q \zeta(\rho) d\varphi \right)^{\hspace{-2pt}2} \right]   \right] \,,
\ee
with one-forms $\omega_i$ and
\be 
f_s(\rho)= 1- \frac{m}{\rho^2}-\frac{2Q^2}{\rho^4}=\frac{(\rho^2-\rho_+^2)(\rho^2-	\rho_-^2)}{\rho^4}\,,\quad  \zeta(\rho) = \frac{1}{\rho^2} - \frac{1}{\rho_+^2}\,, \quad 2\rho_{\pm}^2 = m \pm \sqrt{m^2+8Q^2}\,,
\ee
while the dilaton is simply
\be 
\phi= \log{\rho}\,.
\ee
The metric has its IR tip at $\rho=\rho_+$.

The change of radial variable now is
\be 
\rho - \rho_+ =  \frac{\rho_+^2-\rho_-^2}{2N \rho_+^2}r^2 - \frac{5\rho_-^4+\rho_+^4-6\rho_-^2\rho_+^2}{24 N^2 \rho_+^5}r^4 + ...\,.
\ee
We get
\be 
d=2\,,\quad \phi_2 = \tilde{g}^{(0)}_{2}= \frac{\rho_+^2- \rho_-^2}{2N\rho_+^3}\,,\quad   \tilde{g}^{(0)}_{4}= -\frac{\rho_+^4+5\rho_-^4 -6\rho_+^2\rho_-^2}{24N^2\rho_+^6}\,, \quad  \tilde{g}_{2} =\frac{11\rho_+^2- 7\rho_-^2}{6N\rho_+^3}\,, 
\ee 
and, considering that 
\be 
2\pi T_s=\rho_+\,,\qquad M_{KK}=\frac{\rho_+^2-\rho_-^2}{\sqrt{N}\rho_+^2}\,, 
\ee
the result for the inverse Hagedorn temperature can be written as
\be
\beta_H^2 = \frac{4\pi}{T_s} \( 1 - b\,\frac{M_{KK}}{\sqrt{2\pi T_s}}+ \left(\frac34 b^2-b^4+2\log{2}\right)\left(\frac{M_{KK}}{\sqrt{2\pi T_s}}\right)^{\hspace{-2pt}2} + ...\) \,,
\ee
where we have defined the parameter 
\be 
b \equiv \frac{\rho_+}{\sqrt{\rho_+^2-\rho_-^2}}\,,
\ee
which is equal to $1/\sqrt{2}$ in the supersymmetric case $m=0$ ($\rho_+^2=-\rho_-^2$).

\section{Summary and conclusions}
\label{sec:conclu}
In this paper we have extended the recent progress in calculating the string Hagedorn temperature $T_H$ on non-trivial spaces to a very general class of backgrounds dual to ``confining'' theories, be them CFTs on spheres or truly confining models.
These backgrounds can include non-trivial RR-forms, dilaton and NSNS field.
We have derived general formulas for the value of $T_H$ up to NNNL order in the $\alpha'$ expansion (see section \ref{sec:results}) by exploiting two complementary approaches - the string sigma model one and the ``thermal scalar'' effective one.
In both cases $T_H$ is defined as the temperature at which the lightest mode winding the temporal direction becomes tachyonic. 

In the sigma-model approach in sections \ref{sec:quadmodel} and \ref{sec:extrapol} we have considered the quadratic fluctuations of the string modes around general classical configurations which admit a limit to the Hagedorn regime.
This allows for the calculation of the zero-point energy and consequently the energy of the lightest mode winding the temporal direction.
While the expression for the squared sum of bosonic fluctuation masses is well known, we have provided the explicit derivation of the corresponding squared sum of fermionic fluctuation masses, which is not easily found in the literature.

The analysis of the string quadratic fluctuations is enough to provide the full LO and NLO values of $T_H$, and, crucially, a $\log{2}$ term at NNLO which is not fixed by the effective approach. 
In principle, the sigma-model should allow to derive the complete NNLO result.
In the case of $AdS$ backgrounds we have indeed shown that the structure of the quartic fluctuations of the zero-modes encodes the full NNLO result. 

Interestingly, 
the NNLO term (but not the $\log{2}$ term!) is provided by the effective ``thermal scalar'' approach of section \ref{sec:effnlo}. 
The latter gives also the full LO and NLO terms and, combined with the $\log{2}$ term mentioned above, provides also the NNNLO result.
In this approach the mode becoming tachyonic at $T_H$ is described by an effective scalar field.
Vanishing of its bare mass provides the LO result, while quantum corrections extend the results to higher orders.
At NLO its equation of motion always gives a harmonic oscillator, while at higher orders one can simply use perturbation theory.

As already stressed, the combination of the two approaches gives easily the value of $T_H$ up to NNNLO for very general backgrounds.
In the last part of the paper (section \ref{sec:examples}) we have provided a number of examples.
In particular, we have derived the Hagedorn temperature up to NNNLO for the NSNS-supported $AdS_3$ background.
For the non-$AdS$ examples, we have limited the presentation of results to the NNLO, since one expects $\alpha'$ corrections to the geometry to come into play at NNNLO.
In turn, we have provided the NNLO value of $T_H$ for the WYM, (softly broken) MN and KS models, and for some confining theories recently constructed by means of wrapped D5-branes.

Most of the examples we have considered are not integrable theories.
Thus, on one hand the results of this paper do not constitute further precision tests of holography as the ones pertaining to the $AdS$ cases \cite{Ekhammar:2023cuj,Bigazzi:2004yt,Ekhammar:2023glu}.
On the other hand, they provide predictions for the value of the Hagedorn temperature in the strong coupling regime of the dual QFTs up to second order corrections, which is quite non-trivial.

It would be clearly interesting to extend the results of this paper to even more general models and, possibly, to some other regimes of parameters.
Finally, one might consider a different route to the sigma-model calculation where instead of beginning with a semiclassical configuration and extrapolate the results to the Hagedorn regime, one starts directly with a near-Hagedorn configuration.
The results of this approach, which are fully compatible with the ones of this paper up to NNLO, will be presented in \cite{w}. 

\vskip 15pt \centerline{\bf Acknowledgments} \vskip 10pt 

\noindent 
We are grateful to Andrea Cavagli\'a, Matteo Ciardi, Lorenz Eberhardt, Valentina Giangreco M.~Puletti, Troels Harmark, Domenico Seminara, and Dima Sorokin for comments and very helpful discussions. 


\appendix

\section{Notations and conventions}
\label{app:notations}

For the reader's convenience, we resume in this appendix the conventions and notations we have adopted throughout the paper. Here, we will refer to a generic $d$-dimensional background whose metric $g_{\mu\nu}$ has determinant $g$ and having $n_t$ time-like directions. Everything will be expressed in a covariant way, whereby nothing will depend on what the time-like directions are.

\subsubsection*{Antisymmetrizer and symmetrizer}
Let $\Omega$ be an object which depends in some way on a set of indices $\{\mu_1,...,\mu_k\}$, that is
\be
\Omega=\Omega\(\{\mu_1,...,\mu_k\}\) \, .
\ee
We define the symmetrizer $(...)$ as the operation
\be
\Omega\(\(\{\mu_1,...,\mu_k\}\)\) = \frac{1}{k!} \sum_{\pi} \Omega\(\{\mu_{\pi(1)},...,\mu_{\pi(k)}\}\) \, ,
\ee
where the sum runs over all the possible permutations $\pi$ of $k$ objects. 
Similarly, we define the antisymmetrizer $[...]$ as the operation
\be
\Omega\(\[\{\mu_1,...,\mu_k\}\]\) = \frac{1}{k!} \sum_{\pi} \sign\(\pi\) \Omega\(\{\mu_{\pi(1)},...,\mu_{\pi(k)}\}\) \, ,
\ee
being $\sign\(\pi\)$ the parity of the permutation $\pi$.

\subsubsection*{The Levi-Civita symbol}

We define the target space Levi-Civita symbol $\varepsilon_{\mu_1...\mu_d}$ as the total antisymmetric symbol such that
\be
\varepsilon_{1...d} = +1 \, .
\ee
All the other components are related to the above one by an even or odd permutation of the indices. Further,
\be
\varepsilon^{1\,...\,d} = g^{\mu_1 1} \cdots g^{\mu_d d} \, \varepsilon_{\mu_1...\mu_d} = g^{-1} \, .
\ee
Thus, also the symbol $g \, \varepsilon^{\mu_1\, ... \, \mu_d}$ has non-zero components equal to $\pm 1$ and it follows
\be
\varepsilon^{\mu_1 \, ... \,  \mu_p \nu_1 \, ... \, \nu_q} \, \varepsilon_{\mu_1 \, ... \, \mu_p \kappa_1 \, ... \, \kappa_q} = \frac{p! \, q!}{g} \, \delta^{[\nu_1}{}_{\kappa_1} \cdots \delta^{\nu_q]}{}_{\kappa_q} \, .
\ee

\subsubsection*{Operation with differential forms}

Now, let us recall well-known formulae about differential forms. Let $\alpha$ and $\beta$ be respectively a $k$-form and $l$-form, that is
\be
\alpha = \frac{1}{k!} \, \alpha_{\mu_1...\mu_k} \, \rmd x^{\mu_1} \wedge ... \wedge \rmd x^{\mu_k} \, , \quad \beta = \frac{1}{l!} \, \beta_{\mu_1...\mu_l} \, \rmd x^{\mu_1} \wedge ... \wedge \rmd x^{\mu_l} \, .
\ee

We define their wedge product as
\be
\(\alpha \wedge \beta\)_{\mu_1...\mu_k\nu_1...\nu_l} = \frac{(k+l)!}{k! \, l!} \, \alpha_{[\mu_1...\mu_k} \, \beta_{\nu_1...\nu_l]} \, ,
\ee
while the exterior derivative as
\be
\(\rmd \alpha\)_{\rho\mu_1...\mu_k} = \frac{(k+1)!}{k!} \, \partial_\rho \alpha_{\mu_1 ... \mu_k} \, .
\ee

Then, 
we define the Hodge star operator as
\begin{subequations}
\begin{align}
&(*\alpha)_{\nu_1...\nu_{d-k}} = \frac{1}{k!} \, \sqrt{|g|} \, \alpha_{\mu_1...\mu_k} \, \varepsilon^{\mu_1...\mu_k}{}_{\nu_1...\nu_{d-k}} \, , \\
&*\hspace{-2pt}* \, \alpha = \sign(g) (-1)^{k(d-k)} \, \alpha \, , \\
&\label{covdiv}\(*d\hspace{-2pt}*\alpha\)_{\mu_1...\mu_{k-1}} = \sign (g) (-1)^{(k-1)(d-k)} \frac{1}{\sqrt{|g|}} \, \partial_\rho \(\sqrt{|g|} \, \alpha^{\rho\nu_1...\nu_{k-1}}\) g_{\nu_1\mu_1} \cdots g_{\nu_{k-1}\mu_{k-1}}\, .
\end{align} 
\end{subequations}
Here, $\sign(g)$ stands for the sign of determinant of the background metric. Finally, let us stress that these properties rely on the conventions for the Levi-Civita symbol we have reported in the previous subsection.

\subsubsection*{Gamma matrices}
We define the set of Gamma matrices $\{\Gamma^1, \ldots , \Gamma^d\}$ as the generators of a matrix representation of the Clifford algebra $\text{Cl}_{n_t,d-n_t}(\mathbb R)$. Therefore, they satisfy the anticommutation relations
\be
\{\Gamma_\mu, \Gamma_\nu\} = 2 \, g_{\mu\nu} \, \mathbb I_{2^{d/2}} \, .
\ee
Moreover, it holds that
\begin{subequations}
\begin{align}
& \Gamma_\mu^\dagger = - \Gamma_\mu \, , \quad \Gamma^2_\mu = - \mathbb I_{2^{d/2}} \quad \text{if $p$ is time-like} \, , \\
& \Gamma_\mu^\dagger = + \Gamma_\mu \, , \quad \Gamma^2_\mu = + \mathbb I_{2^{d/2}} \quad \text{if $p$ is space-like} \, .
\end{align}
\end{subequations}

In even dimension, we can also define a chirality operator $\Gamma^{(d+1)}$ as
\be
\Gamma^{(d+1)} = \(-i\)^{\frac{d}{2} + n_t} \frac{1}{d!} \, \varepsilon_{\mu_1 ... \mu_d} \, \Gamma^{\mu_1...\mu_d} \, ,
\ee
being
\be
\Gamma^{\mu_1...\mu_d} = \Gamma^{[\mu_1} \cdots \Gamma^{\mu_d]}= \frac{1}{d!} \sum_{\pi} \sign\(\pi\) \Gamma^{\mu_{\pi\(1\)}} \cdot ... \cdot \Gamma^{\mu_{\pi\(d\)}}\, .
\ee
It is such that
\be \label{chiralityproperties}
\(\Gamma^{(d+1)}\)^\dagger = \Gamma^{(d+1)} \, , \quad \{\Gamma^{(d+1)}, \Gamma^\mu\} = 0 \, , \, \forall \, \mu \, , \quad \(\Gamma^{(d+1)}\)^2 = \mathbb I_{2^{d/2}} \, .
\ee
Given these properties, it follows that there exist two eigenspaces of spinors corresponding to positive and negative eigenvalues of $\Gamma^{(d+1)}$. Finally, it holds that
\be \label{gammadualityreversed}
\Gamma^{\mu_1 ... \mu_k} = (-1)^{n_t} \, \frac{i^{\frac{d}{2}+n_t}}{(d-k)!} \, \varepsilon^{\mu_1 ... \mu_d} \, \Gamma^{(d+1)} \, \Gamma_{\mu_d ... \mu_{k+1}} \, .
\ee

All these properties are nicely collected in \cite{Kuusela:2019iok}. There, the reader can find further details on Gamma matrices and the associated Clifford algebra. What we reported here is just what we need for this work. Let us stress that the definitions of this subsection do not depend on the conventions for the Levi-Civita tensor. Indeed, everything has been written in a covariant way.

\section{Type II equations of motion in string frame}
\label{app:sugraeom}
The Einstein equations in Ricci form for type II theories in string frame can be written, using the dilaton equation of motion, as (see e.g. \cite{Hamilton:2016ito})
\be
R_{pq}-\frac12 |H_{(3)}|_{pq}^2 - \frac12 e^{2\phi} \sum_n \{G_{(n)}\}^2_{pq} = -2 \, \nabla_p\nabla_q \phi\,,
\label{Einsteq}
\ee
where we are considering the {\it non-democratic} formalism so that in type IIA $n=2,4$, while in type IIB $n=1,3,5$. Moreover we have defined
\bea
\{G_{(n)}\}^2_{pq}&=& |G_{(n)}|^2_{pq} - \frac12 g_{pq} |G_{(n)}|^2\,,\quad (n\neq 5)\,\nonumber \\
\{G_{(5)}\}^2_{pq}&=& \frac12 |G_{(5)}|^2_{pq}\,,
\label{eomsymb}
\eea
where for a generic $n$-form $W_{(n)}$
\bea
|W_{(n)}|^2_{pq} &=& \frac{1}{(n-1)!} W_{p p_2\cdots p_n} W_{q}^{p_2\cdots p_n}\,,\nonumber \\
|W_{(n)}|^2&=& \frac{1}{n!} W_{p_1 p_2\cdots p_n} W^{p_1 p_2\cdots p_n}\,.
\eea
Let us also recall that
\bea
G_{(1)}&=&F_{(1)}\,,\quad G_{(2)}=F_{(2)}\,,\nonumber \\
G_{(3)} &=& F_{(3)} - C_{(0)} H_{(3)}\,,\nonumber \\
G_{(4)} &=& F_{(4)} - C_{(1)}\wedge H_{(3)}\,,\nonumber \\
G_{(5)} &=& F_{(5)} - \frac12 H_{(3)}\wedge C_{(2)} +\frac12 F_{(3)}\wedge B_{(2)}\,.
\eea
As usual, the self-duality condition $G_{(5)}=* G_{(5)}$, which implies $|G_{(5)}|^2 =0$, is implemented at the level of the equations of motion.

The dilaton equation reads
\be
\Delta\phi= 2 |d\phi|^2 -\frac12 |H_{(3)}|^2 + \frac{e^{2\phi}}{4}\sum_n (5-n)|G_{(n)}|^2\,.
\ee
Finally, the Maxwell equations in the IIA case are
\begin{align}
\label{IIAmaxwell}
&d * G_{(2)}= H_{(3)} \wedge *G_{(4)} \,, \quad d * G_{(4)}=-H_{(3)} \wedge G_{(4)} \,,\nonumber \\
&d\(e^{-2\phi} *H_{(3)}\) = \frac12 G_{(4)} \wedge G_{(4)} - G_{(2)} \wedge *G_{(4)} \,.
\end{align}
On the other hand, in the IIB case they read
\begin{align}
\label{IIBmaxwell}
&d * G_{(1)}=- H_{(3)} \wedge *G_{(3)} \,, \quad d * G_{(3)}=- H_{(3)} \wedge G_{(5)}\,, \quad d * G_{(5)}= H_{(3)} \wedge G_{(3)} \,,\nonumber \\
&d\(e^{-2\phi} *H_{(3)}\) = G_{(1)} \wedge *G_{(3)} + G_{(3)} \wedge *G_{(5)} \,.
\end{align}

In the \emph{democratic} formalism equation \eqref{Einsteq} still holds, as long as we replace (\ref{eomsymb}) with
\be
\{G_{(n)}\}^2_{pq}= \frac12 |G_{(n)}|^2_{pq} \, , \quad \forall \, n = 1,...,9 \, .
\ee
Moreover, in this alternative formalism, the relevant Maxwell equation for our purposes takes the form
\be
\label{demomaxwell}
d\(e^{-2\phi} *H_{(3)}\) = \frac12 \sum_n G_{(n)} \wedge *G_{(n+2)} \, ,
\ee
where the sum runs over $n=2,4,6,8$ in the IIA case or $n=1,3,5,7,9$ in the IIB one. For the latter two formulae, see e.~g.~\cite{Cassani:2008rb,Baron:2023oxk}.


\section{Useful formulae for Gamma matrices}
\label{app:useful}
In this appendix, we aim to collect and prove some useful formulae. The building blocks for what follows are
\begin{align}
&\widetilde \Gamma_I = \frac{1}{16} \, e^{\phi} \sum_n \, (-1)^{f_{(n,I)}}  \, \frac{1}{n!} \, G^{(n)}_{\underline{a_1...a_n}} \, \Gamma^{\underline{a_1...a_n}} \, , \nonumber  \\
& \widetilde\Gamma_{I\alpha} = \frac18 \, e^{\phi} \, \partial_\alpha X^p \, e^{\underline p}{}_{p} \, \sum_n (-1)^{f_{(n,I)}} \, \frac{1}{(n-1)!} \,  G^{(n)}_{\underline{p \, a_2 ... a_n}} \,  \Gamma^{\underline{a_2 ... a_n}} \, , \nonumber
\end{align}
with
\be
f_{(n,I)} = I \[\frac{n(n-1)}{2} + n +1\] + n+1 \, .
\ee
Here, $I$ takes the value $1$ or $2$, while $n=2,4,6,8$ for the IIA case and $n=1,3,5,7,9$ for the IIB one.

Let us start with
\begin{lemma} \label{lemma1}
Given a subset $\{\Gamma^{\underline{a_1}},..., \Gamma^{\underline{a_{2m-1}}}\}$ of the $d$-dimensional Gamma matrices in flat space with odd cardinality, we have
\benn
\{\Gamma^{\underline{a_1}} \cdot\cdot\cdot \Gamma^{\underline{a_{2m-1}}}, \Gamma_{\underline q}\} = \sum_{i=1}^{2m-1} (-1)^{i+1} \, 2 \, \delta^{\underline{a_i}}{}_{q} \prod_{\substack{j=1 \\ j \neq i}}^{2m-1} \Gamma^{\underline{a_j}} \, .
\eenn
\end{lemma}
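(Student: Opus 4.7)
My plan is to prove the identity by induction on $m$, or equivalently (and more concretely) by simply commuting $\Gamma_{\underline q}$ all the way through the product $\Gamma^{\underline{a_1}} \cdots \Gamma^{\underline{a_{2m-1}}}$, using the Clifford relation in the form
\[
\Gamma^{\underline{a_i}} \Gamma_{\underline q} \;=\; -\,\Gamma_{\underline q}\,\Gamma^{\underline{a_i}} \;+\; 2\,\delta^{\underline{a_i}}{}_{\underline q} \, .
\]
Starting from $\Gamma^{\underline{a_1}} \cdots \Gamma^{\underline{a_{2m-1}}} \Gamma_{\underline q}$, I apply this relation successively at positions $i=2m-1, 2m-2, \ldots, 1$. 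At each step the transposition generates a ``delta'' remainder term in which the matrix $\Gamma^{\underline{a_i}}$ is excised from the product and replaced by the factor $2\delta^{\underline{a_i}}{}_{\underline q}$, while the ``commuted'' term picks up an extra minus sign and sends $\Gamma_{\underline q}$ one position further to the left.

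After completing all $2m-1$ swaps, the fully-commuted term is $(-1)^{2m-1}\,\Gamma_{\underline q}\,\Gamma^{\underline{a_1}} \cdots \Gamma^{\underline{a_{2m-1}}} = -\Gamma_{\underline q}\,\Gamma^{\underline{a_1}} \cdots \Gamma^{\underline{a_{2m-1}}}$, where the crucial minus sign is precisely due to the odd cardinality $2m-1$ of the product. Bringing this term to the left-hand side therefore reconstructs the full anticommutator $\{\Gamma^{\underline{a_1}} \cdots \Gamma^{\underline{a_{2m-1}}},\,\Gamma_{\underline q}\}$, whose value then equals the sum of all the remainder terms collected along the way.

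It remains to track the sign carried by the $i$-th remainder. The remainder produced when $\Gamma_{\underline q}$ sits just to the right of $\Gamma^{\underline{a_i}}$ has been subjected to $2m-1-i$ prior anticommutations, giving a prefactor $(-1)^{2m-1-i}$. Since $(-1)^{2m-1-i}=(-1)^{i+1}$, the total reads
\[
\{\Gamma^{\underline{a_1}} \cdots \Gamma^{\underline{a_{2m-1}}},\,\Gamma_{\underline q}\}
\;=\; \sum_{i=1}^{2m-1} (-1)^{i+1}\, 2\,\delta^{\underline{a_i}}{}_{\underline q}\,\prod_{\substack{j=1\\j\neq i}}^{2m-1} \Gamma^{\underline{a_j}} \, ,
\]
which is exactly the claim.

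There is no real obstacle here; the only subtlety is bookkeeping the alternating signs and verifying that the oddness of $2m-1$ is what allows the ``commuted'' term to combine with the original product into the anticommutator (had the cardinality been even, the same manipulation would instead reproduce the commutator). An equivalent inductive presentation uses the identity $\{AB,C\} = A\{B,C\} - [A,C]B$ together with the base case $m=1$, i.e.\ $\{\Gamma^{\underline{a_1}},\Gamma_{\underline q}\}=2\delta^{\underline{a_1}}{}_{\underline q}$; the inductive step then shifts the index labelling by one, reproducing the alternating factor $(-1)^{i+1}$.
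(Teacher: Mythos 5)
Your proof is correct, and it rests on the same mechanism as the paper's: anticommuting $\Gamma_{\underline q}$ past each factor via the Clifford relation and collecting the $2\delta^{\underline{a_i}}{}_{\underline q}$ remainders with alternating signs (the paper merely packages this as an induction in $m$ using $\{AB,C\}=A[B,C]+\{A,C\}B$, adding two Gamma matrices per step, while you perform the sweep in a single pass). Your sign bookkeeping $(-1)^{2m-1-i}=(-1)^{i+1}$ and the observation that the oddness of $2m-1$ is what turns the fully-commuted term into the anticommutator are both accurate.
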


\begin{proof}
Let us prove it by induction. For $m=1$ the above statement holds trivially thanks the defining property
\be \label{flatclifford}
\{\Gamma_{\underline p} , \Gamma_{\underline q}\} = 2 \, \eta_{\underline{pq}} \, \mathbb{I}_{2^{d/2}} \, .
\ee

Now, let us assume it holds for a certain $m$. Using
\be \label{ABCABCACB}
\{A \cdot B , C \} = A \[ B, C \] + \{A, C\} B \, ,
\ee
it follows that
\begin{align}
\{\Gamma^{\underline{a_1}} \cdot\cdot\cdot \Gamma^{\underline{a_{2m+1}}}, \Gamma_{\underline q}\} &= \Gamma^{\underline{a_1}} \cdot\cdot\cdot \Gamma^{\underline{a_{2m}}} \[\Gamma^{\underline{a_{2m+1}}}, \Gamma_{\underline q}\] + \{\Gamma^{\underline{a_1}} \cdot\cdot\cdot \Gamma^{\underline{a_{2m}}}, \Gamma_{\underline q}\} \Gamma^{\underline{a_{2m+1}}} \nonumber \\
\, &= 2 \, \Gamma^{\underline{a_1}} \cdot\cdot\cdot \Gamma^{\underline{a_{2m+1}}} \Gamma_{\underline q} - \, \Gamma^{\underline{a_1}} \cdot\cdot\cdot \Gamma^{\underline{a_{2m}}} \{\Gamma_{\underline q} , \Gamma^{\underline{a_{2m+1}}}\} + \nonumber\\
\, & \quad + \Gamma^{\underline{a_1}} \cdot\cdot\cdot \Gamma^{\underline{a_{2m-1}}} \[\Gamma^{\underline{a_{2m}}}, \Gamma_{\underline q}\] \Gamma^{\underline{a_{2m+1}}} 
 + \{\Gamma^{\underline{a_1}} \cdot\cdot\cdot \Gamma^{\underline{a_{2m-1}}}, \Gamma_{\underline q}\} \Gamma^{\underline{a_{2m}}} \Gamma^{\underline{a_{2m+1}}} \nonumber \\
\, & = \sum_{i=1}^{2m-1} (-1)^{i+1} \, 2 \, \delta^{\underline{a_i}}{}_{q} \prod_{\substack{j=1 \\ j \neq i}}^{2m-1} \Gamma^{\underline{a_j}} \Gamma^{\underline{a_{2m}}} \Gamma^{\underline{a_{2m+1}}} + \nonumber \\
\, & \quad - 2 \, \Gamma^{\underline{a_1}} \cdot\cdot\cdot \Gamma^{\underline{a_{2m-1}}} \, \delta^{\underline{a_{2m}}}{}_{q} \Gamma^{\underline{a_{2m+1}}} + 2 \, \Gamma^{\underline{a_1}} \cdot\cdot\cdot \Gamma^{\underline{a_{2m}}} \, \delta^{\underline{a_{2m+1}}}{}_{q} \nonumber \\
\, & = \sum_{i=1}^{2m+1} (-1)^{i+1} \, 2 \, \delta^{\underline{a_i}}{}_{q} \prod_{\substack{j=1 \\ j \neq i}}^{2m+1} \Gamma^{\underline{a_j}}\,. \nonumber
\end{align}
This concludes the proof.\\
\end{proof}

\begin{corollary} \label{anticommIIA}
Given a subset $\{\Gamma^{\underline{a_1}},..., \Gamma^{\underline{a_{2m}}}\}$ of the $d$-dimensional Gamma matrices in flat space with even cardinality, we have
\benn
\{\Gamma^{\underline{a_1}} \cdot\cdot\cdot \Gamma^{\underline{a_{2m}}}, \Gamma_{\underline q}\} = 2 \, \Gamma^{\underline{a_1}} \cdot\cdot\cdot \Gamma^{\underline{a_{2m}}} \, \Gamma_{\underline q} + \sum_{i=1}^{2m} (-1)^{i+1} \, 2 \, \delta^{\underline{a_i}}{}_{q} \prod_{\underset{j \neq i}{j=1}}^{2m} \Gamma^{\underline{a_j}} \, .
\eenn
\end{corollary}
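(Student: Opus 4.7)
The plan is to derive the corollary directly from Lemma \ref{lemma1} by peeling off the last Gamma matrix, rather than running an independent induction. I would set $A=\Gamma^{\underline{a_1}}\cdots\Gamma^{\underline{a_{2m-1}}}$ (an \emph{odd} product, to which Lemma \ref{lemma1} applies), $B=\Gamma^{\underline{a_{2m}}}$ and $C=\Gamma_{\underline q}$, and then invoke the elementary identity \eqref{ABCABCACB}, namely $\{AB,C\}=A[B,C]+\{A,C\}B$.

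Next I would compute the two pieces on the right-hand side. Using the flat-space Clifford relation \eqref{flatclifford} one has $\Gamma_{\underline q}\Gamma^{\underline{a_{2m}}}=2\delta^{\underline{a_{2m}}}{}_{\underline q}-\Gamma^{\underline{a_{2m}}}\Gamma_{\underline q}$, hence $A[B,C]=2\,\Gamma^{\underline{a_1}}\cdots\Gamma^{\underline{a_{2m}}}\Gamma_{\underline q}-2\delta^{\underline{a_{2m}}}{}_{\underline q}\,\Gamma^{\underline{a_1}}\cdots\Gamma^{\underline{a_{2m-1}}}$. For the remaining piece, Lemma \ref{lemma1} supplies $\{A,C\}$ in closed form, and right-multiplication by $B=\Gamma^{\underline{a_{2m}}}$ simply extends each surviving product by one Gamma matrix, producing $\sum_{i=1}^{2m-1}(-1)^{i+1}\,2\delta^{\underline{a_i}}{}_{\underline q}\prod_{j=1,\,j\neq i}^{2m}\Gamma^{\underline{a_j}}$.

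To conclude, I would absorb the stray inhomogeneous term $-2\delta^{\underline{a_{2m}}}{}_{\underline q}\,\Gamma^{\underline{a_1}}\cdots\Gamma^{\underline{a_{2m-1}}}$ into the previous sum as its $i=2m$ contribution: since $(-1)^{2m+1}=-1$ and $\prod_{j=1,\,j\neq 2m}^{2m}\Gamma^{\underline{a_j}}=\Gamma^{\underline{a_1}}\cdots\Gamma^{\underline{a_{2m-1}}}$, both the sign and the product match exactly, so the upper limit of summation is promoted from $2m-1$ to $2m$. What remains is the ``free'' piece $2\,\Gamma^{\underline{a_1}}\cdots\Gamma^{\underline{a_{2m}}}\Gamma_{\underline q}$, giving the claimed formula. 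Since Lemma \ref{lemma1} already carries all the inductive weight, there is no genuine obstacle here; the only step requiring minor care is the bookkeeping of the sign $(-1)^{2m+1}$ that allows the isolated $\delta^{\underline{a_{2m}}}{}_{\underline q}$ term to fuse with the Lemma \ref{lemma1} sum.
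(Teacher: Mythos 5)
Your proof is correct and follows essentially the same route as the paper, which simply states that the corollary ``follows directly from the previous lemma, using again formulae \eqref{flatclifford} and \eqref{ABCABCACB}'': you peel off $\Gamma^{\underline{a_{2m}}}$ via $\{AB,C\}=A[B,C]+\{A,C\}B$, apply Lemma \ref{lemma1} to the odd product, and correctly fuse the leftover $-2\delta^{\underline{a_{2m}}}{}_{\underline q}$ term into the sum as the $i=2m$ contribution. The sign bookkeeping $(-1)^{2m+1}=-1$ is right, so nothing is missing.
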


\begin{proof}
It follows directly from the previous lemma, using again formulae \eqref{flatclifford} and \eqref{ABCABCACB}.\\
\end{proof}

\begin{lemma}
\label{lemma2}
Using the notation declared at the beginning of this appendix, it holds that
\benn
\{\widetilde\Gamma_I, \Gamma_\alpha\} = \(1+(-1)^{\mathfrak p}\) \widetilde\Gamma_I\Gamma_\alpha + \widetilde\Gamma_{I\alpha} \, ,
\eenn
being $\mathfrak p = 0$ for the IIA case, while $\mathfrak p = 1$ for the IIB one.
\end{lemma}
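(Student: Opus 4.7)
The strategy is to expand the anticommutator term-by-term in the $n$-sum defining $\widetilde\Gamma_I$ and to apply Lemma \ref{lemma1} in the IIB case (odd $n$) or Corollary \ref{anticommIIA} in the IIA case (even $n$). Since $\Gamma_\alpha = \partial_\alpha X^q e^{\underline p}{}_q \Gamma_{\underline p}$, we may pull the coordinate dependence outside and reduce the problem to computing $\{\Gamma^{\underline{a_1}}\cdots\Gamma^{\underline{a_n}},\Gamma_{\underline p}\}$ contracted with the totally antisymmetric tensor $G^{(n)}_{\underline{a_1\ldots a_n}}$.

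In the IIA case ($n$ even, $\mathfrak{p}=0$), Corollary \ref{anticommIIA} produces two contributions: a piece $2\,\Gamma^{\underline{a_1}}\cdots\Gamma^{\underline{a_n}}\Gamma_{\underline p}$ which, after reassembling the sum over $n$ and the prefactors, is exactly $2\,\widetilde\Gamma_I\Gamma_{\underline p}$, and a sum $\sum_i (-1)^{i+1}2\,\delta^{\underline{a_i}}{}_{\underline p}\prod_{j\ne i}\Gamma^{\underline{a_j}}$. In the IIB case ($n$ odd, $\mathfrak{p}=1$), Lemma \ref{lemma1} gives only the second type of contribution. The unified coefficient $(1+(-1)^{\mathfrak{p}})$ in front of $\widetilde\Gamma_I\Gamma_\alpha$ appears automatically from this case distinction.

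The remaining step is to show that the contracted sum $\frac{1}{n!}G^{(n)}_{\underline{a_1\ldots a_n}}\sum_i (-1)^{i+1}\delta^{\underline{a_i}}{}_{\underline p}\prod_{j\ne i}\Gamma^{\underline{a_j}}$ reproduces, after multiplication by $\partial_\alpha X^q e^{\underline p}{}_q$ and summation over $n$, exactly $\widetilde\Gamma_{I\alpha}$. For each fixed $i$ I would use the total antisymmetry of $G^{(n)}$ to move the index $\underline p$ to the leftmost slot, picking up a sign $(-1)^{i-1}$ which cancels the $(-1)^{i+1}$ from the lemma. After relabeling the remaining dummy indices $(a_1,\ldots,\hat a_i,\ldots,a_n)\to(b_2,\ldots,b_n)$, each of the $n$ summands becomes identical and collapses to $n\cdot G^{(n)}_{\underline{p\,b_2\ldots b_n}}\Gamma^{\underline{b_2\ldots b_n}}$ (where the antisymmetry of $G^{(n)}$ projects the bare product of Gammas onto the totally antisymmetric $\Gamma^{\underline{b_2\ldots b_n}}$). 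The factor $n$ combines with $1/n!$ to give $1/(n-1)!$, and the overall $2/16=1/8$ together with $e^\phi$ and $\partial_\alpha X^q e^{\underline p}{}_q$ reproduces precisely the definition of $\widetilde\Gamma_{I\alpha}$.

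The only delicate bookkeeping is the sign $(-1)^{f_{(n,I)}}$ and the parity assignment relating the theory (IIA vs.\ IIB) to the parity of $n$; these pass through the manipulation untouched because they multiply the overall $n$-summand and are insensitive to the action of $\Gamma_{\underline p}$. I expect this sign/parity tracking to be the main (though purely mechanical) obstacle, since one must verify carefully that the antisymmetry argument works uniformly in $n$ and that the $i$-dependence cancels as claimed.
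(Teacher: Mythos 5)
Your proposal is correct and follows essentially the same route as the paper's own proof: apply Lemma \ref{lemma1} (odd $n$, IIB) and Corollary \ref{anticommIIA} (even $n$, IIA) to the bare product of Gammas contracted with the totally antisymmetric $G^{(n)}$, then use antisymmetry to move the contracted index to the first slot so that the sign $(-1)^{i-1}$ cancels $(-1)^{i+1}$ and the $i$-sum collapses to $n$ identical terms, with $2\cdot n/(16\,n!)=1/(8\,(n-1)!)$ reproducing the prefactor of $\widetilde\Gamma_{I\alpha}$. The sign and parity bookkeeping you flag as the main obstacle indeed passes through untouched, exactly as in the paper.
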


\begin{proof}
It is a direct consequence of the previous Lemma and its Corollary. To begin with, let us stress that $G^{(n)}_{\underline{a_1...a_n}}$ is totally antisymmetric. Therefore, we have
\benn
G^{(n)}_{\underline{a_1...a_n}} \, \Gamma^{\underline{a_1...a_n}} = G^{(n)}_{\underline{a_1...a_n}} \, \Gamma^{\underline{a_1}} \cdot\cdot\cdot \Gamma^{\underline{a_n}} \, .
\eenn
For the same reason,
\begin{align}
\sum_{i=1}^{n} (-1)^{i+1} \, G^{(n)}_{\underline{a_1...a_{i-1}a_ia_{i+1}...a_n}} \, \delta^{\underline{a_i}}{}_{p} \, \Gamma^{\underline{a_1}} \cdot\cdot\cdot \Gamma^{\underline{a_{i-1}}}\Gamma^{\underline{a_{i+1}}} \cdot\cdot\cdot \Gamma^{\underline{a_n}} &= \sum_{i=1}^{n} G^{(n)}_{\underline{a_i \, b_2...b_n}} \, \delta^{\underline{a_i}}{}_{p} \, \Gamma^{\underline{b_2 ... b_n}} \nonumber \\
&= n \, G^{(n)}_{\underline{p \, b_2...b_n}} \, \Gamma^{\underline{b_2 ... b_n}} \nonumber
\end{align}
With this hints in mind, the derivation of the claim is straightforward.\\
\end{proof}

\begin{lemma}\label{reversing}
Reversing the order in the product of $n$ different Gamma matrices results in
\benn
\Gamma^{a_1 a_2 ... a_n} = (-1)^{\frac{n(n-1)}{2}} \, \Gamma^{a_n a_{n-1} ... a_1} \, .
\eenn
\end{lemma}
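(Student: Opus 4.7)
The plan is to reduce the statement to a counting exercise for transpositions of distinct Clifford generators, which is essentially bubble-sort on $n$ letters. Since the indices $a_1,\ldots,a_n$ are pairwise distinct, the total antisymmetrization in the definition of $\Gamma^{a_1 a_2 \cdots a_n}$ is redundant: one has $\Gamma^{a_1 a_2 \cdots a_n}=\Gamma^{a_1}\Gamma^{a_2}\cdots\Gamma^{a_n}$, because the anticommutation relation $\{\Gamma^{a},\Gamma^{b}\}=2g^{ab}\mathbb{I}$ gives $\Gamma^{a}\Gamma^{b}=-\Gamma^{b}\Gamma^{a}$ as soon as $a\neq b$, so the ordinary product is already totally antisymmetric in the indices. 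After this reduction, the claim reads $\Gamma^{a_1}\cdots\Gamma^{a_n}=(-1)^{n(n-1)/2}\,\Gamma^{a_n}\cdots\Gamma^{a_1}$, and the sign is just the parity of the reverse permutation on $n$ elements, which has $\binom{n}{2}=n(n-1)/2$ inversions.

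The cleanest way I would write this is by induction on $n$. The base case $n=1$ is trivial, and $n=2$ is a direct restatement of the anticommutation of distinct generators. For the inductive step, assume the identity for $n$. Then, using distinctness of $a_{n+1}$ from each of $a_1,\ldots,a_n$,
\begin{align*}
\Gamma^{a_1}\cdots\Gamma^{a_n}\Gamma^{a_{n+1}}
&=(-1)^{n(n-1)/2}\,\Gamma^{a_n}\cdots\Gamma^{a_1}\Gamma^{a_{n+1}}\\
&=(-1)^{n(n-1)/2}\,(-1)^{n}\,\Gamma^{a_{n+1}}\Gamma^{a_n}\cdots\Gamma^{a_1}\\
&=(-1)^{(n+1)n/2}\,\Gamma^{a_{n+1}}\Gamma^{a_n}\cdots\Gamma^{a_1},
\end{align*}
where in the middle step I commute $\Gamma^{a_{n+1}}$ past $n$ distinct generators, picking up a sign $(-1)^n$, and in the last step I use the elementary identity $\tfrac{n(n-1)}{2}+n=\tfrac{n(n+1)}{2}$. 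This closes the induction.

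There is essentially no obstacle here: the only subtlety worth flagging is the initial justification that the antisymmetrized symbol collapses to the ordered product when the indices are distinct, which is what allows the transposition count to be applied directly. If one prefers a non-inductive route, one can instead invoke the fact that any permutation $\sigma\in S_n$ acts on $\Gamma^{a_1}\cdots\Gamma^{a_n}$ by the sign $\mathrm{sgn}(\sigma)$ (again because the indices are distinct), and the reverse permutation $\sigma(i)=n+1-i$ has sign $(-1)^{n(n-1)/2}$.
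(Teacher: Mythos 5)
Your argument is correct and rests on the same combinatorial fact as the paper's proof: reversing $n$ letters is a permutation with $\binom{n}{2}=n(n-1)/2$ inversions. The difference is in how that count is brought to bear. The paper works directly with the antisymmetrized symbol $\Gamma^{a_1\cdots a_n}$ and uses only its total antisymmetry under index exchange, cycling $a_n$, then $a_{n-1}$, and so on to the front and accumulating $(-1)^{(n-1)+(n-2)+\cdots+1}$; this never touches the Clifford relations and in fact makes the identity valid for arbitrary index values (if two indices coincide, both sides vanish). You instead first collapse the antisymmetrized symbol to the ordered product $\Gamma^{a_1}\cdots\Gamma^{a_n}$ and then count transpositions of the individual generators. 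That reduction is the one place to be careful: $\{\Gamma^{a},\Gamma^{b}\}=2g^{ab}\,\mathbb{I}$ gives $\Gamma^{a}\Gamma^{b}=-\Gamma^{b}\Gamma^{a}$ only when $g^{ab}=0$, so ``different'' must be read as mutually orthogonal (as it is for the flat, underlined frame indices to which the paper applies these lemmas), not merely distinct as labels in a general metric. With that caveat your induction is clean and closes correctly via $\tfrac{n(n-1)}{2}+n=\tfrac{n(n+1)}{2}$; the paper's route is marginally more economical because it needs neither the distinctness nor the orthogonality assumption.
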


\begin{proof}
Thanks to the antisymmetrization of the indices we have
\benn
\Gamma^{a_1 ... a_n} = (-1)^{n-1} \, \Gamma^{a_n a_1 ... a_{n-1}} = (-1)^{n-1} (-1)^{n-2} \, \Gamma^{a_n a_{n-1} a_1 ... a_{n-2}} \, .
\eenn
Iterating this process, we finally get
\benn
\Gamma^{a_1 ... a_n} = (-1)^{(n-1)+(n-2)+\ldots+(n-(n-1))} \, \Gamma^{a_n a_{n-1}  ... a_1} \, .
\eenn
Clearly, the exponent of the above $(-1)$ is given by the sum of $n-1$ contributions, that is
\benn
\sum_{i=1}^{n-1} (n-i) = \frac{n(n-1)}{2} \, .
\eenn
This concludes the proof.
\end{proof}

\begin{corollary} \label{gammaduality}
In even dimension $d$, given a background with $n_t$ time-like directions, there exists a chirality operator $\Gamma^{(d+1)}$ which satisfies
\be
\Gamma^{a_1 ... a_n} \, \Gamma^{(d+1)} = (-1)^{\frac{d(d-1)}{2}+n_t} \, i^{\frac{d}{2}+n_t}\, (-1)^{\frac{n(n-1)}{2}} \,\frac{1}{(d-n)!} \, \varepsilon^{a_1 ... a_n \,b_1 ... b_{d-n}} \, \Gamma_{{b_1 ... b_{d-n}}} \, .
\ee
\end{corollary}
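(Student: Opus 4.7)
The plan is to derive the identity directly from equation \eqref{gammadualityreversed}, which already expresses $\Gamma^{\mu_1\cdots\mu_k}$ in terms of the chirality operator $\Gamma^{(d+1)}$ contracted with the Levi-Civita symbol, together with a string of lower-index Gamma matrices in reversed order. So the work is essentially bookkeeping of two sign factors: one coming from moving $\Gamma^{(d+1)}$ past the trailing Gamma string, and another coming from reordering that Gamma string by Lemma \ref{reversing}.

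Concretely, first I would multiply both sides of \eqref{gammadualityreversed} on the right by $\Gamma^{(d+1)}$. On the left this produces the target combination $\Gamma^{\mu_1\cdots\mu_k}\Gamma^{(d+1)}$. On the right, the chirality operator sitting between $\varepsilon$ and $\Gamma_{\mu_d\cdots\mu_{k+1}}$ must be moved past that Gamma string; since $\{\Gamma^{(d+1)},\Gamma^\mu\}=0$ by \eqref{chiralityproperties} and the string has $d-k$ factors, this costs a sign $(-1)^{d-k}$. The two chirality operators then collide and square to the identity, again by \eqref{chiralityproperties}.

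Next, I would apply Lemma \ref{reversing} to the reversed string $\Gamma_{\mu_d\mu_{d-1}\cdots\mu_{k+1}}$, which has $d-k$ indices, producing the factor $(-1)^{(d-k)(d-k-1)/2}$ and leaving $\Gamma_{\mu_{k+1}\cdots\mu_d}$. Relabelling $(\mu_1,\ldots,\mu_k)\to(a_1,\ldots,a_n)$ and $(\mu_{k+1},\ldots,\mu_d)\to(b_1,\ldots,b_{d-n})$ gives exactly the shape of the claimed right-hand side.

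The only remaining (and main) task is to check that the accumulated sign matches the one in the statement. I collect the exponents: from \eqref{gammadualityreversed} there is $n_t$, from anticommuting $\Gamma^{(d+1)}$ there is $d-n$, and from Lemma \ref{reversing} there is $(d-n)(d-n-1)/2$. Their sum simplifies to $n_t+(d-n)(d-n+1)/2$. Expanding $(d-n)(d-n+1)=d(d+1)+n(n-1)-2dn$ and using that $d$ is even (so $d(1-n)$ is even) yields $(d-n)(d-n+1)/2 \equiv d(d-1)/2 + n(n-1)/2\pmod 2$, exactly the exponent appearing in the corollary. The $i^{d/2+n_t}$ factor and the $1/(d-n)!$ are carried through unchanged, completing the identification. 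The only subtlety worth spelling out is the parity argument in this last step, which crucially uses that $d$ is even; otherwise all steps are mechanical.
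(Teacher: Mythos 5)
Your proposal is correct and follows exactly the route the paper intends: the paper's proof is the one-line remark that the corollary follows from \eqref{gammadualityreversed} together with Lemma \ref{reversing} and the properties \eqref{chiralityproperties}, and you have simply supplied the sign bookkeeping — the $(-1)^{d-n}$ from anticommuting $\Gamma^{(d+1)}$ through $d-n$ Gamma matrices before it squares to the identity, the $(-1)^{(d-n)(d-n-1)/2}$ from reversing the string, and the mod-2 identity $(d-n)(d-n+1)/2\equiv d(d-1)/2+n(n-1)/2$ for even $d$. The parity check at the end is the only nontrivial step and you have done it correctly.
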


\begin{proof}
It is a direct consequence of the above Lemma applied to formula \eqref{gammadualityreversed}, given also the properties in \eqref{chiralityproperties}.
\end{proof}

\begin{corollary} \label{Gamma11duality}
The duality among RR-field strengths in the democratic formalism realizes in
\benn
\frac{1}{n!} \, \cancel{G}_{(n)} \Gamma^{(11)} = - \frac{1}{(10-n)!} \, \cancel{G}_{(10-n)} \, .
\eenn
It then follows that
\benn
\widetilde \Gamma_I \, \Gamma^{(11)} = (-1)^{I \, (\mathfrak p + 1) + 1} \, \widetilde \Gamma_I \, .
\eenn
\end{corollary}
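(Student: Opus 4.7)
The plan is to derive the first identity as a direct rephrasing of the Hodge-duality constraint \eqref{dualityconstraint} in terms of Gamma matrices, and then obtain the second identity by a reindexing of the sum that defines $\widetilde\Gamma_I$.

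For the first identity, I would start from Corollary \ref{gammaduality} specialized to $d=10$ with a single time-like direction, so $n_t=1$. The overall constant becomes $(-1)^{d(d-1)/2+n_t}\,i^{d/2+n_t}=(-1)^{46}\,i^{6}=-1$, and the Corollary reads
\[
\Gamma^{\underline{a_1\ldots a_n}}\,\Gamma^{(11)} \;=\; -\,(-1)^{\frac{n(n-1)}{2}}\,\frac{1}{(10-n)!}\,\varepsilon^{\underline{a_1\ldots a_n}\,\underline{b_1\ldots b_{10-n}}}\,\Gamma_{\underline{b_1\ldots b_{10-n}}}\,.
\]
Contracting with $\tfrac{1}{n!}G_{(n)\,\underline{a_1\ldots a_n}}$ and using the component form of the Hodge star given in appendix \ref{app:notations} (which, in flat tangent-space indices with $\sqrt{|g|}=1$, identifies $\tfrac{1}{n!}G_{(n)}\cdot \varepsilon$ with $*G_{(n)}$), the right-hand side is recognized as $-(-1)^{n(n-1)/2}$ times $\tfrac{1}{(10-n)!}$ times the slashed Hodge dual $\cancel{\,*G_{(n)}\,}$. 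Now inserting the duality constraint $*G_{(n)} = (-1)^{n(n-1)/2}G_{(10-n)}$ from \eqref{dualityconstraint} makes the two $(-1)^{n(n-1)/2}$ factors cancel and yields exactly
\[
\frac{1}{n!}\,\cancel{G}_{(n)}\,\Gamma^{(11)} \;=\; -\,\frac{1}{(10-n)!}\,\cancel{G}_{(10-n)}\,.
\]

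For the second identity, I would substitute the first into the definition of $\widetilde\Gamma_I$, obtaining
\[
\widetilde\Gamma_I\,\Gamma^{(11)} \;=\; -\,\frac{1}{16}\,e^{\phi}\sum_n (-1)^{f_{(n,I)}}\,\frac{1}{(10-n)!}\,\cancel{G}_{(10-n)}\,.
\]
Then I would reindex with $m=10-n$. The summation range is preserved by this involution in both theories (IIA: $\{2,4,6,8\}\leftrightarrow\{8,6,4,2\}$; IIB: $\{1,3,5,7,9\}\leftrightarrow\{9,7,5,3,1\}$), so the only task is to compare $(-1)^{f_{(10-m,I)}+1}$ with $(-1)^{f_{(m,I)}}$. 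A straightforward mod-$2$ computation of $f_{(10-m,I)}-f_{(m,I)}=I(55-11m)+(10-2m)\equiv I(1+m)\pmod 2$ shows that the required sign is $(-1)^{I(1+m)+1}$; since $m$ is even in IIA ($\mathfrak p=0$) and odd in IIB ($\mathfrak p=1$), one has $Im\equiv I\mathfrak p\pmod 2$, which gives precisely the overall factor $(-1)^{I(\mathfrak p+1)+1}$ claimed in the Corollary.

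The main obstacle is bookkeeping of signs: the signature-dependent prefactor coming from Corollary \ref{gammaduality}, the antisymmetrization factor $(-1)^{n(n-1)/2}$ from reversing Gamma-matrix products (Lemma \ref{reversing}), and the same combinatorial factor appearing in \eqref{dualityconstraint} must conspire to cancel, which they do precisely because the democratic Hodge-duality constraint is engineered so that these contributions match. I would also be careful that the Levi-Civita \emph{symbol} versus \emph{tensor} density distinction in the paper's conventions (\S\ref{app:notations}) does not introduce spurious factors of $\sqrt{|g|}$; in our tangent-frame setting with flat indices these factors are unity, so the identification with $*G$ is clean.
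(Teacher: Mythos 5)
Your proposal is correct and follows essentially the same route as the paper: specialize Corollary \ref{gammaduality} to $d=10$, $n_t=1$, recognize the contraction with the Levi-Civita symbol as the component form of $*G_{(n)}$, and invoke the duality constraint \eqref{dualityconstraint} so that the two factors of $(-1)^{n(n-1)/2}$ cancel. The only difference is that the paper dismisses the second identity as ``a direct implication,'' whereas you carry out the reindexing $m=10-n$ and the mod-$2$ evaluation of $f_{(10-m,I)}-f_{(m,I)}$ explicitly, which is a welcome and correct completion of the same argument.
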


\begin{proof}
It is an application of formula \eqref{gammaduality} for $d=10$ and $n_t=1$. Indeed, given
\benn
\cancel{G}_{(n)} \Gamma^{(11)} = - \cancel{G}^{(n)}_{\underline{a_1 ... a_n}} \Gamma^{\underline{a_1 ... a_n}} \Gamma^{(11)} = - \cancel{G}^{(n)}_{\underline{a_1 ... a_n}} (-1)^{\frac{n(n-1)}{2}} \,\frac{1}{(10-m)!} \, \varepsilon^{\underline{a_1 ... a_n \,b_1 ... b_{10-n}}} \, \Gamma_{b_1 ... b_{10-n}} \, ,
\eenn
we can recognize
\benn
n! \(*G_{(n)}\)_{\underline c_1 ... \underline c_{10-n}} = \cancel{G}^{(n)}_{\underline{a_1 ... a_n}} \, \varepsilon^{\underline{a_1 ... a_n}}{}_{\underline{b_1 ... b_{10-n}}} \, .
\eenn
Therefore, from the duality relation in \eqref{dualityconstraint}, the first statement of the corollary follows readily. The second one is just a direct implication.
\end{proof}

Now, let us discuss very important results for this work concerning the trace of particular combinations of gamma matrices.
\begin{lemma}
\label{lemma3}
\benn
\eta^{\alpha\beta} \, \Tr \(\widetilde\Gamma_1 \widetilde\Gamma_{2\alpha} \, \Gamma_\beta\) = \frac12 \, \eta^{\alpha\beta} \, \Tr \( \widetilde\Gamma_{1\alpha} \widetilde\Gamma_{2\beta}\) = \eta^{\alpha\beta} \, \Tr \(\widetilde\Gamma_2 \widetilde\Gamma_{1\alpha} \, \Gamma_\beta\)\, .
\eenn
\end{lemma}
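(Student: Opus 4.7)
The plan is to reduce everything to two universal building blocks by systematically eliminating $\widetilde\Gamma_{I\alpha}$ in favor of $\widetilde\Gamma_I$ and $\Gamma_\alpha$ via Lemma \ref{lemma2}. Rearranging that Lemma gives
\be
\widetilde\Gamma_{I\alpha} \;=\; \Gamma_\alpha \widetilde\Gamma_I - (-1)^{\mathfrak p}\,\widetilde\Gamma_I \Gamma_\alpha\,,
\ee
which becomes a commutator in IIA and an anticommutator in IIB. I would then introduce the two scalars
\be
A \;\equiv\; \eta^{\alpha\beta}\,\Tr\!\(\widetilde\Gamma_1 \Gamma_\alpha \widetilde\Gamma_2 \Gamma_\beta\)\,, \qquad B \;\equiv\; (\eta^{\alpha\beta} g_{\alpha\beta})\,\Tr\!\(\widetilde\Gamma_1 \widetilde\Gamma_2\)\,,
\ee
and express each of the three quantities in the statement in terms of $A$ and $B$. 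The contraction identity $\eta^{\alpha\beta}\Gamma_\alpha\Gamma_\beta = \eta^{\alpha\beta}g_{\alpha\beta}\,\mathbb I_{32}$ (following from the Clifford relation in \eqref{cliffordv2}) is what collapses several terms to $B$, and the cyclicity of the trace together with the symmetry of $\eta^{\alpha\beta}$ handles the rest.

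Concretely, for the first quantity, plugging the rewritten Lemma \ref{lemma2} gives
\be
\eta^{\alpha\beta}\Tr\!\(\widetilde\Gamma_1 \widetilde\Gamma_{2\alpha}\Gamma_\beta\) = A - (-1)^{\mathfrak p} B\,.
\ee
For the third quantity, the same substitution produces an analogous expression, and using cyclicity to move $\widetilde\Gamma_1$ past the combination $\Gamma_\alpha\Gamma_\beta$ (which is symmetric in $(\alpha,\beta)$ once contracted with $\eta^{\alpha\beta}$), together with the relabeling $\alpha\leftrightarrow\beta$, shows that it also equals $A - (-1)^{\mathfrak p} B$. This immediately gives the third equality. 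For the middle quantity, one expands $\widetilde\Gamma_{1\alpha}$ first, producing a term $\eta^{\alpha\beta}\Tr(\Gamma_\alpha\widetilde\Gamma_1\widetilde\Gamma_{2\beta})$ which, after cycling $\Gamma_\alpha$ to the right and relabeling, reduces to $A - (-1)^{\mathfrak p} B$, and a term $-(-1)^{\mathfrak p}\eta^{\alpha\beta}\Tr(\widetilde\Gamma_1\Gamma_\alpha\widetilde\Gamma_{2\beta})$ which, after expanding $\widetilde\Gamma_{2\beta}$ using Lemma \ref{lemma2} again, gives $-(-1)^{\mathfrak p}(B - (-1)^{\mathfrak p}A) = A - (-1)^{\mathfrak p} B$. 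Summing yields $2(A - (-1)^{\mathfrak p}B)$, which is exactly twice the first quantity.

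The main potential pitfall is bookkeeping: the sign $(-1)^{\mathfrak p}$ enters several times with opposite roles in the two relations used, so one must carefully apply Lemma \ref{lemma2} in the correct direction (either to the left-most or right-most factor) each time, and make sure that when cycling $\Gamma_\beta$ or $\Gamma_\alpha$ past $\widetilde\Gamma_I$ one does \emph{not} implicitly use commutativity — only the cyclicity of the trace and the scalar nature of $\eta^{\alpha\beta}\Gamma_\alpha\Gamma_\beta$ should be invoked. Once the two bookkeeping scalars $A,B$ are set up, the proof is essentially algebraic and fits in a few lines, and the universal appearance of the combination $A-(-1)^{\mathfrak p}B$ is what makes the three expressions collapse onto one another.
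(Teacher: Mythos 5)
Your proof is correct: each of the three traces reduces to the same combination $A-(-1)^{\mathfrak p}B$ (with the middle one equal to twice that), and the sign bookkeeping with $(-1)^{\mathfrak p}$ goes through exactly as you describe; incidentally $B=0$ by Lemma \ref{lemma4}, but you rightly never need this. Your route is organized differently from the paper's. The paper keeps $\widetilde\Gamma_{I\alpha}$ as a primitive object: it first establishes the operator identity $\eta^{\alpha\beta}\{\widetilde\Gamma_{I\alpha},\Gamma_\beta\}=(1+(-1)^{\mathfrak p+1})\,\eta^{\alpha\beta}\widetilde\Gamma_{I\alpha}\Gamma_\beta$, eq.~\eqref{anticommforproof}, directly from Lemma \ref{lemma1} and its corollary --- the would-be extra term drops because the antisymmetric $G^{(n)}$ is contracted with the symmetric $\eta^{\alpha\beta}\partial_\alpha X^p\partial_\beta X^q$ --- and then uses cyclicity to obtain $\Tr(\{\widetilde\Gamma_1,\Gamma_\beta\}\widetilde\Gamma_{2\alpha})=\Tr(\widetilde\Gamma_1\{\widetilde\Gamma_{2\alpha},\Gamma_\beta\})$, into which it inserts Lemma \ref{lemma2} on the left and \eqref{anticommforproof} on the right; the remaining equality is obtained by swapping $1\leftrightarrow2$. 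You instead invert Lemma \ref{lemma2} to eliminate $\widetilde\Gamma_{I\alpha}$ entirely and collapse everything onto the two scalars $A$ and $B$ using only cyclicity and $\eta^{\alpha\beta}\Gamma_\alpha\Gamma_\beta=\eta^{\alpha\beta}g_{\alpha\beta}\,\mathbb I_{32}$. This buys a uniform and slightly more economical treatment of all three traces with one fewer auxiliary identity; the paper's version has the advantage that the intermediate anticommutator \eqref{anticommforproof} is a reusable statement about the operators themselves rather than about traces. Both arguments rest on the same two ingredients (Lemma \ref{lemma2} and trace cyclicity), so the difference is one of packaging rather than substance.
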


\begin{proof}
First of all, let us compute
\benn
\eta^{\alpha\beta} \{\widetilde\Gamma_{I\alpha} , \Gamma_\beta\} = \frac18 \, e^{2\phi} \eta^{\alpha\beta} \, \partial_\alpha X^p \, \partial_\beta X^q \, e^{\underline p}{}_{p} \, e^{\underline q}{}_{q} \, \sum_n (-1)^{f_{(n,I)}} \, \frac{1}{(n-1)!} \,  G^{(n)}_{\underline{p \, a_2 ... a_n}} \, \{\Gamma^{\underline{a_2 ... a_n}}, \Gamma_{\underline q}\} \, .  
\eenn
Thanks to the previous lemma and its corollary, it is an easy task. Notice that
\benn
\frac14 \, e^{2\phi} \eta^{\alpha\beta} \, \partial_\alpha X^p \, \partial_\beta X^q \, e^{\underline p}{}_{p} \, e^{\underline q}{}_{q} \, \sum_n (-1)^{f_{(n,I)}} \, \frac{1}{(n-2)!} \,  G^{(n)}_{\underline{p \, q \, a_3 ... a_n}} \, \Gamma^{\underline{a_3 ... a_n}} = 0 \, ,  
\eenn
since $G^{(n)}_{\underline{p \, q \, a_3 ... a_n}}$ is antisymmetric in $p \leftrightarrow q$, but is contracted with something symmetric. Therefore, we can summarize the result as
\be
\label{anticommforproof}
\eta^{\alpha\beta} \{\widetilde\Gamma_{I\alpha} , \Gamma_\beta\} = (1+(-1)^{\mathfrak p + 1}) \, \eta^{\alpha\beta} \, \widetilde\Gamma_{I\alpha} \Gamma_\beta \, ,
\ee
being $\mathfrak p = 0$ $(\mathfrak p = 1)$ in the IIA (IIB) case.

Moreover, we can use the cyclicity of the trace as
\benn
\Tr \(\Gamma_\beta \widetilde\Gamma_1\widetilde\Gamma_{2\alpha}\) = \Tr \( \widetilde\Gamma_1\widetilde\Gamma_{2\alpha}\Gamma_\beta\) = - \Tr \( \widetilde\Gamma_1\Gamma_\beta\widetilde\Gamma_{2\alpha}\) + \Tr \(\widetilde\Gamma_1 \{\widetilde\Gamma_{2\alpha} , \Gamma_\beta\} \) \, ,
\eenn
from which
\benn
\Tr \( \{\widetilde\Gamma_1,\Gamma_\beta\} \widetilde\Gamma_{2\alpha}\) = \Tr \(\widetilde\Gamma_1 \{\widetilde\Gamma_{2\alpha} , \Gamma_\beta\} \) \, .
\eenn
Combining this equation with lemma \ref{lemma2} and \eqref{anticommforproof}, we get the first equality in the statement of the lemma. For the second one, the proof proceeds in the very same way (just switch $1$ and $2$ in the above formula).
\end{proof}

From now on, the formula (e.g., see \cite{Kuusela:2019iok})
\be
\label{orthostuff}
\Tr \( \Gamma^{a_1...a_n} \Gamma_{b_1 ... b_m} \) = 2^{\lfloor{\frac{d}{2}} \rfloor} (-1)^{\frac{n(n-1)}{2}} \, \delta_{mn} \, n! \, \delta^{[a_1}{}_{b_1} \cdot\cdot\cdot \delta^{a_n]}{}_{b_n}\,,
\ee
will play a crucial role. Here, $d$ stands for the dimension of the target space. In what follows, we will set $d=10$.

A first consequence is that
\be \label{traceHslash}
\Tr \, \cancel H_\alpha = \Tr \, \cancel \omega_\alpha = 0 \, .
\ee
Indeed,
\benn
\Tr \, \cancel H_\alpha = H_{\alpha\underline{ab}} \, \Tr \(\Gamma^{\underline a} \Gamma^{\underline b}\) = 32 \, H_{\alpha\underline{ab}} \, \delta^{\underline a}{}_{b} \, ,
\eenn
which vanishes since the components of $H_{(3)}$ are totally antisymmetric. The same holds for $\Tr \, \cancel \omega_\alpha$. Similarly, exploiting also the statement of corollary \ref{gammaduality}, if follows that
\be \label{tracedualHslash}
\Tr \(\cancel H_\lambda \, \Gamma^{(11)} \varepsilon^{\alpha\beta} \Gamma_{\alpha\beta}\) = \Tr \(\cancel \omega_\lambda \, \Gamma^{(11)} \varepsilon^{\alpha\beta} \Gamma_{\alpha\beta}\) = 0 \, .
\ee

Other relevant implications are
\begin{lemma}
\label{lemma4}
\benn
\Tr \(\widetilde\Gamma_1 \widetilde\Gamma_2\) = \frac18 \, e^{2\phi} \, (-1)^{\mathfrak p + 1} \sum_n |G_{(n)}|^2 = 0 \, ,
\eenn
where $\mathfrak p = 0$ $(\mathfrak p = 1)$ for the IIA (IIB) case and
\benn
|G_{(n)}|^2 = \frac{1}{n!} G^{(n)}_{\underline{a_1...a_n}} \,  G_{(n)}^{\underline{a_1...a_n}} \, .
\eenn
\end{lemma}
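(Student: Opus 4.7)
The plan is to substitute the series definitions of $\widetilde\Gamma_1$ and $\widetilde\Gamma_2$ from the beginning of appendix \ref{app:useful} into the trace, then exploit the orthogonality relation \eqref{orthostuff} together with the self-duality constraints \eqref{dualityconstraint}. Expanding the product gives the double sum
\begin{equation*}
\Tr(\widetilde\Gamma_1 \widetilde\Gamma_2) = \frac{e^{2\phi}}{256} \sum_{n,m} \frac{(-1)^{f_{(n,1)}+f_{(m,2)}}}{n!\,m!}\, G^{(n)}_{\underline{a_1\cdots a_n}}\, G_{(m)}^{\underline{b_1\cdots b_m}}\, \Tr\!\left(\Gamma^{\underline{a_1\cdots a_n}}\, \Gamma_{\underline{b_1\cdots b_m}}\right).
\end{equation*}
By \eqref{orthostuff} only the diagonal $m=n$ survives, and the antisymmetrized Kronecker deltas, once contracted against the totally antisymmetric $G^{(n)}$, collapse to $n!\,|G_{(n)}|^2$. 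Tidying the prefactors yields
\begin{equation*}
\Tr(\widetilde\Gamma_1\widetilde\Gamma_2) = \frac{e^{2\phi}}{8}\sum_n (-1)^{f_{(n,1)}+f_{(n,2)}+n(n-1)/2}\,|G_{(n)}|^2 .
\end{equation*}

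Next I would verify the overall sign. A direct computation from the definition $f_{(n,I)} = I[n(n-1)/2 + n + 1] + n + 1$ gives $f_{(n,1)}+f_{(n,2)}+n(n-1)/2 = 2n(n-1)+5n+5 \equiv n+1 \pmod 2$, which is odd for even $n$ (IIA, $\mathfrak p=0$) and even for odd $n$ (IIB, $\mathfrak p=1$). In both cases the resulting sign is $(-1)^{\mathfrak p+1}$, establishing the first equality in the claim.

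For the vanishing I would invoke the duality constraint $*G_{(n)} = (-1)^{n(n-1)/2} G_{(10-n)}$ from \eqref{dualityconstraint}. In Lorentzian signature of dimension ten, starting from $\alpha\wedge *\alpha = |\alpha|^2\,\mathrm{vol}$ together with the double-star identity $**\alpha = -(-1)^k\alpha$ (which follows directly from the conventions of appendix \ref{app:notations}), one computes $|{*}\alpha|^2 = -|\alpha|^2$ for any $k$-form $\alpha$. Hence $|G_{(10-n)}|^2 = -|G_{(n)}|^2$, and the democratic RR content organizes into Hodge-dual pairs with opposite squared norms: the pairs $(2,8)$ and $(4,6)$ cancel in IIA, the pairs $(1,9)$ and $(3,7)$ cancel in IIB, and the self-dual case $n=5$ in IIB satisfies $|G_{(5)}|^2 = -|G_{(5)}|^2 = 0$ on its own. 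The total sum therefore vanishes.

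The main obstacle is pure sign bookkeeping, with three independent sources: the alternating factors $(-1)^{f_{(n,I)}}$ in the definition of $\widetilde\Gamma_I$, the $(-1)^{n(n-1)/2}$ produced by \eqref{orthostuff}, and the sign picked up under Hodge dualization in Lorentzian signature. Once all three are tracked consistently, the argument is elementary; no nontrivial Gamma-matrix identity beyond \eqref{orthostuff} is required, and the vanishing is simply a manifestation of the democratic redundancy of the RR sector.
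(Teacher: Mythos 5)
Your proof is correct and follows essentially the same route as the paper: expand the double sum, apply the orthogonality relation \eqref{orthostuff} to collapse it to the diagonal with overall sign $(-1)^{n+1}=(-1)^{\mathfrak p+1}$, and then invoke the duality constraint \eqref{dualityconstraint} so that the Hodge-dual pairs (and the self-dual $G_{(5)}$) cancel in $\sum_n|G_{(n)}|^2$. Your sign bookkeeping, including $|{*}\alpha|^2=-|\alpha|^2$ in Lorentzian signature, matches the paper's statement $G_{(n)}\wedge *G_{(n)}=-G_{(10-n)}\wedge *G_{(10-n)}$.
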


\begin{proof}
It is basically a direct consequence of formula \eqref{orthostuff}. Indeed,
\begin{align}
\Tr \(\widetilde\Gamma_1 \widetilde\Gamma_2\)
&= \frac{1}{256} \, e^{2\phi} \sum_{n,m} (-1)^{\frac{n(n-1)}{2} + m + 1} \, \frac{1}{n!} \, G^{(n)}_{\underline{a_1...a_n}} \,  \frac{1}{m!} \, G_{(m)}^{\underline{b_1...b_m}} \, \Tr \(\Gamma^{\underline{a_1...a_n}}\Gamma_{\underline{b_1...b_m}}\) \nonumber\\
&= \frac18 \, e^{2\phi} \sum_{n} (-1)^{n + 1} \, \frac{1}{n!} \, G^{(n)}_{\underline{a_1...a_n}} \,  G_{(n)}^{\underline{b_1...b_n}} \, \delta^{\underline{a_1}}{}_{b_1} \cdot\cdot\cdot \delta^{\underline{a_n}}{}_{b_n} \nonumber \\
&= \frac18 \, e^{2\phi} \sum_{n} (-1)^{n + 1} \, \frac{1}{n!} \, G^{(n)}_{\underline{a_1...a_n}} \,  G_{(n)}^{\underline{a_1...a_n}} \, . \nonumber
\end{align}
Notice that we remove the antisymmetrization on the indices $a_1, ..., a_n$ belonging to the deltas of Kronecker, since they are contracted with the indices of the Ramond-Ramond field strength. Moreover, in the IIA (IIB) case, the sum runs over even (odd) $n$.

Finally, we must not forget to impose the duality constraints in \eqref{dualityconstraint}. Remembering that $*\hspace{-2pt}*\hspace{-2pt}G_{(n)}=(-1)^{n+1} G_{(n)}$, they imply
\be
G_{(n)} \wedge *G_{(n)} = - G_{(10-n)} \wedge *G_{(10-n)} \, .
\ee
Therefore, in the democratic formalism, it holds that
\be
\sum_n |G_{(n)}|^2 = 0\,.
\ee  
\end{proof}

\begin{lemma}
\label{lemma5}
\benn
\Tr \(  \widetilde\Gamma_{1\alpha} \widetilde\Gamma_{2\beta} \) = \frac12 \, e^{2\phi} \,
\partial_{\alpha} X^p \, \partial_{\beta} X^q \, e^{\underline p}{}_{p} \, e^{\underline q}{}_{q} \,  \sum_n \, |G_{(n)}|^2_{\underline{pq}} \, ,
\eenn
where
\benn
|G_{(n)}|^2_{\underline{pq}} = \frac{1}{(n-1)!} G^{(n)}_{\underline{p \, a_2 ... a_n}} G_{(n)}^{\,\underline{k \, a_2 ... a_n}} \, \eta_{\underline{kq}} \,.
\eenn
\end{lemma}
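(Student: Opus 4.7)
\textbf{Proof proposal for Lemma \ref{lemma5}.} The plan is essentially a direct computation: insert the explicit form of $\widetilde\Gamma_{I\alpha}$ into the trace, reduce to a single sum via the orthogonality relation for antisymmetrized products of Gamma matrices, and check the resulting signs. Specifically, from the definition
\[
\widetilde\Gamma_{I\alpha} = \frac18 \, e^\phi \, \partial_\alpha X^p \, e^{\underline p}{}_p \sum_n (-1)^{f_{(n,I)}} \frac{1}{(n-1)!} G^{(n)}_{\underline{p\,a_2\dots a_n}} \, \Gamma^{\underline{a_2\dots a_n}} \, ,
\]
the product $\widetilde\Gamma_{1\alpha}\widetilde\Gamma_{2\beta}$ is a double sum over RR ranks $n,m$, with prefactor $\frac{1}{64} e^{2\phi} \, \partial_\alpha X^p\partial_\beta X^q \, e^{\underline p}{}_p e^{\underline q}{}_q$.

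Next, I would apply the standard trace identity \eqref{orthostuff} to $\Tr(\Gamma^{\underline{a_2\dots a_n}}\,\Gamma^{\underline{b_2\dots b_m}})$. Since this trace forces $m=n$, the double sum collapses to a single sum, producing the factor $32\,(-1)^{(n-1)(n-2)/2}\,(n-1)!$ together with the antisymmetrized product of Kronecker deltas on the $a_i, b_i$. As in Lemma \ref{lemma4}, the antisymmetrization can be dropped because the deltas are contracted against $G^{(n)}_{\underline{q\,b_2\dots b_n}}$, which is already totally antisymmetric in its indices. After this step one is left with
\[
\Tr(\widetilde\Gamma_{1\alpha}\widetilde\Gamma_{2\beta}) = \frac{1}{2} e^{2\phi} \partial_\alpha X^p\partial_\beta X^q \, e^{\underline p}{}_p e^{\underline q}{}_q \sum_n (-1)^{\sigma_n} \, \frac{1}{(n-1)!} G^{(n)}_{\underline{p\,a_2\dots a_n}} G_{(n)}^{\underline{q\,a_2\dots a_n}} \, ,
\]
with combined sign $\sigma_n \equiv f_{(n,1)} + f_{(n,2)} + (n-1)(n-2)/2 \pmod 2$.

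The only real bookkeeping to do is verifying that $\sigma_n \equiv 0\pmod 2$ for every admissible $n$ in both IIA and IIB. Using $f_{(n,I)} = I[n(n-1)/2 + n + 1] + n+1$, one gets $f_{(n,1)}+f_{(n,2)} \equiv n(n-1)/2 + n + 1 \pmod 2$, and then $\sigma_n \equiv (n-1)(n-2)/2 + n(n-1)/2 + n + 1 \equiv (n-1)^2 + n + 1 \equiv 2n \equiv 0\pmod 2$, using $k^2\equiv k\pmod 2$. With the sign fixed to $+1$, the remaining sum is precisely $\sum_n |G_{(n)}|^2_{\underline{pq}}$, yielding the claim.

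The main (though mild) obstacle is this parity check: one must carefully track the $(-1)^{f_{(n,I)}}$ factors coming from the definition of $\widetilde\Gamma_{I\alpha}$ together with the $(-1)^{n(n-1)/2}$ from the trace identity and show they universally cancel. Everything else is essentially mechanical, following the same pattern used in Lemma \ref{lemma4}, with the only new ingredient being that now one antisymmetrized index on each field strength is left uncontracted and becomes the free index $p$ or $q$ carried through the vielbein to the world-sheet pullback.
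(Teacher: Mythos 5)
Your proof is correct and follows essentially the same route as the paper: insert the definitions, apply the orthogonality trace formula (with the crucial observation that each factor now contains $n-1$ rather than $n$ Gamma matrices, shifting the sign to $(-1)^{(n-1)(n-2)/2}$), drop the antisymmetrizer against the totally antisymmetric field strengths, and verify that the combined parity $f_{(n,1)}+f_{(n,2)}+(n-1)(n-2)/2$ is even. The paper states this more tersely by referring back to the proof of Lemma \ref{lemma4}, whereas you spell out the sign bookkeeping explicitly; the content is the same.
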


\begin{proof}
Uplifting the definitions at the beginning of this appendix to the target space, what we have to compute is
\benn
\Tr \(  \widetilde\Gamma_{1\underline p} \widetilde\Gamma_{2\underline q} \) 
= \frac{e^{2\phi} }{64} \, \sum_{n,m} (-1)^{\frac{n(n-1)}{2} + m + 1} \, \frac{1}{(n-1)!} \, G^{(n)}_{\underline{p \, a_2 ... a_n}} \, \frac{1}{(m-1)!} \, G^{(m)}_{\underline{q\,b_2 ... b_m}} \Tr \(\Gamma^{\underline{a_2 ... a_n}}  \Gamma^{\underline{b_2 ... b_m}} \) \, .
\eenn
The proof proceeds in exactly the same way as the previous one and the pullback on the world-sheet of the above expression provides the statement of the lemma. The only thing to pay attention to is the number of Gamma matrices in the trace, $2(n-1)$ instead of $2n$. This implies
\benn
\Tr \(\Gamma^{\underline{a_2 ... a_n}}  \Gamma_{\underline{b_2 ... b_m}} \) = 32 \, (-1)^{\frac{n(n-1)}{2}} (-1)^{1-n} \, \delta_{mn} \, (n-1)! \, \delta^{[\underline{a_2}}{}_{\underline{b_2}} \cdot\cdot\cdot \delta^{\underline{a_n}]}{}_{\underline{b_n}} \, .
\eenn
\end{proof}

\begin{lemma} \label{lemma6}
The following results hold:
\begin{enumerate}
\item $\displaystyle \Tr \( \widetilde\Gamma_1 \widetilde\Gamma_{2\alpha} \Gamma_\beta \) =\frac14 \, e^{2\phi} \,
\partial_{\alpha} X^p \, \partial_{\beta} X^q \, e^{\underline p}{}_{p} \, e^{\underline q}{}_{q} \,  \sum_n \, \[|G_{(n)}|^2_{\underline{pq}} + \[*\(G_{(n)} \wedge *G_{(n+2)}\)\]_{\underline{pq}} \]$ ,
\item $\displaystyle \Tr \( \widetilde\Gamma_1 \widetilde\Gamma_{2\alpha} \Gamma_\beta \) = (-1)^{\mathfrak p + 1} \, \Tr \( \widetilde\Gamma_{1\alpha} \widetilde\Gamma_2 \Gamma_\beta \)$ ,
\item $\displaystyle \Tr \( \widetilde\Gamma_2 \widetilde\Gamma_{1\alpha} \Gamma_\beta \) = \frac14 \, e^{2\phi} \,
\partial_{\alpha} X^p \, \partial_{\beta} X^q \, e^{\underline p}{}_{p} \, e^{\underline q}{}_{q} \,  \sum_n \, \[|G_{(n)}|^2_{\underline{pq}} - \[*\(G_{(n)} \wedge *G_{(n+2)}\)\]_{\underline{pq}} \]$ ,
\end{enumerate}
being $\mathfrak p = 0$ $(\mathfrak p = 1)$ in the IIA (IIB) case.
\end{lemma}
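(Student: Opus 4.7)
The plan is to prove all three statements by direct trace computation from the definitions of $\widetilde\Gamma_I$ and $\widetilde\Gamma_{I\alpha}$, supplemented by the orthogonality relation \eqref{orthostuff}, the RR duality \eqref{dualityconstraint} and the chirality relations of Corollaries~\ref{gammaduality} and~\ref{Gamma11duality}. The central computational ingredient is the identity
\benn
\Gamma^{\underline{a_2\cdots a_m}}\Gamma_{\underline q} = \Gamma^{\underline{a_2\cdots a_m}}{}_{\underline q} + (m-1)\,\delta^{[\underline{a_m}}{}_{\underline q}\,\Gamma^{\underline{a_2\cdots a_{m-1}}]},
\eenn
which splits the product of three antisymmetric Gamma strings into a rank-$m$ piece and a rank-$(m-2)$ piece. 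The orthogonality formula then kills everything except the terms with $n=m$ (which give the diagonal $|G_{(n)}|^2_{\underline{pq}}$ contribution) and with $n=m-2$ (which give a cross contraction between $G^{(n)}$ and $G^{(n+2)}$, the term we will eventually recognise as the $\wedge$ piece).

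First, I would tackle Part~1. Expanding $\Tr(\widetilde\Gamma_1\widetilde\Gamma_{2\alpha}\Gamma_\beta)$ with the definitions produces a double sum over RR ranks $(n,m)$ weighted by $(-1)^{f_{(n,1)}+f_{(m,2)}}$ and by $\Tr(\Gamma^{\underline{c_1\cdots c_n}}\Gamma^{\underline{a_2\cdots a_m}}\Gamma_{\underline q})$. The $n=m$ piece is essentially the computation of Lemma~\ref{lemma5} with one extra $\Gamma$ contracted against the stripped index; counting combinatorial factors and gathering the signs $(-1)^{f_{(n,1)}+f_{(n,2)}+n(n-1)/2}$ reproduces the $\tfrac14|G_{(n)}|^2_{\underline{pq}}$ term of the claim, with the prefactor $\tfrac14$ tracking to $32/(16\cdot 8)$ after using $\Tr(\mathbb I)=32$. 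The $n=m-2$ piece collapses to a double contraction of $G^{(n)}_{\underline{p\,a_2\cdots a_n}}$ with $G^{(n+2)}_{\underline{q\,a_2\cdots a_n k l}}$-type tensors, antisymmetrised on the free indices $(p,q)$; rewriting it via \eqref{dualityconstraint} and the two-Levi-Civita contraction identity of the appendix reduces it to $\tfrac14[*(G_{(n)}\wedge *G_{(n+2)})]_{\underline{pq}}$. Part~3 then follows from the same computation with $1\leftrightarrow 2$ swapped; the only change is a sign flip of the off-diagonal piece, $(-1)^{f_{(n,2)}+f_{(n+2,1)}}=-(-1)^{f_{(n,1)}+f_{(n+2,2)}}$, which reverses the $\wedge$ contribution while leaving the symmetric $|G_{(n)}|^2_{\underline{pq}}$ piece untouched. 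Since the $\wedge$ term is antisymmetric in $(p,q)$, it drops out of the $\eta^{\alpha\beta}$ contraction, providing the consistency check already recorded in Lemma~\ref{lemma3}.

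For Part~2 I would insert $\mathbb I = (\Gamma^{(11)})^2$ between the two $\widetilde\Gamma$ factors of the trace. Corollary~\ref{Gamma11duality} gives $\widetilde\Gamma_I\Gamma^{(11)} = (-1)^{I(\mathfrak p+1)+1}\widetilde\Gamma_I$, while applying Corollary~\ref{gammaduality} term-by-term to the expansion of $\widetilde\Gamma_{I\alpha}$ together with \eqref{dualityconstraint} yields the companion identity $\widetilde\Gamma_{I\alpha}\Gamma^{(11)} = -(-1)^{I(\mathfrak p+1)+1}\widetilde\Gamma_{I\alpha}$, the extra sign reflecting the one fewer $\Gamma$ matrix present in $\widetilde\Gamma_{I\alpha}$. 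Transporting the two inserted $\Gamma^{(11)}$'s onto $\widetilde\Gamma_1$ from the right and onto $\widetilde\Gamma_{2\alpha}$ from the left converts the pair $\widetilde\Gamma_1\widetilde\Gamma_{2\alpha}$ into $\widetilde\Gamma_{1\alpha}\widetilde\Gamma_2$ up to the overall factor $(-1)^{\mathfrak p+1}$, which is exactly the claim.

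The hard part will be matching the off-diagonal $n=m-2$ contribution with the component form of $[*(G_{(n)}\wedge *G_{(n+2)})]_{\underline{pq}}$. It requires contracting two Levi-Civita symbols — one from the Hodge star in the definition of the target 2-form, one from rewriting the $\delta^{[\cdots}{}_{\underline q}\Gamma^{\cdots]}$ reduction in Hodge-dual form via \eqref{dualityconstraint} — against the generalised Kronecker delta produced by the $\varepsilon\cdot\varepsilon$ identity of the appendix, and it demands careful book-keeping of the signs $(-1)^{f_{(n,I)}}$, $(-1)^{n(n-1)/2}$ and of the combinatorial factors $1/n!$, $1/(n-1)!$ before the distinct IIA ($\mathfrak p=0$) and IIB ($\mathfrak p=1$) parity patterns collapse into the single uniform statement of the lemma.
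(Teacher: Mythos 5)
Your treatment of Parts~1 and~3 is sound and follows essentially the same route as the paper: a direct double-sum expansion of the trace in which the orthogonality relation \eqref{orthostuff} retains only the $n=m$ (diagonal, symmetric) and $n=m\mp2$ (cross, antisymmetric) contributions, the former reproducing the Lemma~\ref{lemma5}-type term $|G_{(n)}|^2_{\underline{pq}}$ and the latter the wedge term. Your organisation via the product identity $\Gamma^{\underline{a_2\cdots a_m}}\Gamma_{\underline q}=\Gamma^{\underline{a_2\cdots a_m}}{}_{\underline q}+(m-1)\,\delta^{[\underline{a_m}}{}_{\underline q}\,\Gamma^{\underline{a_2\cdots a_{m-1}}]}$ is an equivalent repackaging of the paper's case analysis on which field strength carries a leg along the $q$-direction, and your sign bookkeeping for the $1\leftrightarrow2$ swap in Part~3, $(-1)^{f_{(n,2)}+f_{(n+2,1)}}=-(-1)^{f_{(n,1)}+f_{(n+2,2)}}$, checks out.

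Part~2, however, has a genuine gap. Inserting $\mathbb I=(\Gamma^{(11)})^2$ and absorbing the chirality matrices into the adjacent factors can only multiply each factor by a sign: it yields a relation of the form $\Tr(\widetilde\Gamma_1\widetilde\Gamma_{2\alpha}\Gamma_\beta)=(\pm1)\,\Tr(\widetilde\Gamma_1\widetilde\Gamma_{2\alpha}\Gamma_\beta)$, i.e.\ at best a vanishing condition on the \emph{same} trace. No multiplication by a constant matrix can transfer the pullback index $\alpha$ (equivalently the contracted index $p$) from the second factor to the first, so the claimed conversion of $\widetilde\Gamma_1\widetilde\Gamma_{2\alpha}$ into $\widetilde\Gamma_{1\alpha}\widetilde\Gamma_2$ is a non sequitur. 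Moreover, the companion identity you invoke, $\widetilde\Gamma_{I\alpha}\Gamma^{(11)}=-(-1)^{I(\mathfrak p+1)+1}\widetilde\Gamma_{I\alpha}$, fails on rank-counting grounds: the $G_{(n)}$ term of $\widetilde\Gamma_{I\alpha}$ carries a rank-$(n-1)$ Gamma string, whose product with $\Gamma^{(11)}$ is a rank-$(11-n)$ string proportional to components of $*(i_pG_{(n)})\sim e_p\wedge G_{(10-n)}$, whereas the $G_{(10-n)}$ term of $\widetilde\Gamma_{I\alpha}$ carries a rank-$(9-n)$ string built from $i_pG_{(10-n)}$; these do not match, so the democratic sum is not mapped onto itself. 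The paper proves Part~2 by explicitly computing $\Tr(\widetilde\Gamma_{1\alpha}\widetilde\Gamma_2\Gamma_\beta)$ with the same case analysis as Part~1 and comparing signs term by term (equivalently, by relabelling $n\leftrightarrow m$ in the double sum and tracking $(-1)^{f_{(n,I)}}$); you need to supply an argument of this kind rather than the chirality-insertion shortcut.
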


\begin{proof}
Similarly to the previous proof, let us focus on
\benn
\Tr \( \widetilde\Gamma_1 \widetilde\Gamma_{2\underline p} \Gamma_{\underline q} \) = \frac{e^{2\phi}}{128} \, \sum_{n,m} (-1)^{\frac{n(n-1)}{2} + m + 1} \, \frac{1}{n!} \, G^{(n)}_{\underline{a_1...a_n}} \,  \frac{1}{(m-1)!} \, G^{(m)}_{\underline{p \, b_2...b_m}} \, \Tr \(\Gamma^{\underline{a_1...a_n}}\Gamma^{\underline{b_2...b_m}}\Gamma_{\underline q}\) \, .
\eenn

First of all, let us suppose that neither $G_{(n)}$ nor $G_{(m)}$ have legs along the $q$-direction, apart from possibly the case where $p=q$. This implies that $a_i$, $b_j \neq q$, $\forall i$, $j$. Therefore, we have
\benn
\Tr \(\Gamma_{\underline q}\Gamma^{\underline{a_1...a_n}}\Gamma^{\underline{b_2...b_m}}\) = \Tr \(\Gamma^{\underline{a_1...a_n}}\Gamma^{\underline{b_2...b_m}}\Gamma_{\underline q}\) = (-1)^{n+m-1} \, \Tr \(\Gamma_{\underline q}\Gamma^{\underline{a_1...a_n}}\Gamma^{\underline{b_2...b_m}}\) \, .
\eenn
Here, the first equality holds thanks to the cyclicity of the trace, while the second one is a consequence of the anticommutation rules in \eqref{flatclifford}. Since $n+m$ is even both in the IIA case and in the IIB one, it follows that
\benn
\Tr \(\Gamma_{\underline q}\Gamma^{\underline{a_1...a_n}}\Gamma^{\underline{b_2...b_m}}\) = 0 \, .
\eenn

The above result holds also if both $G_{(n)}$ and $G_{(m)}$ have legs along the $q$-direction, excluding the case $p=q$. Indeed, if so, due to the antisymmetric nature of $G_{(m)}$, none of the $b$-indices could be equal to $q$.

Going further, let assume now that $G_{(n)}$ flows on the $q$-direction. Therefore, for every permutation of the indices, there exists a $k$ such that $a_k = q$. Since $G_{(n)}$ is totally antisymmetric, it is unique. In other words, $a_i \neq q$ $\forall i \neq k$. Then, it follows that
\benn
G^{(n)}_{\underline{a_1...a_n}} \, \Tr \(\Gamma_{\underline q}\Gamma^{\underline{a_1...a_n}}\Gamma^{\underline{b_2...b_m}}\) = n \, G^{(n)}_{\underline{q \, a_2...a_n}} \Tr \(\Gamma^{\underline{a_2...a_n}}\Gamma^{\underline{b_2...b_m}}\) \, .
\eenn
From now on, the computation is the same as the previous proof and we can write the contribution to $\Tr \( \widetilde\Gamma_1 \widetilde\Gamma_{2\underline p} \Gamma_{\underline q} \)$ in these cases as
\be
\label{sympart}
\frac14 \, e^{2\phi} \, \sum_n \,  \frac{1}{(n-1)!} G^{(n)}_{\underline{p \, a_2 ... a_n}} G_{(n)}^{\,\underline{k \, a_2 ... a_n}} \, \eta_{\underline{kq}} = \frac14 \, e^{2\phi} \, \sum_n \, |G_{(n)}|^2_{\underline{pq}} \, .
\ee

Notice that there are no contradictions with the previous case. Indeed, here $n=m$ and if one of the $a$-indices was equal to $q$ then everything would vanish due to the antisymmetric nature of $G_{(n)}$. On the contrary, $p=q$ is not excluded. Moreover, $|G_{(n)}|^2_{\underline{pq}}$ can also be defined as
\be
|G_{(n)}|^2_{\underline{pq}} = \frac{\delta |G_{(n)}|^2}{\delta \eta^{\underline{pq}}} \, .
\ee
This contribution is thus manifestly symmetric in $p \leftrightarrow q$.

Finally, let us assume that $G_{(m)}$ flows on the $q$-direction, besides along the $p$ one. Then, by construction, we have $p \neq q$. As before, we can state
\benn
G^{(m)}_{\underline{p \, b_2...b_m}} \, \Tr \(\Gamma^{\underline{a_1...a_n}}\Gamma^{\underline{b_2...b_m}}\Gamma_{\underline q}\) = (-1)^{\mathfrak p} \, (m-1) \, G^{(m)}_{\underline{p q \, b_3...b_m}} \Tr \(\Gamma^{\underline{a_1...a_n}}\Gamma^{\underline{b_3...b_m}}\) \, .
\eenn
All in all, given that
\benn
\Tr \(\Gamma^{\underline{a_1 ... a_n}}  \Gamma_{\underline{b_3 ... b_m}} \) = 32 \, (-1)^{\frac{n(n-1)}{2}} \, \delta_{m,n+2} \, n! \, \delta^{[\underline{a_1}}{}_{\underline{b_3}} \cdot\cdot\cdot \delta^{\underline{a_n}]}{}_{\underline{b_{n+2}}} \, ,
\eenn
the final result for the contribution to $\Tr \( \widetilde\Gamma_1 \widetilde\Gamma_{2\underline p} \Gamma_{\underline q} \)$ in these cases is
\be
\label{antisympart}
- \frac14 \, e^{2\phi} \, \sum_n \frac{1}{n!} \, G_{ \underline{pq} \, \underline{a_1 ... a_n}}^{(n+2)} \, G_{(n)}^{\underline{a_1...a_n}} \, .
\ee
Notice that the above formula is manifestly antisymmetric in $p \leftrightarrow q$. 

Using the notations and conventions of the previous appendix, we get
\benn
\[*\(G_{(n)} \wedge *G_{(n+2)}\)\]_{\underline{pq}} = - \frac{1}{n!} \, G_{ \underline{pq} \, \underline{a_1 ... a_n}}^{(n+2)} \, G_{(n)}^{\underline{a_1...a_n}} \, .
\eenn
Therefore, collecting all the cases, the pullback on the world-sheet of the sum of \eqref{sympart} and \eqref{antisympart} provides the first part of the statement of the lemma.

To prove the second one, we have to compute explicitly also
\benn
\Tr \( \widetilde\Gamma_{1\underline{p}} \widetilde\Gamma_2 \Gamma_{\underline q} \) = \frac{e^{2\phi}}{128} \, \sum_{n,m} (-1)^{\frac{n(n-1)}{2} + m + 1} \, \frac{1}{(n-1)!} \, G^{(n)}_{\underline{p \, a_2...a_n}} \,  \frac{1}{m!} \, G^{(m)}_{\underline{b_1...b_m}} \, \Tr \(\Gamma^{\underline{a_2...a_n}}\Gamma^{\underline{b_1...b_m}}\Gamma_{\underline q}\) \, .
\eenn
The steps are completely analogous to the previous ones. To give some hints, if $G_{(n)}$ flows on the $q$-direction given $p \neq q$, we have
\benn
G^{(n)}_{\underline{p \, a_2...a_n}} \, \Tr \(\Gamma^{\underline{a_2...a_n}}\Gamma^{\underline{b_1...b_m}}\Gamma_{\underline q}\) = (m-1) \, G^{(n)}_{\underline{p q \, a_3...a_n}} \Tr \(\Gamma^{\underline{a_3...a_n}}\Gamma^{\underline{b_1...b_m}}\)
\eenn
and
\benn
\Tr \(\Gamma^{\underline{a_3 ... a_n}}  \Gamma_{\underline{b_1 ... b_m}} \) = 32 \, (-1)^{\frac{m(m-1)}{2}} \, \delta_{n,m+2} \, m! \, \delta^{[\underline{a_3}}{}_{\underline{b_1}} \cdot\cdot\cdot \delta^{\underline{a_{m+2}}]}{}_{\underline{b_m}} \, .
\eenn

Similarly we also get the third part of the statement. This concludes the proof. En-passant, this computation provides an explicit check of lemma \ref{lemma3}, given the result in lemma \ref{lemma5}.
\end{proof}

\section{The fermionic mass operator}
\label{app:trace}

In this appendix we provide details to support section \ref{sec:fermsector}. First of all, we will compute explicitly the trace of the fermionic mass operator in \eqref{finalkgeom}. Then, we discuss how our general formulae can be applied to a specific example. To conclude, we gives details about working on-shell on the solution of the Virasoro constraints. 

\subsection{The reduced trace of the fermionic mass operator}
\label{app:redtrace}

Now, we are ready to compute the sum of the fermionic masses squared by means of $\Tr A_1$ (see \eqref{reducedtrace}). For the reader's convenience, we report here the main results of appendix \ref{app:useful}:
\begin{subequations}
\begin{align}
&\hspace{-4pt}\{\widetilde\Gamma_I, \Gamma_\alpha\} = \(1+(-1)^{\mathfrak p}\) \widetilde\Gamma_I\Gamma_\alpha + \widetilde\Gamma_{I\alpha}  \, ,\\
&\eta^{\alpha\beta} \, \Tr \(\widetilde\Gamma_1 \widetilde\Gamma_{2\alpha} \Gamma_\beta\) = \frac12 \, \eta^{\alpha\beta} \, \Tr \( \widetilde\Gamma_{1\alpha} \widetilde\Gamma_{2\beta}\) \, , \\
&\Tr \(\widetilde \Gamma_1 \widetilde \Gamma_2\) = 0 \, , \\
&\label{tracelessHw}\Tr \, \cancel H_\alpha = \Tr \, \cancel \omega_\alpha = 0 \, ,
\end{align}
\end{subequations}
where
\be
\widetilde\Gamma_{I\alpha} = \frac18 \, e^{2\phi} \, \partial_\alpha X^p \, e^{\underline p}{}_{p} \, \sum_n (-1)^{f_{(n,I)}} \, \frac{1}{(n-1)!} \,  G^{(n)}_{\underline{p \, a_2 ... a_n}} \,  \Gamma^{\underline{a_2 ... a_n}} \, .
\ee
Given these relations, it is easy to derive\footnote{We also use that $\eta^{\alpha\beta} \, \Gamma_\alpha \Gamma_\beta = \eta^{\alpha\beta}g_{\alpha\beta}$.}
\begin{subequations}
\begin{align}
&\eta^{\alpha\beta} \, \Tr \(\widetilde \Gamma_1 \Gamma_\alpha \widetilde \Gamma_2 \Gamma_\beta\) = \frac12 \, \eta^{\alpha\beta} \, \Tr \(\widetilde\Gamma_{1\alpha}\widetilde\Gamma_{2\beta}\) \, , \\
&\label{antisym}\varepsilon^{\alpha\beta} \, \Tr \(\widetilde\Gamma_1\Gamma_\alpha\widetilde\Gamma_2\Gamma_\beta\) = \frac12 \, \varepsilon^{\alpha\beta} \, \Tr \(\widetilde\Gamma_1\widetilde\Gamma_{2\alpha}\Gamma_\beta\) + (-1)^{\mathfrak p + 1} \frac12 \, \varepsilon^{\alpha\beta} \, \Tr \(\widetilde\Gamma_{1\alpha}\widetilde\Gamma_2\Gamma_\beta\) \, .
\end{align}
\end{subequations}

Then, the statements of lemma \ref{lemma4}, \ref{lemma5} and \ref{lemma6} lead to
\be
\label{TrA1}
 \Tr A_1 = - \frac14 e^{2\phi} \, \partial_\alpha X^p \partial_\beta X^q \[\eta^{\alpha\beta} \sum_n \, |G_{(n)}|^2_{pq} - \varepsilon^{\alpha\beta} \,  \sum_n \[*\(G_{(n)} \wedge *G_{(n+2)}\)\]_{pq}\] ,
\ee
where
\be
|G_{(n)}|^2_{pq} = \frac{1}{(n-1)!} G^{(n)}_{p \, \underline{a_1...a_n}} G_{(n)}^{k \, \underline{a_1...a_n}} \, g_{kq} \, .
\ee
Notice that $p$ and $q$ are curved indices.
 
If we want to translate this result into the non-democratic formalism, we have to break the democracy among the RR potentials expressing everything in terms of the physical degrees of freedom encoded in $G_{(n)}$ with $n=2,4$ in the IIA case, or $n=1,3,5$ in the IIB one.

It is easy to show that
\be \label{dualmodulus}
|G_{(10-n)}|^2_{\underline{pq}} = - \frac{n+1}{n!} \, \delta^{\underline{b_0}}{}_{\underline p} \, G^{\underline{b_1 ... b_n}}_{(n)} \, \eta_{\underline{c_0 q}} \, G^{(n)}_{\underline{c_1 ... c_n}} \, \delta^{[\underline{c_0}}{}_{\underline{b_0}} \, \delta^{\underline{c_1}}{}_{\underline{b_1}} \cdot\cdot\cdot \delta^{\underline{c_n}]}{}_{\underline{b_n}} \, .
\ee
Indeed, notice that the above antisymmetrization of the deltas of Kronecker can be expanded as
\begin{multline}
\label{expandedantsymdelta}
(n+1) \, \delta^{[c_0}{}_{b_0} \delta^{c_1}{}_{b_1} \cdot\cdot\cdot \, \delta^{c_n]}{}_{b_n} - \delta^{c_0}{}_{b_0} \delta^{[c_1}{}_{b_1} \cdot\cdot\cdot \,  \delta^{c_n]}{}_{b_n} = \\ \sum_{i=1}^n \, (-1)^i \, \delta^{c_0}{}_{b_i} \, \frac{1}{n!} \hspace{-14pt} \sum_{{\substack{\pi \\ \pi(j) \in \{1,...,n\} \\ \, j \neq i}}} \hspace{-18pt} \sign(\pi_n) \, \delta^{c_{\pi(0)}}{}_{b_0} \, \delta^{c_{\pi(1)}}{}_{b_1} \cdot\cdot\cdot \, \delta^{c_{\pi(i-1)}}{}_{b_{i+1}} \, \delta^{c_{\pi(i+1)}}_{\, . \, b_{i+1}} \cdot\cdot\cdot \,
\delta^{c_{\pi(n)}}{}_{b_n} \, ,
\end{multline}
being $\sign(\pi_n)$ the parity of the permutation $\pi$ of $n$ objects.
To prove it, from its definition, we can expand the antisymmetrization of the indices as
\be
\delta^{[c_0}{}_{b_0} \delta^{c_1}{}_{b_1} \cdot\cdot\cdot \delta^{c_n]}{}_{b_n} = \frac{1}{(n+1)!} \[\sum_{\substack{\pi \\ \pi(0)=0}} + \sum_{\substack{\pi \\ \pi(0) \neq 0}} \] \sign(\pi_{n+1}) \, \delta^{c_{\pi(0)}}{}_{b_0} \, \delta^{c_{\pi(1)}}{}_{b_1} \cdot\cdot\cdot \delta^{c_{\pi(n)}}{}_{b_n}\,,
\ee
where $\sign(\pi_{n+1})$ is the parity of the permutation $\pi$ of $n+1$ objects. The first contribution includes $n!$ terms and it easy to be computed. Indeed, since $\pi(0)$ is fixed to $0$, it follows that $\sign(\pi_{n+1})$ is exactly equal to the parity of the permutation of the remaining $n$ objects. Then,
\begin{align}
&&\hspace{-12pt}\frac{1}{(n+1)!}  \sum_{\substack{\pi \\ \pi(0)=0}} \sign(\pi_{n+1}) \, \delta^{c_{\pi(0)}}{}_{b_0} \, \delta^{c_{\pi(1)}}{}_{b_1} \cdot\cdot\cdot \delta^{c_{\pi(n)}}{}_{b_n} &= \frac{\delta^{c_0}{}_{b_0}}{n+1} \, \frac{1}{n!} \,  \sum_{\pi} \, \sign(\pi_n) \, \delta^{c_{\pi(1)}}{}_{b_1} \cdot\cdot\cdot \delta^{c_{\pi(n)}}{}_{b_n} \nonumber \\
&&\hspace{-12pt} \,  &= \frac{\delta^{c_0}{}_{b_0}}{n+1} \, \delta^{[c_1}{}_{b_1} \cdot\cdot\cdot \delta^{c_n]}{}_{b_n}  \, . 
\end{align}
On the other hand, the second contribution includes $(n+1)!-n! = n \cdot n!$ terms and it is more tricky. To begin with, since $\pi(0) \neq 0$, it follows that exists one and only one $i \in \{1,...,n\}$ such that $\pi(i)=0$. So, this time we have
\begin{align}
&& \sign(\pi_{n+1}) & = \sign(\pi(0) \, \pi(1)  \cdot\cdot\cdot \pi(i-1) \cdot 0 \cdot \pi(i+1) \cdot\cdot\cdot \pi(n)) \nonumber \\
&& \, \ &= (-1)^i \, \sign(\pi(0) \, \pi(1)  \cdot\cdot\cdot \pi(i-1) \pi(i+1) \cdot\cdot\cdot \pi(n)) \nonumber \\
&& \, \ &= (-1)^i \, \sign(\pi_n) \, ,
\end{align}
where by construction $\pi(j) \in \{1,...,n\}$, $\forall \, j \neq i$. Now, formula \eqref{expandedantsymdelta} can be derived easy.

Therefore, plugging this result in \eqref{dualmodulus}, we get
\be
|G_{(10-n)}|^2_{\underline{pq}} = - |G_{(n)}|^2 \, \eta_{\underline{pq}} + |G_{(n)}|^2_{\underline{pq}} \, , \quad \forall \, n = 1, ... , 9 \, .
\ee
Crucially, it implies that, in the IIA case,
\be
\frac12 \sum_{n=2,4,6,8} |G_{(n)}|^2_{\underline{pq}} = \sum_{n=2,4} \[|G_{(n)}|^2_{\underline{pq}} - \frac12 |G_{(n)}|^2_{\underline{pq}} \] \, ,
\ee
or, in the IIB one,
\be
\frac12 \sum_{n=1,3,5,7,9} |G_{(n)}|^2_{\underline{pq}} = \sum_{n=1,3} \[|G_{(n)}|^2_{\underline{pq}} - \frac12 |G_{(n)}|^2_{\underline{pq}} \] + \frac12 \, |G_{(5)}|^2_{\underline{pq}} \, .
\ee
This proves the equivalence of the symbol $\{G_{n}\}^2_{\underline{pq}}$ in both the formalisms, democratic or not (see appendix \ref{app:sugraeom}). To sum up, we can express the final result as \eqref{finalsummuf}.

\subsection{Example: the WYM case}
\label{app:wym}

To fix the ideas, let us discuss a specific example. For the sake of simplicity, let us focus on the Witten background described in section \ref{sec:wym}. It displays just one non-self-dual RR field-strength (along with its electromagnetic dual) and no Kalb-Ramond field. If we consider a background string configuration which extends just along the first four Minkowskian directions, the pull-back of the spin-connection vanishes too. Therefore, no rotation $\mathcal R$ has to be applied to the spinors in this case.

The duality constraint links the RR field strengths as (see Corollary \ref{Gamma11duality})
\be
-\frac{1}{6!} \, \cancel{F}_{(6)} = \frac{1}{4!} \, \cancel{F}_{(4)} \, \Gamma^{(11)} \, .
\ee
Then, it follows that
\be
\begin{cases}
\displaystyle \widetilde \Gamma_1 = + \frac18 \, e^\phi \, \frac{1}{4!} \, \cancel{F}_{(4)} \, \Pi_1\, , \\[2ex]
\displaystyle \widetilde \Gamma_2 = - \frac18 \, e^\phi \, \frac{1}{4!} \, \cancel{F}_{(4)} \, \Pi_2 \, , 
\end{cases}
\ee
where
\be
\Pi_I = \frac12 \(\mathbb I_{32} + (-1)^{I+1} \Gamma^{(11)}\)\,,
\ee
are the projectors over the subspaces with definite chirality.

At low energies, the four-sphere looks flat for a closed string located at the tip of the cigar. Therefore the background configuration for the RR field strength $F_{(4)}$ is 
\be
\frac{1}{4!} \, \cancel F_{(4)} = \frac{3}{u_0} \, \frac{1}{4!} \, \varepsilon_{\underline{abcd}} \, \Gamma^{\underline{abcd}}\, , \quad a,b,c,d=6,7,8,9 \, .
\ee
Moreover, for the background string configuration at hand, it holds that
\be
\Gamma_\alpha \, \cancel F_{(4)} = \cancel F_{(4)} \, \Gamma_\alpha \, , \quad \Gamma_\alpha \, \Gamma^{(11)} = - \Gamma^{(11)} \, \Gamma_\alpha \, .
\ee
As a consequence
\be
\Gamma_\alpha \, \widetilde \Gamma_2 = - \widetilde \Gamma_1 \, \Gamma_\alpha \, , \quad \Gamma_\alpha \, \widetilde \Gamma_1 = - \widetilde \Gamma_2 \, \Gamma_\alpha \, .
\ee

All in all, we get
\be
\begin{cases}
\displaystyle A_1 = k^2 \, P_- \, \Pi_1 \, , \\
\displaystyle A_2 = k^2 \,  P_+ \, \Pi_2 \, , 
\end{cases}
\ee
where
\be \label{mufwym}
k^2 \equiv \frac{9}{32} \frac{1}{m_0 R^3} \, \eta^{\alpha\beta} g_{\alpha\beta} = \frac{9}{32} \(\rho^2 + M^2\) \, ,
\ee
and $P_\pm$ are the operators (\emph{not} projectors) defined as
\be
P_\pm = \frac12 \(\mathbb{I}_{32} \pm \frac{1}{\eta^{\lambda\delta} g_{\lambda\delta}} \, \varepsilon^{\alpha\beta} \Gamma_{\alpha\beta}\) \, .
\ee

To be more concrete, let us choose as background configuration
\be
X^0 = \rho \sigma \, , \quad X^1 = M \tau \, .
\ee
Moreover, let us deal with a specific representation of the Gamma matrices, that is
\be \label{rep}
\Gamma_{\underline 0} = \sigma_1 \otimes \mathbb{I}_{16} \, , \quad \Gamma_{\underline 1} = i \, \sigma_2 \otimes \mathbb{I}_{16} \, , \quad \Gamma_{\underline A} = \sigma_3 \otimes \gamma_{\underline A} \,, \quad {\underline A} = 2, ... , 9 \, ,
\ee
where $\sigma_1 = \begin{pmatrix} 0 & 1 \\ 1 & 0 \end{pmatrix}$, $\sigma_2 = \begin{pmatrix} 0 & -i \\ i & 0 \end{pmatrix}$ and $\sigma_3 = \begin{pmatrix} 1 & 0 \\ 0 & -1 \end{pmatrix}$ are the Pauli matrices and $\gamma_{\underline A}$ are Euclidean Dirac matrices in eight dimensions, such that $\{\gamma_{\underline A}, \gamma_{\underline B}\}=2 \, \mathbb{I}_{16} \, \delta_{\underline A \underline B}$, $\underline A$, $\underline B = 2, ..., 9$. They can be expressed as
\be \label{gammaA}
\gamma_{\underline A} = \begin{pmatrix} 0 & \Lambda_{\underline A} \\ \Lambda_{\underline A}^T & 0 \end{pmatrix} \, , \quad \Lambda_{\underline A} \Lambda_{\underline B}^T + \Lambda_{\underline B} \Lambda_{\underline A}^T = 2 \, \mathbb{I}_{8} \, \delta_{\underline A \underline B} \, , \quad \underline A, \underline B = 2, ..., 9 \, ,
\ee
where \cite{Green:2012oqa}
\begin{align} \label{Lambda}
&\Lambda_2 = \sigma_1 \otimes i \sigma_2 \otimes \mathbb{I}_2\,, & &\Lambda_6 = i \sigma_2 \otimes i \sigma_2 \otimes i \sigma_2\,, \nb \\
&\Lambda_3 = \sigma_3 \otimes i \sigma_2 \otimes \mathbb{I}_2\,, & &\Lambda_7 = \mathbb{I}_2 \otimes \sigma_1 \otimes i \sigma_2\,, \nb \\
&\Lambda_4 = i \sigma_2 \otimes \mathbb{I}_2 \otimes \sigma_3\,, & &\Lambda_8 = \mathbb{I}_2 \otimes \sigma_3 \otimes i \sigma_2\,, \nb \\
&\Lambda_5 = \mathbb{I}_2 \otimes \mathbb{I}_2 \otimes \mathbb{I}_2\,, & &\Lambda_8 = i \sigma_2 \otimes \mathbb{I}_2 \otimes \sigma_1 \, .
\end{align}

With this choice we have
\be \label{gamma0gamma1}
\frac{1}{2\sqrt{-g}} \, \varepsilon^{\alpha\beta} \Gamma_{\alpha\beta} = \Gamma^{\underline 0} \Gamma^{\underline 1} = \begin{pmatrix} +\mathbb{I}_{16} & 0 \\ 0 & -\mathbb{I}_{16} \end{pmatrix} \, , \quad \Gamma^{(11)} = \begin{pmatrix} -\gamma^{(11)} & 0 \\ 0 & +\gamma^{(11)} \end{pmatrix} \, ,
\ee
where
\be \label{8dgamma11}
\gamma^{(11)} = \gamma_{\underline 2} \cdot ... \cdot \gamma_{\underline 9} = \begin{pmatrix} + \, \mathbb{I}_{8} & 0 \\ 0 & - \, \mathbb{I}_{8} \end{pmatrix}\,,
\ee
is the eight-dimensional chirality operator.

Then, $A_1$ takes the form
\be
A_1 = k^2 \begin{pmatrix} 0 & 0 & 0 & 0\\ 0 & \frac12 \(1-\frac{2 M \rho}{\rho^2 + M^2}\) \mathbb I_8 & 0 & 0\\ 0 & 0 & \frac12 \(1+\frac{2 M \rho}{\rho^2 + M^2}\) \mathbb{I}_{8} & 0\\ 0 & 0 & 0 & 0 \end{pmatrix} \, ,
\ee
while $A_2$ is
\be
A_2 = k^2 \begin{pmatrix} \frac12 \(1+\frac{2 M \rho}{\rho^2 + M^2}\) \mathbb{I}_{8} & 0 & 0 & 0\\ 0 & 0  & 0 & 0\\ 0 & 0 & 0 & 0\\ 0 & 0 & 0 & \frac12 \(1-\frac{2 M \rho}{\rho^2 + M^2}\) \mathbb I_8 \end{pmatrix} \, .
\ee

Considering the coupling between $\psi_1$ and $\psi_2$, the trace of the fermionic mass squares \eqref{reducedtrace} is found to be   
\be
\sum_f \mu_f^2 = \Tr A_1 = 8 k^2 =  \frac94 \(\rho^2+M^2\) \, .
\ee
It matches with the sum of the bosonic mass squares, which are \cite{Bigazzi:2023oqm}
\be
\sum_b \mu_b^2 = \frac94 \(\rho^2+M^2\) \, .
\ee


Notice that these results are ready to be evaluated both in the semiclassical and in the Hagedorn regimes. Indeed, in the semiclassical regime, we have $\rho=M$, from which 
\be \label{semiclasssummuf2}
\sum_b \mu_{b,\text{class}}^2 = \sum_f \mu_{f,\text{class}}^2 = \frac92 \, \rho^2 \, .
\ee
On the other hand, in the Hagedorn limit $M\to0$, we get
\be \label{hagsummub2}
\sum_b \mu_{b,\text{Hag}}^2 = \sum_f \mu_{f,\text{Hag}}^2 =\frac94 \, \rho_H^2 \, .
\ee

\providecommand{\href}[2]{#2}\begingroup\raggedright
\endgroup

\end{document}